\documentclass{article}

\usepackage{arxiv}

\usepackage[utf8]{inputenc}
\usepackage[T1]{fontenc}
\usepackage{hyperref}
\usepackage{url}
\usepackage{booktabs}
\usepackage{amsfonts}
\usepackage{amsmath}
\usepackage{amssymb}
\usepackage{nicefrac}
\usepackage{microtype}
\usepackage{graphicx}
\usepackage{subcaption}
\usepackage{xcolor}
\usepackage{algorithm}
\usepackage{algorithmic}
\usepackage{multirow}
\usepackage{makecell}
\usepackage{stmaryrd}
\usepackage{amsthm}
\usepackage{enumitem}
\usepackage{float}
\usepackage[section]{placeins}  % \FloatBarrier at every \section: floats stay in their section
\usepackage{tikz}
\usetikzlibrary{arrows.meta,positioning,shapes.geometric,fit,decorations.pathreplacing}
\usepackage{pgfplots}
\usepgfplotslibrary{fillbetween,groupplots}
\pgfplotsset{compat=1.12}
\usepackage[round,authoryear]{natbib}
\usepackage{listings}


\newtheorem{theorem}{Theorem}
\newtheorem{proposition}[theorem]{Proposition}

\newtheorem{definition}[theorem]{Definition}

\newcommand{\sem}[1]{\llbracket #1 \rrbracket}

\renewcommand{\shorttitle}{Compile Once, Differentiate Everywhere}

\title{\textbf{Compile Once, Differentiate Everywhere:}\\
\textbf{A Differentiable Meta-Circular Interpreter}}

\author{
  \textbf{Lucas Sheneman} \\
  Institute for Interdisciplinary Data Sciences \\
  University of Idaho \\
  \texttt{sheneman@uidaho.edu}
}

\date{}

\begin{document}

% arxiv.sty sets \flushbottom, which stretches vertical glue (e.g. float separation)
% on underfull pages and produces large gaps after figures. Use natural bottom spacing.
\raggedbottom

\maketitle

\vspace{-0.3in}%
\begin{abstract}
\setlength{\parskip}{2pt}%
\vspace{0.28\baselineskip}%
The boundary between program execution and gradient-based optimization has long limited the use of code itself as a learnable scientific model. We present a compiler that translates a substantial, self-hosting subset of Scheme into differentiable computation graphs for standard autograd backends. Because the subset is expressive enough to compile its own evaluator, this yields \emph{differentiable meta-circular interpretation} (DMCI): a compiled Scheme interpreter executes programs supplied as data, while reverse-mode autodiff propagates gradients through the interpreter to continuous constants embedded in those programs. The interpreter is compiled once, so new programs inherit differentiability without recompilation or custom gradient machinery, while retaining ordinary language features such as closures, recursion, and data structures. We prove that gradients through the compiled interpreter are correct almost everywhere and show that they match direct compilation to numerical precision across 171 recursive and higher-order program-seed pairs. We then use DMCI for \emph{program-and-parameter co-search}, where a large language model proposes Scheme programs and exact gradients calibrate their continuous parameters through a single frozen interpreter. This enables a hybrid workflow that can be embedded inside OpenEvolve-style program search: the outer loop proposes discrete program structures, while DMCI supplies exact gradient-based calibration of each candidate’s continuous parameters. On battery capacity-fade data, the search recovers a knee-like degradation structure and improves held-out extrapolation over hand-crafted baselines on the harder early-extrapolation split (matching them on the later split); on a high-dimensional El~Ni\~no inverse problem, DMCI optimizes an interpreted Kalman-filter likelihood where gradient-free search fails. These results extend symbolic regression and neurosymbolic search from closed-form expressions to executable, stateful programs, making model-generated code directly optimizable against data.
\end{abstract}

% ============================================================================
\section{Introduction}
\label{sec:intro}
% ============================================================================

Programs are more than equations: a program describes a \emph{process} that may recurse, branch, allocate data structures, and apply functions received as arguments. The Neural Compiler~\citep{sheneman2026neural} compiled first-order arithmetic expressions into differentiable PyTorch modules with exact gradients, but its source language was deliberately restricted, with no closures, recursion, data structures, or higher-order functions. This paper removes that restriction, extending the compiler to a substantial, self-hosting Scheme subset~\citep{r7rs} with first-class functions, heap-allocated pairs and lists, general recursion, and proper tail calls (Figure~\ref{fig:architecture}). The resulting system compiles its own evaluator, a Scheme interpreter written in Scheme and realized as a differentiable computation graph. When it processes a program containing learnable constants, gradients flow from the output through the interpreter's dispatch, heap operations, and arithmetic back to those constants.

\subsection{Differentiable Meta-Circular Interpretation}

We call this capability \emph{differentiable meta-circular interpretation} (DMCI): the classical Lisp idea of a language implementing its own evaluator~\citep{mccarthy1960recursive,abelson1996structure}, realized in a differentiable setting. The key architectural insight is that \emph{the model is not differentiable; the interpreter is}. A scientific model expressed as a Scheme program is not itself a differentiable computation graph; it is input data to the compiled interpreter, which \emph{is} a differentiable computation graph. Because the interpreter is compiled once and verified once, any program it evaluates automatically inherits differentiability. This inverts the usual approach to differentiable scientific modeling, where each model must be individually reimplemented in a differentiable framework~\citep{aboelyazeed2023,jiang2025canveg}.

DMCI is distinct from prior differentiable interpreters~\citep{gaunt2017terpret,bosnjak2017forth,feser2016differentiable} in three ways: the interpreter is a standard Scheme program \textbf{compiled} automatically rather than a hand-engineered neural architecture; closures, recursion, and data structures are \textbf{inherited} from the source language; and the result is a \textbf{standard autograd} module composable with arbitrary training code.

\subsection{Contributions}

\begin{enumerate}[leftmargin=2em,topsep=2pt,itemsep=1pt]
    \item A compiler for a substantial, self-hosting Scheme subset (closures, \texttt{letrec}, heap-allocated pairs/lists, first-class functions, proper tail calls) targeting differentiable autograd backends (Appendix~\ref{app:compiler}).
    \item \emph{Differentiable meta-circular interpretation}: a compiled meta-circular evaluator that makes runtime-supplied programs differentiable, matching direct compilation to zero relative gradient error and within $7\times10^{-7}$ loss across 171 (program, seed) pairs.
    \item Theory establishing gradient correctness almost everywhere on trace-constant regions of parameter space (Section~\ref{sec:theory}).
    \item Empirical validation of gradient fidelity (Section~\ref{sec:exp_a}), end-to-end optimization of LLM-generated differentiable programs (Section~\ref{sec:exp_b}), recursive scientific models (Section~\ref{sec:exp_c}), and batched scalability yielding $875\times$ throughput and a $3{,}848\times$ population-optimization speedup (Section~\ref{sec:exp_h}); seven further experiments appear in the appendices (Appendices~\ref{app:exp_d}--\ref{app:exp_g},~\ref{app:exp_i}--\ref{app:exp_fluzoo_ext}).
    \item A real-data case study showing that an interpreted Kalman-filter likelihood for a linear inverse model of ENSO can be optimized by exact gradients through the interpreter, recovering a 58--611-parameter fit where gradient-free search fails (Section~\ref{sec:exp_c}, Appendix~\ref{app:exp_lim_enso_ext}).
    \item A second real-data flagship demonstrating LLM-and-DMCI \emph{co-search}: a language model proposes scientific model structures as Scheme programs that one frozen interpreter calibrates by exact gradients, closing discrete structure search and continuous parameter fitting into one loop. On mechanism-labeled battery capacity-fade data the co-search \emph{recovers the correct degradation structure} from a smooth seed, and on real Severson lithium-ion cells~\citep{severson2019} it forecasts held-out fade; a parallel influenza study (FluZoo) surfaces the \emph{fitness-fidelity pitfall} of such co-search and the rigorous-rescoring protocol that detects it (Section~\ref{sec:exp_battery}, Appendices~\ref{app:exp_battery_ext},~\ref{app:exp_fluzoo_ext}).
\end{enumerate}

\subsection{Relationship to Prior Work}
\label{sec:relwork_intro}

This paper directly addresses Limitation~1 of the Neural Compiler~\citep{sheneman2026neural}: the restriction to first-order expressions. Section~\ref{sec:theory} formalizes the extended guarantees: compilation correctness (Theorem~\ref{thm:correctness}), gradient correctness almost everywhere (Theorem~\ref{thm:gradient}), and preservation under composition (Theorem~\ref{thm:composition}). Gradients are correct throughout the regions of parameter space where the program's discrete execution trace remains unchanged. At branch boundaries, where execution paths switch, the source program is itself non-differentiable; the compiled program preserves this behavior rather than introducing new discontinuities.

DMCI is best understood by where it locates differentiation (Section~\ref{sec:related} gives the full comparison). Systems such as Zygote~\citep{innes2019differentiable}, JAX~\citep{jax2018github}, and Pearlmutter and Siskind's VLAD/Stalin-$\nabla$~\citep{pearlmutter2008lambda} perform \emph{language-level AD}, where the differentiated object is host-language code and AD is a transformation applied to that code. A second line moves the locus to the evaluator itself: closest to ours, \citet{maleki2021adding} make Scheme programs differentiable by \emph{hand}-differentiating a meta-circular interpreter with dual numbers for nested forward-mode AD. A third line pushes gradients \emph{through} execution but for a different aim: differentiable interpreters such as TerpreT~\citep{gaunt2017terpret}, $\partial$4~\citep{bosnjak2017forth}, and \citet{feser2016differentiable} are hand-engineered neural architectures for restricted languages that target program \emph{synthesis}, where gradient descent over discrete structure is notoriously hard.

DMCI unifies the interpreter-as-locus idea with a compiled, self-hosting, reverse-mode framing none of these lines reaches. The differentiated object is a standard Scheme evaluator compiled automatically (not host code, not a hand-built neural architecture); the program it runs is supplied as first-class data carrying inherited language semantics (closures, recursion, data structures, self-hosting); and lowering to standard autograd backends inherits reverse-mode AD, GPU acceleration, and batching for free. Its exact gradients optimize continuous parameters within a given structure, sidestepping the pathological landscapes that limit gradient-based synthesis; composed with an outer language-model search over program structure they yield the program-and-parameter co-search of Section~\ref{sec:exp_battery}, while a Gumbel-Softmax relaxation of the interpreter's dispatch (Experiment~E, Appendix~\ref{app:exp_e}) makes the discrete choice itself differentiable.

\begin{figure*}[t]
\centering
\begin{tikzpicture}[
    box/.style={draw, rounded corners=3pt, minimum height=0.7cm, font=\small, align=center, fill=#1},
    box/.default=white,
    arr/.style={-{Stealth[length=5pt]}, thick},
    label/.style={font=\scriptsize, text=black!70},
]

% ===== (a) Direct compilation =====
\node[box=white, minimum width=1.8cm] (src_a) at (0, 1.6) {Scheme\\[-1pt]{\scriptsize source}};
\node[box=blue!8, minimum width=1.7cm] (parse_a) at (2.2, 1.6) {Parse};
\node[box=blue!8, minimum width=1.7cm] (anf_a) at (4.1, 1.6) {ANF\\[-1pt]{\scriptsize transform}};
\node[box=blue!8, minimum width=1.7cm] (graph_a) at (6.0, 1.6) {Compute\\[-1pt]{\scriptsize Graph}};
\node[box=teal!15, minimum width=1.8cm] (mod_a) at (8.1, 1.6) {PyTorch\\[-1pt]{\scriptsize module}};
\node[box=white, minimum width=1.5cm] (fwd_a) at (10.0, 1.6) {Forward\\[-1pt]{\scriptsize $f_\theta(x)$}};

\node[draw, dashed, rounded corners=4pt, gray, fit=(parse_a)(anf_a)(graph_a),
      inner xsep=5pt, inner ysep=5pt,
      label={[font=\scriptsize\bfseries, gray]above:Compiler}] {};

\node[font=\scriptsize\bfseries, text=teal!50!black, anchor=south west, align=left]
    at ([yshift=1pt]src_a.north west) {(a) Direct\\[-1pt]compilation};

\draw[arr] (src_a) -- (parse_a);
\draw[arr] (parse_a) -- (anf_a);
\draw[arr] (anf_a) -- (graph_a);
\draw[arr] (graph_a) -- (mod_a);
\draw[arr] (mod_a) -- (fwd_a);

% ===== (b) DMCI =====
\node[box=white, minimum width=1.8cm] (src_b) at (0, -0.8) {Interpreter\\[-1pt]{\scriptsize source}};
\node[box=blue!8, minimum width=1.7cm] (parse_b) at (2.2, -0.8) {Parse};
\node[box=blue!8, minimum width=1.7cm] (anf_b) at (4.1, -0.8) {ANF\\[-1pt]{\scriptsize transform}};
\node[box=blue!8, minimum width=1.7cm] (graph_b) at (6.0, -0.8) {Compute\\[-1pt]{\scriptsize Graph}};
\node[box=orange!15, minimum width=1.8cm] (mod_b) at (8.1, -0.8) {Compiled\\[-1pt]{\scriptsize interpreter}};
\node[box=white, minimum width=1.5cm] (fwd_b) at (10.0, -0.8) {Forward\\[-1pt]{\scriptsize $\widehat{f}_\theta(x)$}};

\node[draw, dashed, rounded corners=4pt, gray, fit=(parse_b)(anf_b)(graph_b),
      inner xsep=5pt, inner ysep=5pt,
      label={[font=\scriptsize\bfseries, gray]above:Compiler}] {};

\node[font=\scriptsize\bfseries, text=orange!50!black, anchor=south west]
    at ([yshift=1pt]src_b.north west) {(b) DMCI};

\draw[arr] (src_b) -- (parse_b);
\draw[arr] (parse_b) -- (anf_b);
\draw[arr] (anf_b) -- (graph_b);
\draw[arr] (graph_b) -- (mod_b);
\draw[arr] (mod_b) -- (fwd_b);

% Program P(θ) feeding into compiled evaluator
\node[box=white, minimum width=1.8cm] (prog) at (8.1, -2.3) {Program $P(\theta)$\\[-1pt]{\scriptsize (symbolic program)}};
\draw[arr] (prog) -- (mod_b) node[midway, right, label, xshift=2pt] {data};

% ===== (a) Loss and subtle gradient =====
\node[box=red!10, minimum width=1.2cm] (loss_a) at (11.8, 1.6) {$\mathcal{L}$};
\draw[arr] (fwd_a) -- (loss_a);
\draw[-{Stealth[length=5pt]}, red!70!black, thick]
    (loss_a.north) -- ++(0,0.4) -| (mod_a.north);

% ===== (b) Loss and gradient flow (dominant) =====
\node[box=red!10, minimum width=1.2cm] (loss_b) at (11.8, -0.8) {$\mathcal{L}$};
\draw[arr] (fwd_b) -- (loss_b);

\draw[-{Stealth[length=5pt]}, red!70!black, thick]
    (loss_b.south) -- ++(0,-1.8) -| (prog.south);

\end{tikzpicture}
\caption{Two execution modes of the compiler, sharing one pipeline (Parse $\to$ ANF $\to$ Compute Graph) but applying autograd at different levels. \textbf{(a)~Direct compilation}: each Scheme program is compiled to its own differentiable PyTorch module, as in the original Neural Compiler~\citep{sheneman2026neural}, and autograd differentiates through that graph (red). \textbf{(b)~DMCI}: the self-hosted interpreter is compiled \emph{once}; programs are then supplied as symbolic data and executed at runtime, and gradients flow from $\mathcal{L}$ back through the interpreter's dispatch, environment, and heap operations to the parameters $\theta$ in $P(\theta)$ (red). (a)~differentiates a compiled \emph{program}; (b)~differentiates a compiled \emph{interpreter executing} a program, so any new program inherits differentiability without recompilation.}
\label{fig:architecture}
\end{figure*}

\subsection{Why Scheme?}
\label{sec:why_scheme}

Scheme was chosen because it is one of the smallest practical languages that simultaneously supports self-hosting interpretation, higher-order functional programming, and programs-as-data. These properties make it an unusually clean setting in which to study differentiable meta-circular interpretation.

First, Scheme has a long tradition of compact meta-circular evaluators~\citep{mccarthy1960recursive,abelson1996structure}, allowing the interpreter itself to be expressed in the same language it evaluates. This self-hosting property is central to DMCI: the differentiable object is not an individual scientific model, but a compiled evaluator capable of executing arbitrary runtime-supplied programs.

Second, Scheme provides first-class functions, lexical closures, recursion, and dynamic data structures within a comparatively small semantic core. As a result, the compiler can support language features commonly required by scientific models while remaining tractable to formal analysis. The gradient-correctness results of Section~\ref{sec:theory} are therefore established over a language expressive enough to implement its own evaluator rather than over a minimal arithmetic calculus.

Third, Scheme's homoiconic representation allows executable programs to be represented directly as symbolic data structures. This aligns naturally with DMCI's architecture, in which object programs are supplied to the compiled evaluator as data while continuous parameters are bound separately as differentiable tensors.

Importantly, DMCI is not fundamentally tied to Scheme. The core idea is interpreter-level differentiation: compile a program evaluator once and allow every subsequently supplied program to inherit differentiability. Any language with a suitable self-hosting interpreter and a compilation path into the \texttt{ComputeGraph} intermediate representation could, in principle, support the same approach. Scheme serves as a particularly compact and well-understood vehicle for demonstrating the concept, not as a limitation of the underlying method.

% ============================================================================
\section{Method}
\label{sec:method}
% ============================================================================

\subsection{Compiler Pipeline}

Scheme source is parsed into an AST, transformed to A-normal form (ANF), then compiled to a graph-based intermediate representation (\texttt{ComputeGraph}) of typed nodes (constants, inputs, primitive operations, conditionals, loops, function calls) whose edges carry tagged-value tensors (the 14-dimensional representation of the next subsection). The same graph lowers to four backends that share these semantics: PyTorch and JAX (both differentiable) and NumPy and CuPy (inference-only). Their evaluators walk the graph and execute recursive and iterative control flow dynamically, producing autograd-tracked tensors on the differentiable backends (Appendix~\ref{app:backends}). One pipeline compiles both user programs (direct compilation) and the self-hosted evaluator (the DMCI path), the modes (a) and (b) of Figure~\ref{fig:architecture}.

\paragraph{Backend scope and automatic differentiation.} The shared \texttt{ComputeGraph} lowers to a scalar, straight-line direct-compile path on all four backends, but the differentiable meta-circular \emph{interpreter} (with its dictionary heap, tagged values, and variable-length trampolined \texttt{recur} loop) runs differentiably only on PyTorch. The reason is that DMCI requires gradients to flow through \emph{data-dependent} control flow: which eval-apply clause fires and when the loop terminates are decided by reading structural integers off tagged values via \texttt{.item()}, a deliberate non-differentiable boundary separating discrete, program-determined control from the differentiable tensor computations recorded by autograd. PyTorch's define-by-run autograd records the operations that \emph{actually} execute, so it differentiates the realized trajectory, trampolined loop included, with no restructuring. JAX traces to a functional \texttt{jaxpr} that forbids data-dependent Python branching and whose \texttt{lax.while\_loop} is not reverse-mode differentiable, so a JAX interpreter would require re-architecting the trampoline as a fixed-length \texttt{lax.scan} with masking, future work whose payoff is \texttt{jit} fusion and \texttt{vmap} over the population path (Section~\ref{sec:exp_h}, Appendix~\ref{app:backends}). NumPy is retained as a forward-only reference oracle against which every DMCI result is validated.

\subsection{Tagged Values and Differentiable Heap}

All runtime values are 14-dimensional vectors: a 10-dimensional one-hot type tag (nil, boolean, integer, float, character, symbol, pair, string, closure, vector) concatenated with a 4-dimensional payload. Gradients flow through the payload slots while type tags act as discrete routing signals for dispatch; construction is linear, and for constant learning the program structure (not the learnable constants) determines control flow. Because control flow is program-determined and values are fixed-width tensors, a leading batch dimension broadcasts through the entire interpreter graph unchanged, enabling the batched evaluation of Section~\ref{sec:exp_h}. Pairs and closures are heap-allocated in a dictionary mapping integer addresses to tagged-value tensors: \texttt{cons} allocates fresh keys storing tensors directly and \texttt{car}/\texttt{cdr} return the original reference. Because the language is pure, heap slots are write-once, so PyTorch's autograd graph stays valid through arbitrary allocation sequences, unlike a pre-allocated buffer, whose in-place mutation breaks autograd after a few writes (Appendix~\ref{app:exp_a_ext}).

\subsection{The Differentiable Meta-Circular Evaluator}

The compiler supports closures via heap-allocated environments, general recursion via lazy branch evaluation (a recursion-bearing conditional expands only the branch its arguments select, so \texttt{(if base?\ v (f $\dots$))} unfolds exactly as deep as the recursion runs, rather than diverging), and proper tail calls via a four-level strategy culminating in a runtime trampoline (Appendix~\ref{app:compiler}). The meta-circular evaluator (\texttt{bootstrap/compiler.scm}, 288 lines; Appendix~\ref{app:source_eval}) uses environment lookup via association lists, recursive subexpression evaluation, heap-allocated closures, and symbol-based dispatch. Every construct it uses lies within the compilable subset $\mathcal{L}_{\textrm{DMCI}}$, so the compiler accepts its own implementation: the self-hosting property that closes the DMCI loop.

Compiled, the evaluator is a \emph{fixed, reusable} differentiable module: given a program $P(\theta)$ supplied as data, it computes $\text{eval}(P(\theta), x)$ and reverse-mode autograd computes $\nabla_\theta\,\ell(\text{eval}(P(\theta), x))$ (Theorem~\ref{thm:gradient}, Figure~\ref{fig:architecture}). ``Supplied as data'' means the program is \emph{quoted} into a fixed nested-list datum, while its learnable constants enter \emph{separately} as parameter tensors bound by name in the evaluator's environment (each \texttt{(cons 'name $\theta_i$)}). Because the quoted program structure fixes the interpreter's dispatch and branch trace, reverse-mode gradients propagate through the realized program execution but are taken only with respect to the bound numeric parameter tensors, not the symbolic program structure itself. These gradients traverse tagged-value wrapping, heap storage, variable lookup, and arithmetic dispatch, each a differentiable PyTorch operation. Any subsequently supplied program reuses the same compiled module without recompilation (Appendix~\ref{app:exp_j}).

% ============================================================================
\section{Theoretical Guarantees}
\label{sec:theory}
% ============================================================================

The contribution is not a new automatic-differentiation theorem but a proof that DMCI's specific mechanisms (tagged-value wrap/unwrap, a detached-address heap, environment lookup, closure application, dispatch, and recursion under a fixed trace) preserve the hypotheses for almost-everywhere correctness; full proofs are in Appendix~\ref{app:proofs}. We write $\mathcal{L}_{\textrm{DMCI}}$ for the compiled Scheme subset (full grammar in Appendix~\ref{app:proofs}): variables, constants (including learnable parameters), $\lambda$, application, \texttt{if}, \texttt{letrec}, \texttt{cons}/\texttt{car}/\texttt{cdr}, and arithmetic, comparison, and transcendental primitives. Throughout, $\rho$ is the runtime environment (a variable-to-heap-address map), $H$ the differentiable heap (address-to-tagged-value), and $v_{\textrm{in}}$ the distinguished evaluator input supplied alongside the quoted program (formal domains in Appendix~\ref{app:proofs}, Definition~\ref{def:domains}).

\begin{definition}[Trace-constant region]
For a program $e(\theta)$ with parameters $\theta \in \mathbb{R}^n$, the \emph{trace-constant region} $\Theta_{\textrm{tc}} \subseteq \mathbb{R}^n$ is the set of parameter values where the discrete execution trace (branch decisions and tag dispatches) is locally constant. Because every primitive in $\mathcal{L}_{\textrm{DMCI}}$ is piecewise-analytic and control flow is decided by such (PAP) predicates, the denotation of $\mathcal{L}_{\textrm{DMCI}}$ is an $\omega$-PAP map in the sense of \citet{huot2023omegapap}; consequently each branch boundary is a measure-zero analytic hypersurface, so $\Theta_{\textrm{tc}}$ is open with full Lebesgue measure (Appendix~\ref{app:proofs}). In the meta-circular evaluator the tag dispatches are fixed by the quoted program's symbols and so are constant in $\theta$ \emph{everywhere}; only numeric branch conditions (e.g.\ \texttt{(< x $\alpha$)}) can depend on $\theta$, and these are exactly the measure-zero boundaries of $\Theta_{\textrm{tc}}$.
\end{definition}

\begin{theorem}[Compilation correctness]
\label{thm:correctness}
For every $e \in \mathcal{L}_{\textrm{DMCI}}$, environment $\rho$, heap $H$, and input $v_{\textrm{in}}$: if $\langle e, \rho, H \rangle \Downarrow \langle v, H' \rangle$ in the source semantics, then $\textrm{eval}_{\textrm{compiled}}(\sem{e}, \rho, H, v_{\textrm{in}})$ yields the same tagged value $v$, up to backend floating-point roundoff, whenever source and backend use the same IEEE-754 primitives and evaluation order (made precise in Appendix~\ref{app:proofs}; the two paths are empirically bit-identical, Appendix~\ref{app:exp_c_ext}, Fig.~\ref{fig:c08_loss}).
\end{theorem}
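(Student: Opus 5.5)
The argument is a structural induction on the derivation of $\langle e, \rho, H \rangle \Downarrow \langle v, H' \rangle$, not on the syntax of $e$, because recursion and \texttt{letrec} make derivations the only well-founded measure. First I would fix a representation relation $\sim$ connecting source values to 14-dimensional tagged tensors. A number $r$ is related to a \texttt{float} or \texttt{integer} one-hot tag with $r$ in payload slot~0. A pair at source address $a$ is related to a \texttt{pair} tag whose payload holds a heap key $\hat a$. A closure is related to a \texttt{closure} tag pointing at a heap cell that stores a code label and an environment address. I would lift $\sim$ pointwise to environments, $\rho \sim \hat\rho$, and to heaps, $H \sim_\beta \hat H$, through an address bijection $\beta$ that grows monotonically as allocations happen. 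The statement I would actually prove is stronger: if $\rho\sim\hat\rho$ and $H\sim_\beta\hat H$, then the compiled evaluation of $\sem{e}$ returns $(\hat v,\hat H')$ with $v\sim\hat v$ and $H'\sim_{\beta'}\hat H'$ for some $\beta'\supseteq\beta$. Theorem~\ref{thm:correctness} is then the special case in which one reads off $v$.

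\emph{Step 1.} I would establish that the ANF transformation preserves the source semantics. This is standard: let-naming of non-atomic subterms commutes with big-step evaluation, and I would cite or reprove it by a simple substitution lemma.

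\emph{Step 2.} I would take each ANF construct in turn and check its \texttt{ComputeGraph} node against its big-step rule.
\begin{itemize}
\item Constants and parameters are wrap nodes, and wrapping is linear, so exact.
\item Variables are lookups through $\hat\rho$ followed by reads from $\hat H$. Correctness follows from $\rho\sim\hat\rho$.
\item \texttt{cons} allocates a fresh key, so $\beta$ is extended by the new source address. Because heap slots are write-once, every previously related cell stays related.
\item \texttt{car} and \texttt{cdr} return the stored reference, so they are exact.
\item $\lambda$ allocates a closure record, and application is the induction hypothesis applied to the body under the extended environment.
\item Primitive arithmetic and transcendentals are the same IEEE-754 operations on payload slots. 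Under the stated hypothesis of identical primitives and evaluation order, these are therefore bit-identical rather than merely close.
\item \texttt{if} reads the tag or boolean payload through \texttt{.item()} and takes the same branch as the source. The lazy-branch expansion evaluates only that branch, which is exactly the premise of the source rule.
\end{itemize}

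\emph{Step 3.} I would handle \texttt{letrec} and tail calls. For \texttt{letrec}, I would show that the backpatched recursive environment is related to the source's recursive environment. For tail calls, each of the four strategies must produce the same sequence of body evaluations as nested application. The cleanest route is a simulation lemma for the runtime trampoline: each \texttt{recur} iteration corresponds to one application derivation step, so induction on the number of trampoline iterations lines up with the source derivation.

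I expect Step 3 to be the main obstacle. The trampoline and the lazy unfolding of recursion-bearing conditionals change the \emph{shape} of the computation relative to the big-step tree. Termination also needs an argument: if the source derivation is finite, the trampoline loop must halt after the corresponding finite number of steps. My first attempt would be to define an explicit small-step machine for the compiled graph and prove a lock-step, or stuttering, bisimulation with a CEK-style refinement of the source semantics, then conclude from the finiteness of the source derivation.

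Finally, the ``up to roundoff'' clause needs separate treatment. Wherever the backend fuses or reorders primitives, I would weaken payload equality to componentwise agreement within the accumulated rounding bound. In that case the argument must also check that no branch predicate is flipped, which I would handle by assuming the evaluation-order hypothesis that is made precise in the appendix.
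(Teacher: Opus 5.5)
Your case analysis matches the paper's proof and is sound except for one missing case, described in the second paragraph. The paper also takes constants and variables as immediate, \texttt{if} by lazy expansion of the taken branch, \texttt{cons}/\texttt{car}/\texttt{cdr} by the write-once dictionary heap, $\lambda$ and application by environment capture through heap addresses, and tail calls by the invariant that the $i$-th driver iteration matches the $i$-th source tail-call step, with termination inherited from the source.

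Where you differ, you are more careful than the paper. The paper says ``structural induction on $e$'', but the application and \texttt{letrec} cases evaluate closure bodies that are not subterms of the redex. Induction on the derivation, as you propose, is what actually licenses those cases. Your relation $H\sim_\beta\hat H$ also makes explicit a heap invariant the paper leaves implicit, and your ANF lemma covers a pipeline stage the paper's proof skips.

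Some of your plan is heavier than needed:
\begin{itemize}
\item The theorem is one-directional, so no small-step machine or bisimulation is required. The derivation of a call contains the sub-derivation of its tail call, whose value is passed up unchanged. So ``the driver started at $(f,\vec a)$ returns $v$'' follows by induction on that sub-derivation's height. Termination is automatic, and the same argument covers the compiled loops of TCO levels 1--3.
\item Drop the tolerance-based fallback in your last paragraph. Under the stated hypothesis, your primitive bullet already gives exact agreement, which is the paper's ``agree up to identical rounding.'' Without that hypothesis, a one-ulp discrepancy can flip a comparison such as \texttt{(< x alpha)}, and after that no rounding bound survives.
\item Tie the \texttt{letrec} knot by reserving the address and writing it once, not by ``backpatching'' a placeholder. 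Otherwise your write-once invariant fails at exactly that slot.
\end{itemize}

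The gap is a missing case. Your \texttt{if} bullet assumes every conditional is compiled lazily. The paper's big-step semantics (appendix definition) instead compiles combinational, non-recursive conditionals with total branches to an eager MUX. That MUX evaluates both branches and returns $s\,v_1+(1-s)\,v_2$ with selector $s\in\{0,1\}$.

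This breaks your premise that the compiled graph performs the source's sequence of IEEE-754 operations, so you must show the extra operations are exact. The expression $1\cdot v_1+0\cdot v_2$ equals $v_1$ bit-for-bit (up to the sign of a zero) \emph{only if the untaken payload $v_2$ is finite}. If the untaken branch yields $\pm\infty$ or NaN (a guarded division, say), then $0\cdot v_2$ is NaN and the conclusion fails. So ``total'' has to mean ``terminates with finite payloads.''

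The untaken branch may also allocate, leaving compiled heap cells with no source counterpart. Then $\beta$ must be an injection that tolerates unreachable cells, not a bijection. As a consequence, reading off $v$ gives literal equality for numeric and boolean results, but only equality modulo $\beta$ for pair- and closure-valued results.
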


\begin{theorem}[Gradient correctness almost everywhere]
\label{thm:gradient}
Let $e(\theta) \in \mathcal{L}_{\textrm{DMCI}}$ with learnable parameters $\theta$ and terminating evaluation, let $\pi$ project a numeric (float- or integer-tagged) result onto its payload value $p_1$ (stacking to $\mathbb{R}^m$ for a program returning $m$ such results), and let $L(\theta) = \ell(\pi(\textrm{eval}(e(\theta), v_{\textrm{in}})))$ for differentiable $\ell$. Then the gradient computed by reverse-mode autodiff through the compiled interpreter, $\nabla_\theta L_{\textrm{compiled}}(\theta)$, equals the true gradient $\nabla_\theta L(\theta)$ of the denoted real-valued function for almost every $\theta \in \mathbb{R}^n$. Intuitively, the two paths denote the same function (Theorem~\ref{thm:correctness}), and on $\Theta_{\textrm{tc}}$ the DMCI path's extra operations (tag dispatch, environment lookup, heap allocation) are gradient-transparent in $\theta$ (tags are program-determined constants, heap addresses are detached integers retrieved by reference (Proposition~\ref{lem:heap}), and tagged-value wrap/unwrap is linear on the payload), so reverse-mode autodiff computes the same payload Jacobian along both (proof in Appendix~\ref{app:proofs}).
\end{theorem}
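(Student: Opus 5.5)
The plan is to reduce the statement to a pathwise argument on the trace-constant region $\Theta_{\textrm{tc}}$, and then use that $\Theta_{\textrm{tc}}$ is open with full Lebesgue measure, as recorded in the Definition. Fix $\theta_0 \in \Theta_{\textrm{tc}}$. Evaluation terminates, so the realized execution is a finite sequence of interpreter steps. On an open neighbourhood $U \ni \theta_0$, every discrete decision is the same as at $\theta_0$: tag dispatches, branch outcomes, the number of trampoline iterations, and the heap addresses that get allocated. The \texttt{.item()} reads are exactly these decisions, and they do not change inside $U$. So on $U$ the compiled interpreter computes one fixed straight-line composition of tensor operations. Reverse-mode autograd, which records the operations actually executed, returns the exact Jacobian of that composition wherever every primitive in it is differentiable.

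The first key step is to show that this straight-line composition, restricted to the payload coordinates that depend on $\theta$, equals the composition obtained by direct compilation of $e(\theta)$. I would do this by induction on the derivation $\langle e,\rho,H\rangle \Downarrow \langle v,H'\rangle$, strengthening Theorem~\ref{thm:correctness} into a statement about dependence on $\theta$. The strengthened claim is that, for every value produced, the tag and all address fields are constant in $\theta$ on $U$, and the payload is the same function of $\theta$ as in the source semantics. The cases go as follows.
\begin{itemize}
\item \emph{Variables and \texttt{car}/\texttt{cdr}.} These return stored tensors by reference via detached integer addresses (Proposition~\ref{lem:heap}), so they act as the identity on payloads.
\item \emph{Parameter constants.} These are bound through \texttt{(cons 'name $\theta_i$)}, so their payload is exactly $\theta_i$.
\item \emph{Wrap/unwrap and \texttt{cons}.} These concatenate the payload with a constant one-hot tag, so they are linear with identity Jacobian on the payload.
\item \emph{Primitives.} Arithmetic, comparison and transcendental operations apply the same analytic function to the payloads.
\item \emph{\texttt{if}.} The selected branch is fixed on $U$, so only that branch contributes.
\item \emph{$\lambda$, application and \texttt{letrec}.} Closure records carry a constant tag, a detached environment address and a body pointer. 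Application therefore reduces to evaluating the body in an extended environment, and the induction hypothesis applies.
\end{itemize}
It follows that the compiled DMCI map and the denoted map $\theta \mapsto \pi(\textrm{eval}(e(\theta),v_{\textrm{in}}))$ agree as functions on all of $U$, not only at $\theta_0$. Because they agree on an open set, their derivatives agree at $\theta_0$. The chain rule with differentiable $\ell$ then gives $\nabla_\theta L_{\textrm{compiled}}(\theta_0)=\nabla_\theta L(\theta_0)$.

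The remaining issue is the points of $\Theta_{\textrm{tc}}$ where some primitive on the fixed path is not differentiable, for example $\sqrt{\cdot}$ at $0$ or division near a pole. Here I would use piecewise analyticity (the $\omega$-PAP property of \citet{huot2023omegapap}). The fixed path on $U$ is a finite composition of analytic pieces. It is therefore analytic off the zero sets of finitely many nonzero analytic functions. Each such zero set has measure zero, unless that function is identically zero on a component, in which case it contributes nothing. There are only countably many distinct traces, and a countable union of null sets is null. So the exceptional set is contained in $(\mathbb{R}^n\setminus\Theta_{\textrm{tc}})$ together with a null set, and the result holds almost everywhere.

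I expect the main obstacle to be making the statement "extra operations are gradient-transparent" rigorous. The argument needs to show that no $\theta$-dependence leaks into quantities that control execution. Concretely, heap addresses, environment keys and tags must be detached constants rather than tensors carrying a gradient path. Otherwise autograd could pick up spurious terms, for example through a one-hot tag computed from a payload, or fail to record real dependencies. My first approach would be a type-like invariant threaded through the induction: every value splits into a $\theta$-independent discrete part and a payload part, and the compiled graph has edges into the discrete part only from constants or from \texttt{.item()} reads. This invariant has to be checked against the interpreter's actual construction of tagged values.
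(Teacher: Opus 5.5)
Your proposal is correct and follows essentially the paper's proof: on $\Theta_{\textrm{tc}}$ the compiled interpreter is a fixed composition that, by compilation correctness together with program-determined tags and the detached-address, return-by-reference heap of Proposition~\ref{lem:heap}, computes the denoted payload function, so reverse-mode autodiff returns the true gradient off a null set of primitive kinks and off the null complement of $\Theta_{\textrm{tc}}$. You are more explicit than the paper in two places: local agreement on an open neighbourhood, and kink functions that vanish identically, where you need the primitive's autodiff derivative at the kink to be finite (true for $|\cdot|$, $\min$, $\max$ and the clamped $\sqrt{\cdot}$ and $\log$ the paper denotes, but not for the unclamped $\sqrt{\cdot}$ at $0$ you cite); on the other hand, you import the nullity of $\mathbb{R}^n\setminus\Theta_{\textrm{tc}}$ from the Definition, whereas the paper proves it inside this proof via non-trivial analytic branch-boundary equations and a countable union over parameter-dependent recursion depths.
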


\begin{theorem}[Composition preservation]
\label{thm:composition}
Let $e_1 : \mathbb{R}^n \to \mathbb{R}^m$ and $e_2 : \mathbb{R}^m \to \mathbb{R}^k$ denote functions that are gradient-correct almost everywhere, with $e_2$ applied to $e_1$'s output. Then the composite $e_2 \circ e_1$ is gradient-correct almost everywhere: its exceptional set $B(e_1) \cup e_1^{-1}(B(e_2))$ is null by $\omega$-PAP closure under composition~\citep{huot2023omegapap} (Appendix~\ref{app:proofs}).
\end{theorem}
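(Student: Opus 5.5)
The plan is to reduce Theorem~\ref{thm:composition} to two ingredients: the chain rule on a full-measure open set where both components are classically differentiable and correctly differentiated, and a measure-theoretic argument that the exceptional set $B(e_1)\cup e_1^{-1}(B(e_2))$ is null. First I fix the notion being composed. ``Gradient-correct almost everywhere'' for $e$ will mean that there is a null set $B(e)$ outside which (i) the denoted function is differentiable at the point, and (ii) reverse-mode autodiff through the compiled graph returns its Jacobian there. For compiled terms of $\mathcal{L}_{\textrm{DMCI}}$, Theorem~\ref{thm:gradient} supplies $B(e)=\mathbb{R}^n\setminus\Theta_{\textrm{tc}}(e)$, a closed null set, so on the complement each $e_i$ coincides locally with a single analytic trace function.

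Second, take $\theta\notin B(e_1)\cup e_1^{-1}(B(e_2))$. Then $e_1$ is differentiable at $\theta$ with autodiff Jacobian $J_1(\theta)$. Also $y=e_1(\theta)\notin B(e_2)$, so $e_2$ is differentiable at $y$ with autodiff Jacobian $J_2(y)$. The classical chain rule gives $D(e_2\circ e_1)(\theta)=J_2(y)J_1(\theta)$. Reverse-mode AD on the concatenated graph computes the vector--Jacobian product by backpropagating a cotangent through $e_2$'s realized trace at $y$ and then through $e_1$'s trace at $\theta$. The key observation is that in the DMCI path the output of $e_1$ is passed as a tagged value whose payload wrap/unwrap is linear and gradient-transparent, as in the proof of Theorem~\ref{thm:gradient}. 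Hence the backward pass through the composite is exactly the product of the two backward passes, and the computed gradient equals the true one. Openness of both trace-constant regions, together with continuity of $e_1$ on $\Theta_{\textrm{tc}}(e_1)$, shows the composite trace is locally constant near $\theta$. This guarantees the realized graph is the correct one on a neighbourhood.

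Third, and this is the main obstacle, I must show that $e_1^{-1}(B(e_2))$ is null. Preimages of null sets under smooth maps need not be null: if $e_1$ is constant on an open set with value in $B(e_2)$, the preimage has positive measure. I would not try to prove this with a generic measure-theory argument. Instead I would invoke $\omega$-PAP closure under composition \citep{huot2023omegapap}. The composite $e_2\circ e_1$ is itself $\omega$-PAP, so it admits a countable partition into analytic pieces whose boundaries are null. Its intensional derivative agrees with the true derivative off a null set. The AD-computed value coincides with this intensional derivative, since AD computes the intensional derivative of the chosen piece.

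So the strategy is to redefine the exceptional set of the composite as the non-interior points of the $\omega$-PAP partition of $e_2\circ e_1$. This set is null by construction and contains the problematic points of $B(e_1)\cup e_1^{-1}(B(e_2))$ that actually matter. On positive-measure preimages where $e_1$ is locally constant, both AD and the true derivative give zero via $J_1=0$, so correctness persists there anyway. I would present this case explicitly: local constancy of $e_1$ makes $J_2(y)$ irrelevant, since $J_1(\theta)=0$ kills it. The residual set is then contained in the PAP boundary of the composite and is therefore null. This case split is the delicate step, and if it resists a clean argument I would fall back on citing the closure theorem directly for the null-set claim.
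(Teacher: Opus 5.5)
Your step~2 matches the paper's argument on the complement of the exceptional set. Your use of continuity of $e_1$ only on its trace-constant region is also more accurate than the paper's appeal to global continuity, which fails across branches. Step~3 departs from the paper, and your diagnosis there is sharper than the paper's.

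The paper identifies $e_1^{-1}(B(e_2))$ with ``the boundary set of the refined PAP partition'' and declares it null by $\omega$-PAP closure. Your constant-map example refutes that identification. If $e_1\equiv c$ with $c\in B(e_2)$, then $e_1^{-1}(B(e_2))=\mathbb{R}^n$, yet the composite's trace is constant. So this set lies inside a single piece of the refined partition, not on its boundary. Closure makes only the composite's own partition boundary null, and that boundary is in general a strict subset of $B(e_1)\cup e_1^{-1}(B(e_2))$.

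The literal nullness clause of the statement therefore needs the paper's genericity exclusion of degenerate predicates (such as \texttt{(< x x)}) to be imposed on $e_2\circ e_1$ itself, because composition manufactures exactly such predicates in $\theta$. Under that extra hypothesis the preimage, off $B(e_1)$, lies in countably many zero sets of non-trivial analytic functions of $\theta$, and is null directly. Your alternative is to take the exceptional set to be where the composite's realized trace is not locally constant in $\theta$. It needs no genericity and is the right repair: it yields a.e.\ gradient correctness, though necessarily not nullness of the particular set named in the statement.

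The gap is your case split. Positive-measure pieces of $e_1^{-1}(B(e_2))$ do not require $e_1$ to be locally constant. Take $e_1(\theta_1,\theta_2)=(\theta_1,0)$ and let $e_2$ branch on the sign of $y_2$ (or contain $|y_2|$). Then $e_1^{-1}(B(e_2))=\mathbb{R}^2$ and $J_1\neq 0$ everywhere. The composite's trace is globally constant, so its PAP boundary is empty, and your ``residual set'' is all of $\mathbb{R}^2$, not contained in that boundary. AD is still correct there, but not for the $J_1=0$ reason.

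The fix is to drop the case split and run step~2 on the composite's trace directly. On any open $U$ on which $e_2\circ e_1$ realizes a fixed trace (so $U$ lies in the trace-constant region of $e_1$), we have $e_2\circ e_1=p\circ e_1$, where $p$ is the analytic function of that realized trace of $e_2$. AD differentiates exactly $p\circ e_1$, so the chain rule gives the true gradient whether or not $e_1(U)\subseteq B(e_2)$. The set of $\theta$ with no such neighbourhood is null by the usual analytic-zero-set argument: a predicate decided by an analytic function of $\theta$ that vanishes on an open set is constant there, so only genuinely non-trivial zero sets contribute.

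This requires one hypothesis that your $J_1=0$ step also uses silently. The derivative AD assigns to a primitive at a kink or domain boundary must be finite and equal to the derivative of some analytic piece that agrees with the primitive there. That holds for the clamped $\sqrt{\cdot}$ and $\log$ and for the usual $|\cdot|$, $\min$, $\max$ conventions. An unclamped $\sqrt{\cdot}$ at $0$, however, would produce $\infty\cdot 0$ on precisely your constant-$e_1$ set.
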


\noindent\textbf{Scoping.} On the open, full-measure set $\Theta_{\textrm{tc}}$ gradients are correct; at the measure-zero branch boundaries the compiled program inherits the source's non-differentiability rather than introducing new discontinuities. A parameter receives no gradient through a discrete comparison itself but can still receive gradient through differentiable uses on the selected branch. Only when a parameter feeds \emph{solely} a branch condition (as $\alpha$ does in \texttt{(if (< x $\alpha$) ($*$ $\beta$ x) ($*$ $\gamma$ x))}) is it unlearnable by descent, which the S3 experiment confirms (Appendix~\ref{app:exp_a_ext}).

% ============================================================================
\section{Experiments}
\label{sec:experiments}
% ============================================================================

We organize the experiments around five questions, each carrying part of the central claim: does interpretation preserve gradients (\S\ref{sec:exp_a})? can generated programs be optimized without per-program implementation (\S\ref{sec:exp_b})? does the method handle recursive and real-data scientific programs (\S\ref{sec:exp_c})? does LLM-and-DMCI co-search discover correct scientific structure on real data (\S\ref{sec:exp_battery})? and can batching amortize interpretation overhead (\S\ref{sec:exp_h})? Seven further experiments (supporting ablations, boundary-case tests, and extended capability demonstrations) are reported in the appendices: structural-search cost, Gumbel-Softmax operator recovery, LLM-in-the-loop discovery, runtime program composition, composite-ecosystem calibration, program-space calibration against compile-each workflows, and influenza program-and-parameter co-search (FluZoo).

% ----------------------------------------------------------------------------
\subsection{Does Interpretation Preserve Gradients? (Experiment A)}
\label{sec:exp_a}
% ----------------------------------------------------------------------------

\paragraph{Setup.} We define six program families of increasing complexity (Table~\ref{tab:programs}). Each family specifies a parameterized Scheme program with one or more learnable constants and target values for those constants; the training targets are the noiseless outputs of the \emph{same} program instantiated with the target constants. For each program, the constants are initialized by seeded Gaussian perturbation away from their targets and optimized with Adam on a fixed 8-point input grid over $x \in [0.5, 3.0]$ (no held-out split), for up to 3000 epochs or until the best training loss falls below $10^{-3}$ (full protocol in Appendix~\ref{app:exp_a_ext}).

\begin{table}[t]
\centering
\caption{Program families for Experiment~A: learnable constants (initialized away from targets) to be recovered by gradient descent, with complexity increasing from scalar arithmetic (P1) to nested function composition (P6). The full Scheme source of each family is in Appendix~\ref{app:exp_a_ext}, Table~\ref{tab:programs_full}.}
\label{tab:programs}
\small
\begin{tabular}{@{}llll@{}}
\toprule
ID & Capability tested & Constants & Targets \\
\midrule
P1 & Scalar arithmetic & $\alpha$ & 0.5 \\
P2 & Independent parameters & $a, b$ & 3.0, 0.5 \\
P3 & Fixed-depth recursion & $\alpha$ & 2.0 \\
P4 & Higher-order closure application & $\alpha$ & 1.5 \\
P5 & Multiple function scopes & $a, b$ & 2.0, 1.5 \\
P6 & Nested composition & $a, b, c$ & 2.0, 0.5, 1.0 \\
\bottomrule
\end{tabular}
\end{table}

The families progress from direct arithmetic (P1--P2) through fixed-depth recursion (P3), higher-order closure application (P4), and separate function scopes (P5) to three-parameter nested composition (P6). Five optimization methods are compared: direct compilation, DMCI, and a hand-coded PyTorch interpreter (all autograd), plus finite differences and a $(\mu,\lambda)$ evolution strategy, with ten seeds per (method, program) pair (300 runs).

\paragraph{Trajectory equivalence.} Across all 60 (program, seed) pairs the three autograd methods match on convergence epoch and final loss to within $7\times10^{-7}$, their full trajectories overlapping to numerical precision (Figure~\ref{fig:exp_a_convergence}): tagged values, heap indirection, environment lookup, and dispatch do not measurably alter optimization. At 50 random parameter settings per program (300 total), the relative gradient error $\|\nabla_{\text{DMCI}} - \nabla_{\text{direct}}\|_2 / \|\nabla_{\text{direct}}\|_2$ is zero to numerical precision and cosine similarity is $1.0$ within $3\times10^{-7}$.

\pgfplotsset{
    conv panel/.style={
        width=6.0cm, height=3.0cm, scale only axis,
        tick label style={font=\scriptsize},
        label style={font=\small},
        title style={font=\normalsize},
        grid=major, grid style={gray!20},
        every axis plot/.append style={mark=none},
    }
}
\begin{figure}[tbp]
\centering
% --- Shared legend (hand-built for exact control) ---
\begin{tikzpicture}[baseline=(box.center)]
\node[draw=gray!40, rounded corners=2pt, fill=white, inner xsep=10pt, inner ysep=5pt] (box) {%
    \raisebox{-0.5pt}{\tikz[baseline=-0.5ex]{\draw[blue, line width=1.5pt] (0,0) -- (0.7cm,0);}}%
    \;\small DMCI (mean $\pm\,1\sigma$)%
    \hspace{28pt}%
    {\small\textcolor{red!70!black}{$\blacklozenge$\;\,$\blacklozenge$\;\,$\blacklozenge$\;\,$\blacklozenge$}}%
    \;\;\small Direct compilation%
};
\end{tikzpicture}
\vspace{4pt}

% --- 2x2 grid: groupplot guarantees aligned panels and tight, symmetric spacing ---
% The +/-1 sigma uncertainty band is in LOG10 space (see make_exp_a_data.py): convergence
% curves span several orders of magnitude, so log-space statistics give an interpretable
% *multiplicative* band.  It is drawn as a closed polygon (<tag>_band.dat) via \closedcycle
% rather than the pgfplots `fill between' library, which corrupts groupplot cell layout in
% this pgfplots version.  Band is drawn first so it sits behind the mean line and markers.
\begin{tikzpicture}
\begin{groupplot}[
    group style={group size=2 by 2, horizontal sep=1.7cm, vertical sep=1.15cm},
    conv panel, ymode=log,
]
% --- P1: scalar arithmetic (top-left) ---
\nextgroupplot[title={P1: scalar arithmetic}, ylabel={Loss}, xmin=0, xmax=90, ymin=1e-2, ymax=200]
\addplot[fill=blue!40, opacity=0.22, draw=none, forget plot] table[x=x, y=y] {figures/exp_a_p1_single_const_compiled_interp_band.dat} \closedcycle;
\addplot[blue, line width=1.5pt] table[x=epoch, y=mean] {figures/exp_a_p1_single_const_compiled_interp.dat};
\addplot[red!70!black, only marks, mark=diamond*, mark size=2.5pt, mark repeat=8] table[x=epoch, y=mean] {figures/exp_a_p1_single_const_direct.dat};
% --- P3: recursive (top-right) ---
\nextgroupplot[title={P3: recursive}, xmin=0, xmax=140, ymin=1e-2, ymax=1e5]
\addplot[fill=blue!40, opacity=0.22, draw=none, forget plot] table[x=x, y=y] {figures/exp_a_p3_recursive_compiled_interp_band.dat} \closedcycle;
\addplot[blue, line width=1.5pt] table[x=epoch, y=mean] {figures/exp_a_p3_recursive_compiled_interp.dat};
\addplot[red!70!black, only marks, mark=diamond*, mark size=2.5pt, mark repeat=12] table[x=epoch, y=mean] {figures/exp_a_p3_recursive_direct.dat};
% --- P4: higher-order (bottom-left) ---
\nextgroupplot[title={P4: higher-order}, ylabel={Loss}, xlabel={Epoch}, xmin=0, xmax=140, ymin=1e-3, ymax=200]
\addplot[fill=blue!40, opacity=0.22, draw=none, forget plot] table[x=x, y=y] {figures/exp_a_p4_higher_order_compiled_interp_band.dat} \closedcycle;
\addplot[blue, line width=1.5pt] table[x=epoch, y=mean] {figures/exp_a_p4_higher_order_compiled_interp.dat};
\addplot[red!70!black, only marks, mark=diamond*, mark size=2.5pt, mark repeat=12] table[x=epoch, y=mean] {figures/exp_a_p4_higher_order_direct.dat};
% --- P6: deep composition (bottom-right) ---
\nextgroupplot[title={P6: deep composition}, xlabel={Epoch}, xmin=0, xmax=500, ymin=1e-3, ymax=5000]
\addplot[fill=blue!40, opacity=0.22, draw=none, forget plot] table[x=x, y=y] {figures/exp_a_p6_composed_compiled_interp_band.dat} \closedcycle;
\addplot[blue, line width=1.5pt] table[x=epoch, y=mean] {figures/exp_a_p6_composed_compiled_interp.dat};
\addplot[red!70!black, only marks, mark=diamond*, mark size=2.5pt, mark repeat=45] table[x=epoch, y=mean] {figures/exp_a_p6_composed_direct.dat};
\end{groupplot}
\end{tikzpicture}
\caption{Convergence trajectories for four program families (scalar P1, recursive P3, higher-order P4, deep composition P6): mean (solid blue) $\pm\,1\sigma$ over 10 seeds (shaded), log-scaled loss. DMCI and direct compilation produce near-identical dynamics, so tagged-value dispatch, heap indirection, and evaluator recursion preserve gradient fidelity; P6's longer convergence reflects landscape complexity, not gradient degradation (its panel is truncated at 500 epochs, descending to ${\sim}0.0008$ by epoch ${\sim}2{,}017$). Full five-method comparison in Appendix~\ref{app:exp_a_full}, Figure~\ref{fig:exp_a_convergence_full}.}
\label{fig:exp_a_convergence}
\end{figure}

Each of the three autograd methods converges on all 60 program-seed pairs (100\%; Wilson 95\% lower bound 94.0\%), whereas finite differences converge 50/60 (83\%) at $12.5$--$534\times$ the wall time and a $(\mu,\lambda)$ evolution strategy only 27/60 (45\%), failing on the multi-parameter programs P2 and P6 (0/10 each) and P5 (1/10). Convergence cost scales with landscape complexity rather than gradient degradation (P1 ${\sim}20$ epochs through P6 ${\sim}2{,}017$, all reaching 100\%). Per-program DMCI results (Table~\ref{tab:exp_a_results}), per-method wall-clock (Fig.~\ref{fig:exp_a_wallclock}), and the full five-method baselines (Table~\ref{tab:exp_a_full}) are in Appendix~\ref{app:exp_a_full}.

\paragraph{The payoff: programs as differentiable data.} Sequential DMCI is ${\sim}14\times$ slower per evaluation than direct compilation (decomposed in Appendix~\ref{app:exp_a_ext}, Table~\ref{tab:overhead_decomp}). What it buys is not better optimization (the trajectories coincide to numerical precision) but \emph{programs as manipulable differentiable data}: changing the object program requires no regeneration, retracing, or re-engineering, since it remains an S-expression consumed by one compiled interpreter. Programs can thus be generated by an LLM (Experiment~B), swapped at runtime, or composed programmatically while reusing the same evaluator, with gradient correctness following from the interpreter's correctness almost everywhere (Theorem~\ref{thm:gradient}) rather than per-program verification. This trajectory equivalence holds across all 171 autograd (program, seed) pairs in Experiments~A--C,\footnote{The 171 autograd (program, seed) pairs comprise 60 from Experiment~A ($6$ programs $\times\,10$ seeds), 75 from Experiment~B ($15$ programs $\times\,5$ seeds), and 36 from Experiment~C (six models $\times\,5$ seeds plus two models with three completed seeds).} so the flexibility costs nothing measurable in optimization quality.

% ----------------------------------------------------------------------------
\subsection{Can Generated Programs Be Optimized Without Per-Program Implementation? (Experiment B)}
\label{sec:exp_b}
% ----------------------------------------------------------------------------

The compiler's interface (string in, differentiable module out) creates a natural integration point for large language models. After iterative refinement of a shared, compiler-constrained system prompt (binary-only operators, explicit \texttt{letrec} recursion, numeric-only control flow), a locally-hosted LLM (Qwen3.6-35B) produced compilable Scheme for all 15 selected model descriptions on the final prompt (Appendix~\ref{app:exp_b_ext}); this is constrained, prompt-refined generation, not open-ended autonomous synthesis. Models span electrostatics, spectroscopy, enzyme kinetics, oscillatory dynamics, population dynamics, signal processing, numerical methods, and polynomial evaluation, with 1--4 learnable parameters each (full list in Appendix~\ref{app:exp_b_ext}, Table~\ref{tab:exp_b_models}). Four use recursion or iteration: Euler ODE integration (10-step loop), Taylor series (9-term summation), the recursive filter (8 steps), and Newton's method (5 iterations); one more (M13, composed transforms) uses closures and higher-order composition.

\paragraph{Results.} We compare four methods: DMCI, direct compilation, a hand-coded PyTorch implementation of the same model, and a pure two-hidden-layer MLP, all trained with Adam over five seeds per model (full protocol in Appendix~\ref{app:exp_b_ext}). DMCI and direct compilation match on convergence epoch for \emph{all} 75/75 (model, seed) pairs with zero final-loss difference, and the three physics-informed methods converge on every run (100\%), whereas the MLP converges on only 65/75 (87\%), failing entirely on Coulomb's law ($1/r^2$ singularity). Where both converge, the MLP's mean loss is up to ${\sim}4{,}100\times$ worse on models whose compiled form encodes the exact functional structure but comparable on smooth, monotonic targets, illustrating the value of structural inductive bias rather than universal superiority. Per-model losses, the loss comparison (Figure~\ref{fig:exp_b_loss}), and the ${\sim}73\times$ DMCI overhead are in Appendix~\ref{app:exp_b_ext}.

% ----------------------------------------------------------------------------
\subsection{Recursive Scientific Models and a Real-Data Case Study (Experiment C, LIM-ENSO)}
\label{sec:exp_c}
% ----------------------------------------------------------------------------

Many scientific models are inherently algorithmic: recursion, iteration, and coupled state updates awkward to flatten into first-order arithmetic. We test eight such models (coupled ODEs Lotka--Volterra, SIR, and decay chain at $20$--$30$ steps; iterative maps; recursive filters) under the Experiment~B protocol, yielding $36$ completed (model, seed) pairs per method (full descriptions in Appendix~\ref{app:exp_c_ext}). DMCI and direct compilation again match on convergence epoch for all $36/36$ pairs and all $36$ DMCI runs converge, extending gradient fidelity to recursive programs with up to $30$ state-threading steps. Recursive structure provides strong inductive bias when the compiled program matches the target dynamics (cascaded EMA: $1{,}102\times$ mean loss gap vs.\ MLP, ${\sim}2\times$ by median; decay chain $42\times$), though two smooth-target models (continued fraction, damped pendulum) favor the MLP. Wall-clock cost is governed by optimization difficulty rather than recursion depth: C02/C03 use the most steps yet finish in under an hour, while C01/C06 require ${\sim}1{,}400$--$1{,}800$ epochs (${\sim}7$--$8$\,h).

\paragraph{Real-data case study: a Kalman-filter MLE for ENSO.} To answer the synthetic-only objection
that Experiments~A--C recover constants from self-generated data, we fit a real, multivariate
dynamical-systems maximum-likelihood problem: a Linear Inverse Model (LIM) of the El~Ni\~no--Southern
Oscillation, with the \emph{entire} Kalman-filter negative log-likelihood for ERSSTv5~\citep{huang2017extended}
tropical sea-surface-temperature anomalies written once as a Scheme program
(Appendix~\ref{app:exp_lim_enso_ext}) and folded through the compiled interpreter. The transition operator
$F$ and covariances $Q,R$ enter as bound parameter tensors and reverse-mode autograd yields \emph{exact}
gradients of the likelihood. This is genuinely high-dimensional ($k=D^2+D(D{+}1)/2+1$ free parameters,
from $58$ at $D{=}6$ to $\mathbf{611}$ at $D{=}20$), and exact DMCI gradients make it tractable where
gradient-free search does not: both gradient optimizers (Adam, multi-start L-BFGS) converge reliably and
recover stable operators ($\rho(F)<1$), while batched differential evolution never minimizes the
likelihood and erodes with dimension (Table~\ref{tab:lim_main}). The fitted operator carries the canonical
ENSO signature and attains held-out Nino-3.4 anomaly correlation $0.90/0.65/0.50/0.51$ at the
$3/6/9/12$-month leads, beating persistence at every lead and matching a hand-built Green-function
reference. The value here is not better gradients than a hand-written PyTorch Kalman filter but
\emph{zero per-structure implementation}: a language model proposes alternative $F$ structures (dense,
diagonal, low-rank, AR(2), symmetric/antisymmetric) as one-line edits to the same $17$-line program,
which the \emph{same} frozen interpreter calibrates and a held-out criterion selects among. Full
methodology, exact train/test split, fairness of the Green-function and gradient-free baselines, and the
optimizer/dimension sweep are in Appendix~\ref{app:exp_lim_enso_ext}.

\begin{table}[t]
\centering
\caption{Solver comparison for the dense LIM operator: all solvers optimize the \emph{same} compiled
Kalman-filter NLL objective, so only the optimization method varies. Train NLL is shown at $D{=}6$ and
$D{=}20$ (lower better); ACC@3, spectral radius $\rho(F)$, stability ($\rho(F)<1$), and seed-robustness
are at the headline $D{=}10$. The two exact-gradient methods (Adam, multi-start L-BFGS) recover stable
operators where gradient-free differential evolution does not. Full $D\in\{6,10,15,20\}$ results in
Appendix~\ref{app:exp_lim_enso_ext}, Table~\ref{tab:lim_enso_solver}.}
\label{tab:lim_main}
\small
\begin{tabular}{@{}lccccc@{}}
\toprule
 & \multicolumn{2}{c}{Train NLL $\downarrow$} & \multicolumn{3}{c}{At headline $D{=}10$} \\
\cmidrule(lr){2-3}\cmidrule(lr){4-6}
Solver & $D{=}6$ & $D{=}20$ & ACC@3 $\uparrow$ & $\rho(F)$ & Stable / Robust \\
\midrule
Exact-grad Adam (DMCI)           & $-2060$ & $-4494$  & $0.90$ & $0.96$ & Yes / $3/3$ \\
Multi-start L-BFGS (DMCI)        & $-2065$ & $-4551$  & $0.90$ & $0.96$ & Yes / $3/3$ \\
Differential evolution (gradient-free) & $+636$  & $+11383$ & $0.87$ & $1.08$ & No / $0/3$ \\
\addlinespace
Green-function $G(\tau)$ (reference)   & n/a     & n/a      & $0.91$ & $0.96$ & Yes / n/a \\
\bottomrule
\end{tabular}
\end{table}

% ----------------------------------------------------------------------------
\subsection{Does Co-Search Discover Better Scientific Structure? (Experiment L, Battery Capacity Fade)}
\label{sec:exp_battery}
% ----------------------------------------------------------------------------
% Synthetic-recovery numbers are FINAL (rigorous iters=300 re-score, results/battery_rescore.json).
% Real-Severson forecast + ablation numbers are FINAL (results/battery_rescore_real_ks*.json,
% after the three real-data islands + the iters=300 re-score complete.

The LIM-ENSO study (\S\ref{sec:exp_c}) calibrated a handful of hand-seeded structural variants by hand; the natural next step is to let a language model propose those variants at scale, with exact DMCI gradients calibrating each. This is a \emph{co-search} of a hybrid space (the model searches the discrete space of \emph{programs} while reverse-mode gradients search each program's continuous parameters), and it closes into one loop only because every runtime-supplied program inherits differentiability from the single compiled interpreter, with no per-program reimplementation or recompilation. Structurally this is a \emph{bilevel} optimization: a non-differentiable outer search over program structure wrapping an exact-gradient inner calibration of each structure's constants, the canonical form of joint structure-and-parameter search~\citep{liu2019darts,cranmer2023pysr,chaudhuri2021neurosymbolic}, and distinct from the gradient-through-dispatch relaxation of Experiment~E (Appendix~\ref{app:exp_e}). Gradients search parameters, the language model searches structure, and DMCI's contribution is that the inner calibration is exact and reimplementation-free for \emph{any} structure the outer loop proposes. Such a search raises a question of \emph{fidelity}: when it prefers a structure, is that structure \emph{correct}, or merely better-scoring under a finite-budget proxy? Real forecasting data cannot tell the two apart, because the generating mechanism is unknown. We therefore study battery capacity fade, which admits \emph{both} a mechanism-labeled synthetic ground truth and a real held-out forecast, and run the same co-search on each: a language model, through OpenEvolve~\citep{openevolve} (an open implementation of AlphaEvolve~\citep{alphaevolve}), edits a per-cycle degradation program (the capacity rollout $Q(k)$ written as a Scheme loop) while DMCI gradients calibrate its rate parameters and held-out forecast skill selects.

\paragraph{Synthetic recovery (ground truth).} We generate mechanism-labeled cells from a two-reservoir bottleneck, $Q=q_0\min(1-a\sqrt{k},\,1-ck)$, whose $\min$ of a fast (square-root, SEI-like) and a slow (linear) loss produces a sharp, non-smooth \emph{knee}; the search is seeded with the smooth $Q=q_0-B\sqrt{k}$, which cannot express a knee. Across three independent islands (MAP-Elites quality-diversity over structural complexity $\times$ knee-capability) the co-search evaluates $600$ programs spanning $119$ structurally distinct rollouts (canonical-AST count, Appendix~\ref{app:exp_battery_ext}) and rediscovers the knee mechanism from the smooth seed: two of the three islands converge to programs that are algebraically the generating $\min(q_0-B_1\sqrt{k},\,q_0-B_2 k)$ family (a knee the smooth seed provably cannot express) and, notably, write their own encoding of it rather than copying the hand-written reference. Under a rigorous re-calibration ($300$ Adam steps, versus the $60$ used as the cheap search-time fitness) the recovered structure forecasts the held-out cycle tail at RMSE $0.0111$, which is $2.7\times$ better than the smooth seed ($0.0301$) and below even the hand-written two-reservoir ($0.0143$) and sigmoidal ($0.0149$) references (Table~\ref{tab:battery}, top). Because this is a recovery of the correct \emph{structure} (the two winners are algebraically the generating $\min(\cdot,\cdot)$ family up to parameter renaming, not merely a lower error margin), it is robust to the fitness-fidelity effect isolated below.

\paragraph{Real data (held-out forecast skill).} We then replace the synthetic target with $117$ commercial LFP/graphite cells from the \citet{severson2019} fast-charging dataset (per-cycle discharge capacity, normalized to state of health $Q/Q_0$). These cells fade negligibly over their first ${\sim}100$ cycles and accelerate only near end of life, so each cell's trajectory is resampled onto a common life-fraction grid spanning its full observed cycle life (Figure~\ref{fig:battery_real}a); an absolute early-cycle window is a structure-blind negative control that separates no families (Figure~\ref{fig:battery_real}b; Appendix~\ref{app:exp_battery_ext}). On the held-out forecast, where we fit each cell's early life and predict its accelerating tail, the co-search's selected program (a discovered three-reservoir bottleneck) attains tail RMSE $0.051$ (late split)\,/\,$0.034$ (early split), beating the best hand-crafted model (a sigmoidal knee, $0.051$/$0.057$) by $1.7\times$ on the hard early-extrapolation split and matching it on the late split, and beating the smooth structure families (square-root SEI and power-law) and the naive persistence and linear-extrapolation floors at both (Table~\ref{tab:battery}, bottom). The search explored $160$ structurally distinct rollouts on the real cells (Appendix~\ref{app:exp_battery_ext}); a more flexible discovered structure reaches the best late-split skill ($0.027$) but extrapolates poorly on the early split, a concrete instance of the fitness-fidelity caution that motivates selecting on the harder split. On real cells we claim forecast skill, \emph{not} mechanism recovery: the real rollover is soft, and its physical driver (lithium plating following negative-electrode active-material loss~\citep{attia2022knees}) is latent and not identifiable from a capacity curve alone.

\begin{table}[t]
\centering
\caption{Battery co-search results: held-out forecast RMSE after rigorous 300-step DMCI calibration of each candidate structure.
\textbf{Top:} on mechanism-labeled synthetic data, the search starts from a smooth square-root fade law and recovers a knee-capable structure from the generating family, reducing held-out RMSE relative to the smooth seed and hand-written knee baselines.
\textbf{Bottom:} on real Severson cells, where the true mechanism is unknown, the selected co-search program improves early-tail extrapolation over smooth-family, hand-knee, and naive baselines while matching the best hand-knee model on the late split.
The table separates the two claims: synthetic recovery shows that the search can rediscover correct structure when ground truth is available, while the real-data result shows that runtime-supplied programs calibrated through one frozen DMCI interpreter can improve held-out forecast skill.}
\label{tab:battery}
\small
\begin{tabular}{@{}lcc@{}}
\toprule
\multicolumn{3}{@{}l}{\emph{Synthetic (two-reservoir knee target): held-out RMSE}} \\
Structure & Knee-capable & RMSE $\downarrow$ \\
\midrule
Smooth $\sqrt{t}$ seed                & no  & $0.0301$ \\
Hand two-reservoir (true family)      & yes & $0.0143$ \\
Hand sigmoidal knee                   & yes & $0.0149$ \\
\textbf{Co-search (recovered)}        & yes & $\mathbf{0.0111}$ \\
\midrule
\multicolumn{3}{@{}l}{\emph{Real Severson cells: held-out tail RMSE}} \\
Method & Late split & Early split \\
\midrule
Persistence (naive)                   & $0.084$ & $0.089$ \\
Linear extrapolation (naive)          & $0.080$ & $0.078$ \\
Best smooth family (DMCI)             & $0.083$ & $0.086$ \\
Best hand knee (sigmoidal, DMCI)      & $0.051$ & $0.057$ \\
\textbf{Co-search (selected, DMCI)}   & $0.051$ & $\mathbf{0.034}$ \\
\bottomrule
\end{tabular}
\end{table}

\begin{figure}[t]
\centering
% (a) life-fraction grid -- the real forecast setup
\begin{tikzpicture}
\begin{axis}[
  width=6.2cm, height=5.0cm,
  title={(a) life-fraction grid}, title style={font=\small},
  xlabel={Normalized life index $k$}, ylabel={State of health $Q/Q_0$},
  label style={font=\small}, tick label style={font=\small},
  xmin=0, xmax=100, ymin=0.76, ymax=1.04,
  xtick={0,20,40,60,80,100}, ytick={0.8,0.9,1.0},
  grid=major, grid style={gray!18},
]
  \addplot[name path=lfa, draw=none] table[x=k,y=p10] {figures/battery_real_lifefrac.dat};
  \addplot[name path=lfb, draw=none] table[x=k,y=p90] {figures/battery_real_lifefrac.dat};
  \addplot[blue!12] fill between[of=lfa and lfb];
  \addplot[blue, line width=1pt] table[x=k,y=mean] {figures/battery_real_lifefrac.dat};
  \draw[dashed, gray!70] (axis cs:70,0.76) -- (axis cs:70,1.04);
  \node[font=\scriptsize, text=black!55] at (axis cs:34,0.805) {fit};
  \node[font=\scriptsize, text=black!55] at (axis cs:86,0.805) {forecast};
\end{axis}
\end{tikzpicture}%
\hspace{3mm}%
% (b) absolute first-100-cycle grid -- the structure-blind negative control (shares the y-axis of (a))
\begin{tikzpicture}
\begin{axis}[
  width=6.2cm, height=5.0cm,
  title={(b) absolute first-$100$-cycle grid}, title style={font=\small},
  xlabel={Cycle index $k$},
  label style={font=\small}, tick label style={font=\small},
  xmin=0, xmax=100, ymin=0.76, ymax=1.04,
  xtick={0,20,40,60,80,100}, ytick={0.8,0.9,1.0}, yticklabels={},
  grid=major, grid style={gray!18},
]
  \addplot[name path=aba, draw=none] table[x=k,y=p10] {figures/battery_real_absolute.dat};
  \addplot[name path=abb, draw=none] table[x=k,y=p90] {figures/battery_real_absolute.dat};
  \addplot[blue!12] fill between[of=aba and abb];
  \addplot[blue, line width=1pt] table[x=k,y=mean] {figures/battery_real_absolute.dat};
  \node[font=\scriptsize, text=black!55, align=center] at (axis cs:50,0.90) {mean fade\\$<0.002$ SOH};
\end{axis}
\end{tikzpicture}
\caption{Why the battery forecast task uses a life-fraction grid. Curves show mean state of health $Q/Q_0$ across the 117 Severson cells, with shading marking the 10--90th percentile range; both panels share the same $Q/Q_0$ axis. \textbf{(a)}~Resampling each cell over its observed lifetime exposes the flat early regime and late-life rollover that the held-out forecast must predict; the dashed line marks the fit/forecast split at $k_{\mathrm{split}}=70$. \textbf{(b)}~The absolute first-100-cycle window is nearly flat, with mean fade below $0.002$ SOH, and therefore provides little signal for distinguishing degradation structures. The life-fraction grid makes the held-out tail structurally informative: candidate programs calibrated through the same frozen DMCI interpreter are selected by whether their structure predicts late-life curvature, not by how well they fit an uninformative early flat region.}
\label{fig:battery_real}
\end{figure}

\paragraph{What did the interpreter contribute? (inner-fitter ablation).} To separate OpenEvolve's structure search from DMCI's differentiable calibration, we hold each discovered structure, the batched interpreter, the per-cell parameterization ($468$--$702$ free parameters), and the compute budget fixed, and replace DMCI's exact-gradient Adam with gradient-free differential evolution (the Experiment~I baseline, Appendix~\ref{app:exp_i}). Held-out skill collapses: on the identical structures the gradient-free fit scores $9$--$27\times$ worse (island0 $0.072\!\to\!1.98$ and island1 $0.027\!\to\!0.25$ at the late split), because it cannot calibrate the high-dimensional per-cell fit within matched compute, consistent with the dimension scaling of Experiment~I. OpenEvolve proposes the structure; DMCI's exact gradients are what make that structure forecastable, and neither alone suffices. The comparison is at matched wall-clock: converging the per-cell fit by gradient-free search would need orders of magnitude more evaluations, which is itself the point.

\paragraph{When co-search overfits its own fitness: a real-data stress test.} The identical machinery on real influenza forecasting exposes an instructive failure mode. In FluZoo, a Qwen3.6-35B zoo of compartmental, seasonal, and regionally-coupled epidemic programs is calibrated through the interpreter on CDC ILINet weighted influenza-like illness and selected on held-out seasons (full study in Appendix~\ref{app:exp_fluzoo_ext}). Under the cheap $60$-step search-time fitness the validation-selected evolved program appeared to beat a hand-written regional SEIR; re-scored under the \emph{same} rigorous $300$-step calibration, its advantage \emph{vanished}, tying SEIR on both validation ($0.0148$ vs.\ $0.0149$) and test ($0.0181$ vs.\ $0.0181$). The under-converged inner fitness acted as an implicit regularizer that flattered the more flexible program, and the outer search optimized that loose proxy rather than true skill. The lesson, that the inner-loop fitness must be converged enough to be a faithful skill signal or the outer structure search overfits it, is exactly why the battery result is reported at the rigorous budget and anchored, on its synthetic leg, to \emph{structural} recovery against ground truth rather than a marginal RMSE margin. The honest scope of program-and-parameter co-search is this pair: it recovers correct structure when the data discriminate it, and degrades to the baseline when they do not.

\paragraph{The payoff: program search as data.} Experiment~A established that one program becomes differentiable for free; the co-search shows that \emph{a search over programs} does. Because each candidate mechanism is an S-expression consumed by one verified interpreter rather than a hand-built differentiable module, a language model can propose hundreds of structurally distinct models and a single fit loop calibrates them all, with gradient correctness following from the interpreter's correctness (Theorem~\ref{thm:gradient}) rather than per-program verification. The discrete search over structure and the continuous search over parameters thus close into one loop on real scientific data, which is the capability the meta-circular framing was built to provide: not better gradients for a fixed model, but runtime-supplied programs that become differentiable, and therefore searchable, without per-program reimplementation or recompilation.

% ----------------------------------------------------------------------------
\subsection{Can Batching Amortize Interpretation Overhead? (Experiment H)}
\label{sec:exp_h}
% ----------------------------------------------------------------------------

Experiments A--C evaluate programs one input at a time, paying the ${\sim}14\times$ per-evaluation overhead (tagged-value dispatch, Python interpretation, graph walking) each time.\footnote{The experiments share one global numbering (A--L); D--G and I--K are reported in the appendices, so the five main-text experiments are A, B, C, H, and L.} But the compiled graph is a fixed sequence of tensor operations, so a leading batch dimension broadcasts through it unchanged and the overhead is paid once per batch rather than once per input. Sequential and batched evaluation agree bit-for-bit (max difference $<10^{-6}$, all gradients finite; Appendix~\ref{app:exp_h_ext}, Table~\ref{tab:exp_h_correctness}). Forward throughput then rises to $279$--$1{,}710\times$ over sequential at batch size~512, batched training (64 points) is $12$--$45\times$ faster, and population batching ($M$ restarts $\times$ $N$ points in one graph walk) reaches $9{,}156$--$16{,}873\times$ at $M{=}100$. These are recoveries of interpreter overhead, not gains over a specialized implementation: benchmarked against \texttt{jax.vmap} of the \emph{directly}-compiled model, batched DMCI attains a geometric-mean $0.27\times$ forward throughput (Appendix~\ref{app:exp_h_ext}, Table~\ref{tab:exp_h_vmap}), competitive for closed-form equations but slower for loop-heavy ones. DMCI's advantage is instead \emph{coverage and zero per-program effort}: the batched walk is obtained automatically for any program supplied as data, and covers recursive, looping programs the JAX backend cannot \texttt{jit}/\texttt{vmap} at all. It extends even to batching the heap-using meta-circular interpreter \emph{itself}: because a fixed program's heap layout and control flow are data-independent, a single batched interpreter walk reproduces $N$ sequential walks bit-for-bit at $214$--$270\times$ forward and $56$--$63\times$ training throughput (Appendix~\ref{app:exp_h_ext}, Table~\ref{tab:exp_h_dmci_batched}).

\paragraph{Production-scale validation.} We validate on two production-motivated simulators: \textsc{DiffESM-S} (97-node Earth system model, 70 parameters, 100-step recursive integration) and \textsc{DiffSoc-S} (206-node urban political-economy simulator, 87 parameters). Despite a $2.1\times$ difference in graph complexity, both achieve nearly identical batching speedups, reaching $851$--$875\times$ throughput at batch size~1024 on a single CPU core (Table~\ref{tab:exp_h_crossmodel}), because $>99.8\%$ of sequential time is Python-level bookkeeping paid once under batching. Counterintuitively, CPU outperforms the A100 here, since each of ${\sim}20{,}000$ small tensor operations launches a separate CUDA kernel (full GPU, \texttt{torch.compile}, and batch-size sweeps in Appendix~\ref{app:exp_h_ext}).

\begin{table}[t]
\centering
\caption{Cross-model comparison (100 timesteps, single CPU core). Despite a $2.1\times$ difference in graph complexity, both models achieve nearly identical batching speedups, supporting the claim that interpreter traversal overhead dominates arithmetic cost and is effectively amortized by batching. GPU comparisons use NVIDIA A100 (40\,GB) measurements. ``Sequential forward'' is the single-evaluation (batch-size-1) per-evaluation cost; speedups are batched throughput relative to it.}
\label{tab:exp_h_crossmodel}
\small
\begin{tabular}{@{}lrr@{}}
\toprule
 & DiffESM-S & DiffSoc-S \\
\midrule
Source lines      & 317    & 703 \\
Graph nodes       & 97     & 206 \\
Learnable params  & 70     & 87 \\
\addlinespace
Sequential forward (ms/eval)        & 172.7  & 1{,}687.0 \\
Batched forward, BS=64 (ms/eval)    & 2.85   & 27.1 \\
Forward speedup (BS=64)             & $60.5\times$ & $62.3\times$ \\
Forward speedup (BS=1024)           & $875\times$  & $851\times$ \\
\addlinespace
CPU faster than A100 (eager PyTorch)     & Yes    & Yes \\
\bottomrule
\end{tabular}
\end{table}

Convergence speedup is concrete: 300 epochs of DiffESM-S fitting complete in 23.9\,s batched vs.\ 731.4\,s sequential ($30.6\times$), and population batching of $M{=}200$ random restarts completes in 38.0\,s versus an estimated $40.6$ hours sequential, a \textbf{3{,}848$\boldsymbol{\times}$} end-to-end speedup, with all 200 restarts converging.

% ============================================================================
\section{Related Work}
\label{sec:related}
% ============================================================================

\paragraph{Language-level AD.} JAX~\citep{jax2018github}, Zygote~\citep{innes2019differentiable}, Enzyme~\citep{moses2021instead}, and Dex~\citep{paszke2021getting} differentiate programs \emph{written in} their host languages, and Pearlmutter and Siskind's VLAD/Stalin-$\nabla$~\citep{pearlmutter2008lambda} realizes reverse-mode AD as a first-class operator over a higher-order functional language with closures (building on forward-mode nesting~\citep{siskind2008nesting}). DMCI instead compiles the \emph{interpreter} once, so a program supplied as first-class data inherits gradient flow without per-program transformation (Section~\ref{sec:relwork_intro}).

\paragraph{Differentiable and neural interpreters.} TerpreT~\citep{gaunt2017terpret}, $\partial$4~\citep{bosnjak2017forth}, \citet{feser2016differentiable}, and their neural-library extension~\citep{gaunt2017neural} are hand-engineered differentiable interpreters for restricted languages that target program \emph{synthesis}; NLI~\citep{macfarlane2026nli} learns symbolic languages via Gumbel-Softmax~\citep{jang2017categorical,maddison2017concrete}. Closest to us, \citet{maleki2021adding} make differentiation a property of a Scheme evaluator, but by \emph{hand}-differentiating a meta-circular interpreter with dual numbers for nested forward-mode AD. A parallel thread instead \emph{learns} to execute programs neurally (Neural Turing Machines~\citep{graves2014neural}, Neural Programmer-Interpreters~\citep{reed2016neural}, Neural GPUs~\citep{kaiser2016neural}), approximating an executor from data with no fidelity guarantee. DMCI's executor is a \emph{compiled, exact} interpreter, with learning confined to continuous constants inside the programs it runs (Section~\ref{sec:relwork_intro}).

\paragraph{Differentiable scientific modeling and structure search.} A growing body makes individual simulators differentiable by reimplementing them in an autodiff framework (the FATES photosynthesis module in PyTorch/Julia~\citep{aboelyazeed2023} and JAX-CanVeg~\citep{jiang2025canveg}); DMCI instead compiles the interpreter once, so any model expressed as Scheme inherits differentiability without per-model reimplementation, trading interpretation overhead (Section~\ref{sec:exp_c}) for that engineering. The program-and-parameter co-search of Section~\ref{sec:exp_battery} places DMCI in the \emph{neurosymbolic program-synthesis} and \emph{symbolic-regression} family~\citep{cranmer2023pysr,udrescu2020ai,chaudhuri2021neurosymbolic,ellis2021dreamcoder}: like these it is bilevel, an outer search over symbolic structure wrapping an inner fit of continuous constants selected on held-out skill, and it inherits symbolic regression's overfitting failure mode (our fitness-fidelity finding is that literature's constant-optimization-budget problem in disguise). DMCI generalizes this skeleton on two axes. First, the hypothesis space is \emph{executable dynamical programs}, Scheme rollouts with loops, recursion, and internal state, rather than closed-form expression trees; tools such as \texttt{lambdify} cannot even ingest recursion (Appendix~\ref{app:exp_j}), a coverage gap the interpreter closes. Second, the inner constant fit uses \emph{exact reverse-mode gradients} through one frozen compiled interpreter rather than gradient-free or Levenberg--Marquardt fitting, which matters at the high parameter dimension where black-box search degrades (Appendix~\ref{app:exp_i}). The discrete proposer is a language model, a program-space prior~\citep{romeraparedes2024funsearch}, rather than genetic programming or prior neurosymbolic program search~\citep{shah2020near,manhaeve2018deepproblog,li2023scallop}; most directly, \citet{bosio2025combining} pair LLM-proposed program-like policies with \emph{gradient-free} coefficient optimization, whereas DMCI renders the LLM-generated program differentiable so its constants are fit by gradient descent (Section~\ref{sec:exp_b}). Consistent with this lineage, the approach matches strong hand-crafted models on low-dimensional, structure-insensitive problems (Section~\ref{sec:exp_battery}); its advantage is expected where structure is high-dimensional and forecast-discriminating, which we leave to future work.

\paragraph{Foundations, control flow, and compilation.} \citet{huot2023omegapap} prove AD correct almost everywhere for recursive, higher-order programs via $\omega$-PAP functions, which we instantiate for $\mathcal{L}_{\textrm{DMCI}}$ (Appendix~\ref{app:proofs}); related AD semantics appear in \citet{elliott2018simple}, \citet{wang2019demystifying}, \citet{abadi2020simple}, and \citet{sherman2021lambda}. Branch discontinuities can be smoothed (Smooth Interpretation~\citep{chaudhuri2010smooth} via Gaussian convolution and DiscoGrad~\citep{kreikemeyer2023discograd} via source-to-source smoothing with Monte Carlo gradients), whereas DMCI keeps the realized trace exact and inherits the source program's non-differentiability at branch boundaries (the S3 case, Appendix~\ref{app:exp_a_ext}), with branch smoothing left as future work. Finally, Tracr~\citep{lindner2023tracr} and Cajal~\citep{cajal2025} compile restricted languages to network weights; our compiler targets a self-hosting Scheme subset, preserving programs as manipulable data. What DMCI adds is a single combination: a compiled, self-hosting, reverse-mode evaluator that differentiates runtime-supplied programs as data, with no per-program transformation.

% ============================================================================
% ============================================================================
\section{Discussion, Limitations, and Reproducibility}
\label{sec:discussion}\label{sec:tradeoffs}\label{sec:limitations}
% ============================================================================

\paragraph{When DMCI is the right (and wrong) tool.} DMCI does not improve the gradients of a fixed,
known model: a hand-written PyTorch implementation, or DMCI's own direct-compile path, computes the same
gradients faster. Its value appears when programs are generated, modified, composed, or selected at
runtime and each must become differentiable without being reimplemented or recompiled. There the
conversion happens once, at the interpreter level: object programs are supplied as symbolic data while
their constants are bound as differentiable tensors, and one compiled evaluator serves all of them.
Three properties coincide to make this an advantage rather than a convenience.
\emph{(i)~Correctness by construction over an open-ended language}: gradient correctness is a corollary
of compiling a once-verified interpreter (Theorem~\ref{thm:gradient}), not re-established per program or
per language feature as a hand-written interpreter grows, shifting verification from each object program
to the interpreter. \emph{(ii)~Coverage}: it differentiates and
batches recursive, looping programs that trace-based vectorization (\texttt{jax.jit}/\texttt{vmap} or
symbolic translation) cannot handle at all (Section~\ref{sec:exp_h}, Appendix~\ref{app:exp_j}).
\emph{(iii)~Differentiation through interpretation itself}: because dispatch is part of the graph,
program \emph{structure} can be relaxed and searched, not only the constants of a fixed program
(Experiment~E, Appendix~\ref{app:exp_e}). These coincide exactly when the optimization target is an open-ended, growing space of
distinct runtime-supplied programs, the regime created by LLM-generated scientific models, runtime
composition, and open-ended discovery, of which the five LLM-proposed ENSO structures calibrated by one
frozen engine (Appendix~\ref{app:exp_lim_enso_ext}) are the smallest instance.

\paragraph{Correct gradients are not easy optimization.} DMCI turns runtime-supplied executable models
into ordinary differentiable objectives; it does not make them easy to optimize. Parameter
compensation, long gradient paths, and multimodal landscapes can defeat a single optimizer such as Adam
even with exact gradients and correct structure. The usual tools still apply (multi-start,
curvature-aware methods, held-out selection), and we find optimizer \emph{portfolios} selected on
held-out error degrade most gracefully with dimension (Appendices~\ref{app:exp_f},~\ref{app:exp_i}):
compilation gives gradient correctness, but optimization success remains a property of program, data,
and solver.

\paragraph{Limitations.} \textbf{Overhead}: sequential evaluation is ${\sim}14\times$ slower per
evaluation than direct compilation (tagged-value wrapping $41\%$, dispatch $32\%$); batching amortizes
this but single-evaluation latency remains and still trails \texttt{jax.vmap}, though an MLIR/Enzyme
backend~\citep{lattner2021mlir,moses2021instead} could cut the ratio to ${\sim}1.3\times$.
\textbf{Discrete search}: Gumbel-Softmax relaxation of dispatch recovers only $10.8\%$ on a
64-combination space (Appendix~\ref{app:exp_e}), consistent with prior gradient-based synthesis; structure
search stays exploratory and is not a primary claim.
\textbf{Branch conditions}: a parameter appearing only in a branch condition receives zero gradient
(Appendix~\ref{app:exp_a_ext}, S3), inheriting the source's non-differentiability, addressable by smooth
interpretation~\citep{chaudhuri2010smooth} or DiscoGrad smoothing~\citep{kreikemeyer2023discograd}.
\textbf{Language and backend}: the subset excludes \texttt{call/cc}, \texttt{set!}, and exact integers,
mutation is excluded by design, and the differentiable interpreter requires PyTorch's define-by-run
autograd. \textbf{Evaluation scope}: apart from the ENSO case study, experiments recover known constants
from self-generated data (degrading gracefully under noise, Appendix~\ref{app:exp_a_noise}); behavior
under misspecification and at larger program scales is untested. \textbf{Self-reference}: the evaluator
can run modified versions of itself, a path to gradient-based self-improvement~\citep{schmidhuber2006godel}
needing careful safety analysis. Tables~\ref{tab:ablation} and~\ref{tab:failure_modes}
(Appendix~\ref{app:ablation}) summarize the design-choice ablations and these failure modes.

\paragraph{Conclusion and outlook.} We extended the Neural Compiler from a first-order expression
language to a substantial, self-hosting Scheme subset expressive enough to compile its own evaluator,
and showed that the compiled interpreter preserves gradient flow: constants in interpreted programs
match direct compilation to within $7\times10^{-7}$ across 171 (program, seed) pairs, with gradient
correctness guaranteed almost everywhere by compiling the interpreter once rather than verifying each
program. DMCI thus occupies a design point no prior system reaches (a compiled, self-hosting,
reverse-mode evaluator in which programs are manipulable differentiable data), making LLM-generated,
runtime-composed, and recursive scientific models optimizable end-to-end at a real but amortizable cost.
This is precisely the foundation a generative proposer needs to close a discovery loop, where a language
model proposes mechanistic structures as code while the differentiable interpreter calibrates and
falsifies each against data, the co-search Appendix~\ref{app:exp_f} runs in miniature.

\paragraph{Reproducibility.} All source code is
available at \url{https://github.com/sheneman/dmci}: the compiler, the self-hosted evaluator
(Appendix~\ref{app:source_eval}), the hand-coded PyTorch baseline (Appendix~\ref{app:handcoded}), every
experiment driver, per-experiment configuration and SLURM scripts, a \texttt{README} per experiment, and the
committed model artifacts. Each source file carries an identifying header, and the gradient-correctness
guarantee (Theorem~\ref{thm:gradient}) is proved in Appendix~\ref{app:proofs}. The environment is pinned by
\texttt{pyproject.toml} and \texttt{requirements.txt} (PyTorch~$2.0$+, NumPy, SciPy, Python~$3.11$+, with the
exact cluster build recorded alongside). Every run fixes its seeds in code, embedded in each result file; the
interpreter executes in \texttt{float32} and reported equivalences carry explicit tolerances (DMCI versus
direct compilation agree to ${\sim}10^{-6}$, with bit-identical loss histories on several recursive programs).
The compiler studies use fully synthetic, seed-regenerable inputs; the three real-data studies use public
sources, each with a tracked \texttt{metadata.json} recording source, SHA256, and preprocessing: ERSSTv5
sea-surface temperature for LIM-ENSO, the Severson lithium-ion dataset~\citep{severson2019} for the battery
co-search, and CDC ILINet for FluZoo. Committed artifacts include the per-run summaries, aggregated tables,
accepted program caches, and the rigorously re-scored co-search winners, each with a \texttt{MANIFEST} mapping
it to its table or figure; self-checking per-experiment \texttt{aggregate\_table.py} scripts (run as \texttt{python -m experiments.exp\_b.aggregate\_table}, etc.) rebuild the manuscript tables and
assert every cell, and \texttt{check\_manifests.py} verifies artifact integrity. A single command,
\texttt{./reproduce.sh}, installs the package and runs the full test suite ($1{,}225$ tests), verifying the
cluster-free claims: the interpreter compiles and runs, gradients flow through it, batched evaluation matches
sequential, and \texttt{.ncg} serialization round-trips. The top-level \texttt{REPRODUCIBILITY.md} and each
experiment's \texttt{README} give per-experiment commands, expected numbers, data prerequisites, and the
artifact-to-table map. Experiment~H used a single NVIDIA~A100; the remaining experiments used university HPC
CPU cores; per-experiment costs and exact hardware are listed in each experiment's appendix.

% ============================================================================
% References
% ============================================================================

\bibliographystyle{plainnat}  % author-year (natbib); swap to iclr2026_conference.bst when available

\bibliography{refs}

@article{sheneman2026neural,
  author  = {Lucas Sheneman},
  title   = {The Neural Compiler: Program-to-Network Translation for Hybrid Scientific Machine Learning},
  journal = {arXiv preprint arXiv:2605.22498},
  year    = {2026}
}

@misc{r7rs,
  author       = {Alex Shinn and John Cowan and Arthur A. Gleckler},
  title        = {Revised$^7$ Report on the Algorithmic Language {S}cheme},
  year         = {2013},
  howpublished = {Technical report, Scheme Steering Committee. \url{https://small.r7rs.org/attachment/r7rs.pdf}},
  note         = {R7RS-small; finalized 6 July 2013}
}

@article{mccarthy1960recursive,
  author  = {John McCarthy},
  title   = {Recursive Functions of Symbolic Expressions and Their Computation by Machine, {P}art {I}},
  journal = {Communications of the ACM},
  volume  = {3},
  number  = {4},
  pages   = {184--195},
  year    = {1960},
  doi     = {10.1145/367177.367199}
}

@book{abelson1996structure,
  author    = {Harold Abelson and Gerald Jay Sussman and Julie Sussman},
  title     = {Structure and Interpretation of Computer Programs},
  publisher = {{MIT} Press},
  edition   = {2nd},
  year      = {1996},
  address   = {Cambridge, MA},
  isbn      = {9780262011532}
}

@misc{farrow2015delphi,
  author       = {David C. Farrow and Logan C. Brooks and Ryan J. Tibshirani and Roni Rosenfeld},
  title        = {{Delphi} {Epidata} {API}},
  year         = {2015},
  howpublished = {Carnegie Mellon University, \url{https://github.com/cmu-delphi/delphi-epidata}},
  note         = {Open API for epidemiological surveillance data; \texttt{fluview} endpoint provides CDC ILINet data}
}

@misc{cdc_fluview, author = {{Centers for Disease Control and Prevention}}, title = {{FluView}: {U.S.} Influenza Surveillance and the Outpatient {ILINet} ({U.S.} Outpatient Influenza-like Illness Surveillance Network)}, year = {2026}, howpublished = {\url{https://www.cdc.gov/fluview/}}, note = {Accessed: 2026}}

@article{mathis2024flusight,
  author  = {Sarabeth M. Mathis and Alexander E. Webber and Tom{\'a}s M. Le{\'o}n and Erin L. Murray and Monica Sun and Lauren A. White and Logan C. Brooks and Alden Green and Addison J. Hu and Roni Rosenfeld and others},
  title   = {Evaluation of {FluSight} influenza forecasting in the 2021--22 and 2022--23 seasons with a new target laboratory-confirmed influenza hospitalizations},
  journal = {Nature Communications},
  volume  = {15},
  number  = {1},
  pages   = {6289},
  year    = {2024},
  doi     = {10.1038/s41467-024-50601-9}
}

@article{aboelyazeed2023,
  author  = {Doaa Aboelyazeed and Chonggang Xu and Forrest M. Hoffman and Jiangtao Liu and Alex W. Jones and Chris Rackauckas and Kathryn Lawson and Chaopeng Shen},
  title   = {A differentiable, physics-informed ecosystem modeling and learning framework for large-scale inverse problems: demonstration with photosynthesis simulations},
  journal = {Biogeosciences},
  volume  = {20},
  number  = {13},
  pages   = {2671--2692},
  year    = {2023},
  doi     = {10.5194/bg-20-2671-2023}
}

@article{jiang2025canveg,
  author  = {Peishi Jiang and Patrick Kidger and Toshiyuki Bandai and Dennis Baldocchi and Heping Liu and Yi Xiao and Qianyu Zhang and Carlos Tianxin Wang and Carl Steefel and Xingyuan Chen},
  title   = {{JAX-CanVeg}: A Differentiable Land Surface Model},
  journal = {Water Resources Research},
  volume  = {61},
  number  = {3},
  pages   = {e2024WR038116},
  year    = {2025},
  doi     = {10.1029/2024WR038116}
}

@article{gaunt2017terpret,
  author    = {Alexander L. Gaunt and Marc Brockschmidt and Rishabh Singh and Nate Kushman and Pushmeet Kohli and Jonathan Taylor and Daniel Tarlow},
  title     = {{T}erpre{T}: A Probabilistic Programming Language for Program Induction},
  journal   = {arXiv preprint arXiv:1608.04428},
  year      = {2016},
  doi       = {10.48550/arXiv.1608.04428}
}

@inproceedings{bosnjak2017forth,
  author    = {Matko Bo{\v{s}}njak and Tim Rockt{\"a}schel and Jason Naradowsky and Sebastian Riedel},
  title     = {Programming with a Differentiable {F}orth Interpreter},
  booktitle = {Proceedings of the 34th International Conference on Machine Learning ({ICML})},
  series    = {Proceedings of Machine Learning Research},
  volume    = {70},
  pages     = {547--556},
  year      = {2017},
  publisher = {PMLR},
  url       = {https://proceedings.mlr.press/v70/bosnjak17a.html}
}

@article{feser2016differentiable,
  author    = {John K. Feser and Marc Brockschmidt and Alexander L. Gaunt and Daniel Tarlow},
  title     = {Differentiable Functional Program Interpreters},
  journal   = {arXiv preprint arXiv:1611.01988},
  year      = {2016},
  doi       = {10.48550/arXiv.1611.01988},
  url       = {https://arxiv.org/abs/1611.01988}
}

@article{innes2019differentiable, author = {Mike Innes and Alan Edelman and Keno Fischer and Chris Rackauckas and Elliot Saba and Viral B. Shah and Will Tebbutt}, title = {A Differentiable Programming System to Bridge Machine Learning and Scientific Computing}, journal = {arXiv preprint arXiv:1907.07587}, year = {2019}, doi = {10.48550/arXiv.1907.07587}}

@misc{jax2018github,
  author       = {James Bradbury and Roy Frostig and Peter Hawkins and Matthew James Johnson and Yash Katariya and Chris Leary and Dougal Maclaurin and George Necula and Adam Paszke and Jake Vander{P}las and Skye Wanderman-{M}ilne and Qiao Zhang},
  title        = {{JAX}: composable transformations of {P}ython+{N}um{P}y programs},
  year         = {2018},
  howpublished = {\url{http://github.com/jax-ml/jax}},
  note         = {Version 0.3.13}
}

@article{pearlmutter2008lambda,
  author  = {Barak A. Pearlmutter and Jeffrey Mark Siskind},
  title   = {Reverse-Mode {AD} in a Functional Framework: Lambda the Ultimate Backpropagator},
  journal = {ACM Transactions on Programming Languages and Systems (TOPLAS)},
  volume  = {30},
  number  = {2},
  pages   = {7:1--7:36},
  year    = {2008},
  doi     = {10.1145/1330017.1330018}
}

@inproceedings{maleki2021adding,
  author    = {Mehrdad Maleki and Barak A. Pearlmutter and Jeffrey Mark Siskind},
  title     = {Adding {A}utomatic {D}ifferentiation to {S}cheme by Differentiating the Interpreter},
  booktitle = {Scheme and Functional Programming Workshop (Scheme 2021), co-located with {ICFP} 2021},
  year      = {2021},
  url       = {https://icfp21.sigplan.org/details/scheme-2021-papers/7/Adding-AD-to-Scheme-by-Differentiating-the-Interpreter}
}

@inproceedings{moses2021instead, author = {William S. Moses and Valentin Churavy and Ludger Paehler and Jan H\"{u}ckelheim and Sri Hari Krishna Narayanan and Michel Schanen and Johannes Doerfert}, title = {Reverse-mode automatic differentiation and optimization of {GPU} kernels via {Enzyme}}, booktitle = {SC '21: Proceedings of the International Conference for High Performance Computing, Networking, Storage and Analysis}, articleno = {61}, year = {2021}, publisher = {ACM}, doi = {10.1145/3458817.3476165}}

@article{paszke2021getting,
  author    = {Adam Paszke and Daniel D. Johnson and David Duvenaud and Dimitrios Vytiniotis and Alexey Radul and Matthew J. Johnson and Jonathan Ragan-Kelley and Dougal Maclaurin},
  title     = {Getting to the Point: Index Sets and Parallelism-Preserving Autodiff for Pointful Array Programming},
  journal   = {Proceedings of the {ACM} on Programming Languages},
  volume    = {5},
  number    = {{ICFP}},
  articleno = {88},
  pages     = {1--29},
  year      = {2021},
  doi       = {10.1145/3473593}
}

@article{siskind2008nesting,
  author  = {Jeffrey Mark Siskind and Barak A. Pearlmutter},
  title   = {Nesting forward-mode {AD} in a functional framework},
  journal = {Higher-Order and Symbolic Computation},
  volume  = {21},
  number  = {4},
  pages   = {361--376},
  year    = {2008},
  doi     = {10.1007/s10990-008-9037-1}
}

@inproceedings{gaunt2017neural,
  author    = {Alexander L. Gaunt and Marc Brockschmidt and Nate Kushman and Daniel Tarlow},
  title     = {Differentiable Programs with Neural Libraries},
  booktitle = {Proceedings of the 34th International Conference on Machine Learning ({ICML})},
  series    = {Proceedings of Machine Learning Research},
  volume    = {70},
  pages     = {1213--1222},
  year      = {2017},
  publisher = {PMLR},
  url       = {https://proceedings.mlr.press/v70/gaunt17a.html}
}

@inproceedings{macfarlane2026nli,
  author    = {Matthew V. Macfarlane and Cl{\'e}ment Bonnet and Herke van Hoof and Levi H. S. Lelis},
  title     = {Gradient-Based Program Synthesis with Neurally Interpreted Languages},
  booktitle = {The Fourteenth International Conference on Learning Representations ({ICLR})},
  year      = {2026},
  eprint    = {2604.18907},
  archivePrefix = {arXiv},
  primaryClass = {cs.LG},
  doi       = {10.48550/arXiv.2604.18907},
  url       = {https://openreview.net/forum?id=NAORIWBaoO}
}

@inproceedings{jang2017categorical,
  author    = {Eric Jang and Shixiang Gu and Ben Poole},
  title     = {Categorical Reparameterization with {Gumbel-Softmax}},
  booktitle = {5th International Conference on Learning Representations ({ICLR})},
  year      = {2017},
  note      = {arXiv:1611.01144},
  url       = {https://openreview.net/forum?id=rkE3y85ee}
}

@inproceedings{maddison2017concrete,
  author    = {Chris J. Maddison and Andriy Mnih and Yee Whye Teh},
  title     = {The Concrete Distribution: A Continuous Relaxation of Discrete Random Variables},
  booktitle = {5th International Conference on Learning Representations (ICLR)},
  year      = {2017},
  url       = {https://openreview.net/forum?id=S1jE5L5gl},
  note      = {arXiv:1611.00712}
}

@article{graves2014neural,
  author  = {Alex Graves and Greg Wayne and Ivo Danihelka},
  title   = {Neural {T}uring Machines},
  journal = {arXiv preprint arXiv:1410.5401},
  year    = {2014},
  doi     = {10.48550/arXiv.1410.5401},
  note    = {cs.NE},
  url     = {https://arxiv.org/abs/1410.5401}
}

@inproceedings{reed2016neural, author={Scott Reed and Nando de Freitas}, title={Neural Programmer-Interpreters}, booktitle={ICLR}, year={2016}, note={arXiv 1511.06279 cs.LG}}

@inproceedings{kaiser2016neural,
  author    = {{\L}ukasz Kaiser and Ilya Sutskever},
  title     = {Neural {GPU}s Learn Algorithms},
  booktitle = {4th International Conference on Learning Representations ({ICLR})},
  year      = {2016},
  note      = {arXiv:1511.08228},
  url       = {https://arxiv.org/abs/1511.08228}
}

@inproceedings{shah2020near,
  author    = {Ameesh Shah and Eric Zhan and Jennifer J. Sun and Abhinav Verma and Yisong Yue and Swarat Chaudhuri},
  title     = {Learning Differentiable Programs with Admissible Neural Heuristics},
  booktitle = {Advances in Neural Information Processing Systems 33 ({NeurIPS} 2020)},
  year      = {2020},
  url       = {https://proceedings.neurips.cc/paper/2020/hash/342285bb2a8cadef22f667eeb6a63732-Abstract.html},
  note      = {arXiv:2007.12101},
  doi       = {10.48550/arXiv.2007.12101}
}

@inproceedings{manhaeve2018deepproblog, author = {Robin Manhaeve and Sebastijan Dumancic and Angelika Kimmig and Thomas Demeester and Luc De Raedt}, title = {DeepProbLog: Neural Probabilistic Logic Programming}, booktitle = {NeurIPS}, pages = {3753--3763}, year = {2018}}

@article{li2023scallop,
  author    = {Ziyang Li and Jiani Huang and Mayur Naik},
  title     = {{S}callop: A Language for Neurosymbolic Programming},
  journal   = {Proceedings of the ACM on Programming Languages},
  volume    = {7},
  number    = {PLDI},
  articleno = {166},
  pages     = {1463--1487},
  year      = {2023},
  doi       = {10.1145/3591280}
}

@inproceedings{ellis2021dreamcoder,
  author    = {Kevin Ellis and Catherine Wong and Maxwell Nye and Mathias Sabl\'{e}-Meyer and Lucas Morales and Luke Hewitt and Luc Cary and Armando Solar-Lezama and Joshua B. Tenenbaum},
  title     = {{DreamCoder}: Bootstrapping Inductive Program Synthesis with Wake-Sleep Library Learning},
  booktitle = {Proceedings of the 42nd {ACM} {SIGPLAN} International Conference on Programming Language Design and Implementation ({PLDI})},
  pages     = {835--850},
  publisher = {Association for Computing Machinery},
  year      = {2021},
  doi       = {10.1145/3453483.3454080}
}

@article{cranmer2023pysr,
  author  = {Miles Cranmer},
  title   = {Interpretable Machine Learning for Science with {PySR} and {SymbolicRegression.jl}},
  journal = {arXiv preprint arXiv:2305.01582},
  year    = {2023},
  doi     = {10.48550/arXiv.2305.01582},
  url     = {https://arxiv.org/abs/2305.01582}
}

@article{udrescu2020ai,
  author  = {Silviu-Marian Udrescu and Max Tegmark},
  title   = {{AI Feynman}: A physics-inspired method for symbolic regression},
  journal = {Science Advances},
  volume  = {6},
  number  = {16},
  pages   = {eaay2631},
  year    = {2020},
  doi     = {10.1126/sciadv.aay2631}
}

@article{bosio2025combining,
  author  = {Carlo Bosio and Matteo Guarrera and Alberto Sangiovanni-Vincentelli and Mark W. Mueller},
  title   = {Combining Large Language Models and Gradient-Free Optimization for Automatic Control Policy Synthesis},
  journal = {arXiv preprint arXiv:2510.00373},
  year    = {2025},
  doi     = {10.48550/arXiv.2510.00373}
}

@inproceedings{huot2023omegapap,
  author    = {Mathieu Huot and Alexander K. Lew and Vikash K. Mansinghka and Sam Staton},
  title     = {{$\omega$PAP} Spaces: Reasoning Denotationally About Higher-Order, Recursive Probabilistic and Differentiable Programs},
  booktitle = {2023 38th Annual {ACM/IEEE} Symposium on Logic in Computer Science ({LICS})},
  pages     = {1--14},
  year      = {2023},
  publisher = {IEEE},
  doi       = {10.1109/LICS56636.2023.10175739},
  note      = {arXiv:2302.10636; LICS 2023 Distinguished Paper}
}

@article{elliott2018simple,
  author  = {Conal Elliott},
  title   = {The simple essence of automatic differentiation},
  journal = {Proceedings of the ACM on Programming Languages},
  volume  = {2},
  number  = {ICFP},
  pages   = {70:1--70:29},
  year    = {2018},
  doi     = {10.1145/3236765}
}

@article{wang2019demystifying,
  author    = {Fei Wang and Daniel Zheng and James M. Decker and Xilun Wu and Gr\'{e}gory M. Essertel and Tiark Rompf},
  title     = {Demystifying Differentiable Programming: Shift/Reset the Penultimate Backpropagator},
  journal   = {Proceedings of the {ACM} on Programming Languages},
  volume    = {3},
  number    = {ICFP},
  articleno = {96},
  pages     = {96:1--96:31},
  year      = {2019},
  doi       = {10.1145/3341700}
}

@article{abadi2020simple,
  author  = {Mart\'{i}n Abadi and Gordon D. Plotkin},
  title   = {A Simple Differentiable Programming Language},
  journal = {Proceedings of the {ACM} on Programming Languages},
  volume  = {4},
  number  = {{POPL}},
  pages   = {38:1--38:28},
  year    = {2020},
  publisher = {{ACM}},
  doi     = {10.1145/3371106}
}

@article{sherman2021lambda,
  author    = {Benjamin Sherman and Jesse Michel and Michael Carbin},
  title     = {{$\lambda_S$}: Computable Semantics for Differentiable Programming with Higher-Order Functions and Datatypes},
  journal   = {Proceedings of the {ACM} on Programming Languages},
  volume    = {5},
  number    = {POPL},
  articleno = {3},
  pages     = {1--31},
  year      = {2021},
  doi       = {10.1145/3434284}
}

@inproceedings{chaudhuri2010smooth, author = {Swarat Chaudhuri and Armando Solar-Lezama}, title = {Smooth interpretation}, booktitle = {Proceedings of the 31st {ACM} {SIGPLAN} Conference on Programming Language Design and Implementation ({PLDI})}, pages = {279--291}, year = {2010}, publisher = {ACM}, doi = {10.1145/1806596.1806629}}

@article{kreikemeyer2023discograd,
  author  = {Justin N. Kreikemeyer and Philipp Andelfinger},
  title   = {Smoothing Methods for Automatic Differentiation Across Conditional Branches},
  journal = {IEEE Access},
  volume  = {11},
  pages   = {143190--143211},
  year    = {2023},
  note    = {The DiscoGrad tool; arXiv:2310.03585}
}

@inproceedings{lindner2023tracr,
  author    = {David Lindner and J{\'a}nos Kram{\'a}r and Sebastian Farquhar and Matthew Rahtz and Thomas McGrath and Vladimir Mikulik},
  title     = {Tracr: Compiled Transformers as a Laboratory for Interpretability},
  booktitle = {Advances in Neural Information Processing Systems 36 ({NeurIPS} 2023)},
  year      = {2023},
  doi       = {10.48550/arXiv.2301.05062},
  url       = {https://arxiv.org/abs/2301.05062}
}

@article{cajal2025,
  author  = {Joey Velez-Ginorio and Nada Amin and Konrad Paul Kording and Steve Zdancewic},
  title   = {Compiling to Linear Neurons},
  journal = {arXiv preprint arXiv:2511.13769},
  year    = {2025},
  doi     = {10.48550/arXiv.2511.13769},
  note    = {Introduces the {C}ajal language; companion paper ``Compiling to Recurrent Neurons'' (arXiv:2511.14953) extends it to iteration},
  url     = {https://arxiv.org/abs/2511.13769}
}

@inproceedings{lattner2021mlir,
  author    = {Chris Lattner and Mehdi Amini and Uday Bondhugula and Albert Cohen and Andy Davis and Jacques Pienaar and River Riddle and Tatiana Shpeisman and Nicolas Vasilache and Oleksandr Zinenko},
  title     = {{MLIR}: Scaling Compiler Infrastructure for Domain Specific Computation},
  booktitle = {2021 {IEEE/ACM} International Symposium on Code Generation and Optimization ({CGO})},
  pages     = {2--14},
  year      = {2021},
  publisher = {IEEE},
  doi       = {10.1109/CGO51591.2021.9370308}
}

@incollection{schmidhuber2006godel,
  author    = {J{\"u}rgen Schmidhuber},
  title     = {{G}{\"o}del Machines: Fully Self-referential Optimal Universal Self-improvers},
  booktitle = {Artificial General Intelligence},
  editor    = {Ben Goertzel and Cassio Pennachin},
  series    = {Cognitive Technologies},
  publisher = {Springer},
  pages     = {199--226},
  year      = {2007},
  doi       = {10.1007/978-3-540-68677-4_7}
}

@article{penland1995optimal,
  author  = {Penland, C{\'e}cile and Sardeshmukh, Prashant D.},
  title   = {The Optimal Growth of Tropical Sea Surface Temperature Anomalies},
  journal = {Journal of Climate},
  volume  = {8},
  number  = {8},
  pages   = {1999--2024},
  year    = {1995},
  doi     = {10.1175/1520-0442(1995)008<1999:TOGOTS>2.0.CO;2}
}

@article{penland1996stochastic,
  author  = {Penland, C{\'e}cile},
  title   = {A Stochastic Model of {I}ndo{P}acific Sea Surface Temperature Anomalies},
  journal = {Physica D: Nonlinear Phenomena},
  volume  = {98},
  number  = {2--4},
  pages   = {534--558},
  year    = {1996},
  doi     = {10.1016/0167-2789(96)00124-8}
}

@article{huang2017extended,
  author    = {Boyin Huang and Peter W. Thorne and Viva F. Banzon and Tim Boyer and Gennady Chepurin and Jay H. Lawrimore and Matthew J. Menne and Thomas M. Smith and Russell S. Vose and Huai-Min Zhang},
  title     = {Extended {Reconstructed Sea Surface Temperature, Version 5 (ERSSTv5)}: Upgrades, Validations, and Intercomparisons},
  journal   = {Journal of Climate},
  volume    = {30},
  number    = {20},
  pages     = {8179--8205},
  year      = {2017},
  doi       = {10.1175/JCLI-D-16-0836.1}
}

@article{severson2019, author = {Severson, Kristen A. and Attia, Peter M. and Jin, Norman and Perkins, Nicholas and Jiang, Benben and Yang, Zi and Chen, Michael H. and Aykol, Muratahan and Herring, Patrick K. and Fraggedakis, Dimitrios and Bazant, Martin Z. and Harris, Stephen J. and Chueh, William C. and Braatz, Richard D.}, title = {Data-driven prediction of battery cycle life before capacity degradation}, journal = {Nature Energy}, volume = {4}, number = {5}, pages = {383-391}, year = {2019}, doi = {10.1038/s41560-019-0356-8}}

@article{attia2022knees,
  author    = {Peter M. Attia and Alexander Bills and Ferran Brosa Planella and Philipp Dechent and Gon\c{c}alo dos Reis and Matthieu Dubarry and Paul Gasper and Richard Gilchrist and Samuel Greenbank and David Howey and Ouyang Liu and Edwin Khoo and Yuliya Preger and Abhishek Soni and Shashank Sripad and Anna G. Stefanopoulou and Valentin Sulzer},
  title     = {Review---``Knees'' in {L}ithium-Ion Battery Aging Trajectories},
  journal   = {Journal of The Electrochemical Society},
  volume    = {169},
  number    = {6},
  pages     = {060517},
  year      = {2022},
  doi       = {10.1149/1945-7111/ac6d13}
}

@article{alphaevolve,
  author  = {Alexander Novikov and Ng{\^a}n V{\~u} and Marvin Eisenberger and Emilien Dupont and Po-Sen Huang and Adam Zsolt Wagner and Sergey Shirobokov and Borislav Kozlovskii and Francisco J. R. Ruiz and Abbas Mehrabian and M. Pawan Kumar and Abigail See and Swarat Chaudhuri and George Holland and Alex Davies and Sebastian Nowozin and Pushmeet Kohli and Matej Balog},
  title   = {{AlphaEvolve}: A coding agent for scientific and algorithmic discovery},
  journal = {arXiv preprint arXiv:2506.13131},
  year    = {2025},
  doi     = {10.48550/arXiv.2506.13131}
}

@article{romeraparedes2024funsearch,
  author  = {Bernardino Romera-Paredes and Mohammadamin Barekatain and Alexander Novikov and Matej Balog and M. Pawan Kumar and Emilien Dupont and Francisco J. R. Ruiz and Jordan S. Ellenberg and Pengming Wang and Omar Fawzi and Pushmeet Kohli and Alhussein Fawzi},
  title   = {Mathematical discoveries from program search with large language models},
  journal = {Nature},
  volume  = {625},
  number  = {7995},
  pages   = {468--475},
  year    = {2024},
  doi     = {10.1038/s41586-023-06924-6}
}

@misc{openevolve,
  author       = {Asankhaya Sharma},
  title        = {{OpenEvolve}: An Open-Source Evolutionary Coding Agent},
  year         = {2025},
  howpublished = {GitHub repository, \url{https://github.com/codelion/openevolve}},
  note         = {Open-source implementation of {AlphaEvolve}; now at \url{https://github.com/algorithmicsuperintelligence/openevolve}}
}

@inproceedings{liu2019darts,
  author    = {Hanxiao Liu and Karen Simonyan and Yiming Yang},
  title     = {{DARTS}: Differentiable Architecture Search},
  booktitle = {International Conference on Learning Representations (ICLR)},
  year      = {2019},
  url       = {https://openreview.net/forum?id=S1eYHoC5FX},
  note      = {arXiv:1806.09055}
}

@article{chaudhuri2021neurosymbolic,
  author  = {Swarat Chaudhuri and Kevin Ellis and Oleksandr Polozov and Rishabh Singh and Armando Solar-Lezama and Yisong Yue},
  title   = {Neurosymbolic Programming},
  journal = {Foundations and Trends in Programming Languages},
  volume  = {7},
  number  = {3},
  pages   = {158--243},
  year    = {2021},
  doi     = {10.1561/2500000049}
}

% ============================================================================
% ============================================================================
%
%                            A P P E N D I X
%
% ============================================================================
% ============================================================================
\newpage
\appendix

\section*{Appendix Roadmap}
\label{app:roadmap}
The appendix is organized in two parts: method and implementation (Appendices~\ref{app:proofs}--\ref{app:handcoded})
and per-experiment extended results in experiment order (Appendices~\ref{app:exp_a_ext}--\ref{app:exp_battery_ext}, Experiments~A--L). The appendix letters are offset from the experiment letters (for example,
Appendix~\ref{app:exp_a_ext} documents Experiment~A), so locate per-experiment material by the ``Experiment~X:'' section
titles or via the map below.

\begin{table}[h]
\centering
\small
\begin{tabular}{@{}llp{0.52\linewidth}@{}}
\toprule
Appendix & Exp. & Content \\
\midrule
\ref{app:proofs}      & n/a & Formal semantics and proofs \\
\ref{app:compiler}    & n/a & Compiler implementation details \\
\ref{app:backends}    & n/a & Multi-backend architecture \\
\ref{app:source_eval} & n/a & Self-hosted evaluator source \\
\ref{app:handcoded}   & n/a & Hand-coded differentiable PyTorch interpreter baseline \\
\ref{app:exp_a_ext}   & A   & Extended results, ablations, full baselines (incl.\ S3, \ref{app:s3}) \\
\ref{app:exp_b_ext}   & B   & Extended results \\
\ref{app:exp_c_ext}   & C   & Extended results \\
\ref{app:exp_lim_enso_ext} & LIM & Real-data ENSO case study: full results and extended methodology \\
\ref{app:exp_d}       & D   & Structural search and the cost of interpretation \\
\ref{app:exp_e}       & E   & Discrete--continuous operator recovery \\
\ref{app:exp_f}       & F   & LLM-in-the-loop model discovery \\
\ref{app:exp_g}       & G   & Runtime compositional modeling \\
\ref{app:exp_h_ext}   & H   & Extended results (throughput, scaling) \\
\ref{app:exp_i}       & I   & Composite-ecosystem calibration \\
\ref{app:exp_j}       & J   & Program-space calibration (DMCI vs.\ compile-each-program) \\
\ref{app:exp_fluzoo_ext} & K & FluZoo: influenza co-search and the fitness-fidelity stress test \\
\ref{app:exp_battery_ext} & L & Battery capacity-fade co-search: synthetic structure recovery + real Severson forecast \\
\ref{app:ablation}    & n/a & Design-choice ablations and failure-mode summary \\
\bottomrule
\end{tabular}
\end{table}

% ============================================================================
\section{Formal Semantics and Proofs}
\label{app:proofs}
% ============================================================================

\subsection{Source Language}

We define $\mathcal{L}_{\textrm{DMCI}}$, the subset of Scheme compiled by the system.

\begin{definition}[Syntax]
Expressions are given by:
\[
e ::= x \mid c \mid \lambda x.\,e \mid e_1\;e_2 \mid \texttt{if}\;e_0\;e_1\;e_2 \mid \texttt{letrec}\;x{=}e_1\;\texttt{in}\;e_2 \mid \texttt{cons}(e_1,e_2) \mid \texttt{car}(e) \mid \texttt{cdr}(e) \mid \texttt{op}(e_1,\ldots,e_k)
\]
where $x$ ranges over variables, $c$ over constants (including learnable parameters $\theta_i \in \mathbb{R}$), and $\texttt{op}$ over primitives: arithmetic ($+$, $-$, $*$, $/$, $\texttt{pow}$, $\min$, $\max$), transcendental ($\sin$, $\cos$, $\exp$, $\log$, $\sqrt{\cdot}$, $|\cdot|$), and comparison ($=$, $<$, $>$, $\leq$, $\geq$) primitives. This is the operator set the compiler accepts (the \emph{direct}-compilation path); the self-hosted evaluator of Appendix~\ref{app:source_eval} that realizes the \emph{interpreted} (DMCI) path dispatches all of these, so a program reaching the interpreted path must use operators in the evaluator's dispatch table.

The full compiler also accepts \texttt{quote}, \texttt{begin}, \texttt{define}, \texttt{let}, and \texttt{cond}. These reduce to the core syntax above: \texttt{cond} is desugared to nested \texttt{if} during parsing, while \texttt{let}, \texttt{begin}, \texttt{define}, and \texttt{quote} are compiled by dedicated rules semantically equivalent to, respectively, immediate application, nested-\texttt{let} sequencing, \texttt{let} binding (\texttt{letrec} for recursive definitions), and literal construction. The theorems below are stated over the core syntax, and the proof of Theorem~\ref{thm:correctness} notes how each surface form transfers.
\end{definition}

\begin{definition}[Semantic domains]
\label{def:domains}
The evaluation state consists of:
\begin{itemize}[leftmargin=2em,topsep=2pt,itemsep=1pt]
    \item \textbf{Values}: tagged vectors $v = (t, p) \in T \times \mathbb{R}^d$ where $T = \{0,1\}^{10}$ is a one-hot type tag and $p \in \mathbb{R}^4$ is the payload.
    \item \textbf{Environments}: $\rho : \textrm{Var} \to \textrm{Addr}$, mapping variable names to heap addresses.
    \item \textbf{Heap}: $H : \textrm{Addr} \to \textrm{Value}$, a dictionary from integer addresses to tagged-value tensors.
    \item \textbf{Distinguished input}: $v_{\textrm{in}} \in \textrm{Value}$, the runtime argument supplied to the evaluator alongside the (quoted) program.
\end{itemize}
\end{definition}

\begin{definition}[Big-step operational semantics]
\label{def:bigstep}
Evaluation is defined by judgments $\langle e, \rho, H \rangle \Downarrow \langle v, H' \rangle$ with lazy reduction for conditionals: in $\texttt{if}\;e_0\;e_1\;e_2$, only the taken branch is evaluated. Tail calls in recursive \texttt{letrec} bindings are evaluated iteratively via the four-level TCO strategy, preserving $O(1)$ stack usage. The implementation realizes this lazy semantics for recursion- and conditional-dependent control flow, but for combinational (non-recursive) conditionals whose branches are both total it uses an eager differentiable MUX, $\texttt{if}\;e_0\;e_1\;e_2 \mapsto \mathbb{1}[e_0 \neq 0]\,e_1 + (1 - \mathbb{1}[e_0 \neq 0])\,e_2$, which evaluates both branches; since both branches are total there, the MUX denotes the same function as the lazy rule, so Theorems~\ref{thm:correctness} and~\ref{thm:gradient} cover both forms.
\end{definition}

\subsection{Proof of Compilation Correctness (Theorem~\ref{thm:correctness})}

By structural induction on $e$. The base cases (constants, variables) are immediate: the compiled graph performs the same tagged-value lookup or construction. For $\texttt{if}\;e_0\;e_1\;e_2$: lazy evaluation ensures only the taken branch is compiled into the active graph, matching the source semantics. For $\texttt{letrec}$: TCO levels 1--3 produce loop structures that compute the same fixed point as the recursive source definition; level~4 (trampoline) iterates dynamically under the loop invariant that the $i$-th driver iteration corresponds to the $i$-th source tail-call reduction step; the pending continuation is encoded by the \texttt{\_TailCall} sentinel (closure plus argument values) returned from the tail position, so the trampoline preserves the source evaluation order and terminates exactly when the source recursion does. For $\lambda$ and application: the compiled closure captures the same environment bindings via heap addresses. For $\texttt{cons}$/$\texttt{car}$/$\texttt{cdr}$: Proposition~\ref{lem:heap} establishes that the dictionary heap preserves values exactly. The remaining surface forms reduce to these cases by the same argument: \texttt{cond} is desugared to nested \texttt{if} during parsing, and \texttt{let}, \texttt{begin}, \texttt{define}, and \texttt{quote} are compiled by dedicated rules semantically equivalent to immediate application, nested-\texttt{let} sequencing, \texttt{let}/\texttt{letrec} binding, and literal construction, so the inductive hypothesis transfers to them unchanged. Here ``to floating-point precision'' means that the compiled graph executes the same sequence of IEEE-754 operations as the source evaluation, with no algebraic reassociation, so the two agree up to identical rounding; empirically the compiled-interpreter and direct-compilation runs are bit-for-bit identical (Figure~\ref{fig:c08_loss}, C08). \qed

\subsection{Proof of Gradient Correctness (Theorem~\ref{thm:gradient})}

On the trace-constant region $\Theta_{\textrm{tc}}$, the compiled evaluator implements a fixed composition of operations (arithmetic primitives, tagged-value construction, heap allocation and read; Proposition~\ref{lem:heap}). By Theorem~\ref{thm:correctness}, this composition computes the same function as the source semantics. Heap operations introduce no differentiable dependence on addresses, which are detached integer indices, and, being write-once with return-by-reference, preserve the identity and version of every stored payload tensor, so the autograd chain stays valid and all differentiable dependence runs through the retrieved value tensors (Proposition~\ref{lem:heap}). The output projection $\pi$ is a fixed linear coordinate selection of the numeric payload slot, globally smooth with constant Jacobian; post-composing it after the a.e.-differentiable evaluator neither enlarges the non-differentiable set nor alters the a.e.\ gradient, and confines differentiation to the numeric payload, never type tags, symbols, closures, pair structure, or the detached addresses in $p_2..p_4$.

\textbf{Smoothness assumptions.} The primitives $+$, $-$, $*$, $/$, $\sin$, $\cos$, $\exp$, $\texttt{pow}$ are analytic on their domains. Three primitives require qualification:
\begin{itemize}[leftmargin=2em,topsep=2pt,itemsep=1pt]
    \item $|\cdot|$, $\min$, $\max$ are non-smooth at isolated points (e.g., $|x|$ at $x=0$). These kinks form measure-zero sets in parameter space and are folded into the boundary set $\mathbb{R}^n \setminus \Theta_{\textrm{tc}}$ below.
    \item $\sqrt{\cdot}$ and $\log$ are implemented with a clamp at $10^{-8}$: e.g., $\texttt{sqrt}(x) = \sqrt{\max(x, 10^{-8})}$. The theorem concerns the function the compiler actually \emph{denotes} (with these clamped primitives); on the clamped region $\{x \leq 10^{-8}\}$ that function is genuinely constant, so its true gradient is zero and AD computes it exactly; this is \emph{not} a full-measure exception to correctness. The caveat is interpretive: where the clamp is active, the denoted function departs from ideal $\sqrt{\cdot}$/$\log$, so the (correct) zero gradient is uninformative for recovering parameters of the ideal model. None of the experimental programs evaluate $\sqrt{\cdot}$ or $\log$ in the clamped region during optimization.
\end{itemize}

\textbf{Matrix primitives.} The real-data case studies use the interpreter's batched tensor surface ($\texttt{matmul}$, $\texttt{matvec}$, $\texttt{transpose}$, $\texttt{dot}$, $\det$, $\texttt{inv}$, $\texttt{logdet}$). The first five are polynomial in the matrix entries and hence globally analytic; $\texttt{inv}$ and $\texttt{logdet}$ are analytic on the open, full-measure sets $\{\det \neq 0\}$ and $\{\det > 0\}$ respectively, with the singular locus $\{\det = 0\}$ a non-trivial algebraic hypersurface of measure zero. These primitives are therefore $\omega$-PAP, with $\{\det = 0\}$ folded into $\mathbb{R}^n \setminus \Theta_{\textrm{tc}}$ exactly as the scalar kinks above, so the almost-everywhere argument extends to them unchanged under the same genericity assumption. In the LIM-ENSO study the per-step positive-definiteness gate (G2, Appendix~\ref{app:exp_lim_enso_ext}) further certifies $\det S > 0$ at every step, so $\texttt{inv}$ and $\texttt{logdet}$ are evaluated strictly inside the analytic region.

Comparison primitives ($=$, $<$, $>$, $\leq$, $\geq$) are Boolean-valued and piecewise constant: their true gradient is zero almost everywhere and AD computes it exactly. They occur only in branch conditions, so a learnable parameter feeding a comparison places the corresponding branch boundary in the measure-zero set $\mathbb{R}^n \setminus \Theta_{\textrm{tc}}$ defined below (cf.\ the S3 experiment).

With these qualifications, the composition is piecewise-smooth on $\Theta_{\textrm{tc}}$ (smooth except at a measure-zero set of primitive kinks), and standard reverse-mode AD computes the correct gradient almost everywhere on $\Theta_{\textrm{tc}}$.

The complement $\mathbb{R}^n \setminus \Theta_{\textrm{tc}}$ consists of parameter values where a branch decision changes; these are exactly the boundaries of the PAP partition in the sense of \citet{huot2023omegapap}. This set has Lebesgue measure zero because each branch boundary is defined by a \emph{non-trivial} (piecewise-)analytic equation (e.g., $\theta_i = c$). Degenerate predicates such as \texttt{(< x x)}, whose defining function is identically zero, are excluded by the genericity of well-formed programs. The zero set of such an equation is therefore a measure-zero hypersurface, and a \emph{countable} union of measure-zero sets is measure zero; so the bound holds even when a parameter-dependent recursion depth induces countably many branch boundaries. \qed

\subsection{Proof of Composition Preservation (Theorem~\ref{thm:composition})}

We take composition on outputs, $e_1 : \mathbb{R}^n \to \mathbb{R}^m$ and $e_2 : \mathbb{R}^m \to \mathbb{R}^k$; when $e_1$ and $e_2$ share parameters $\theta$, substitute $e_1$'s output and read $e_2$ as a function of $(\theta, e_1(\theta))$, for which the same closure argument applies with the $\omega$-PAP graph map $\theta \mapsto (\theta, e_1(\theta))$ in place of $e_1$. The $\omega$-PAP class is closed under composition~\citep{huot2023omegapap}: the composition of two piecewise-analytic functions is piecewise-analytic on a refined partition, and a.e.-differentiability is preserved. Concretely, the trace of $e_2(e_1(\theta))$ changes only where $e_1$'s own trace changes or where $e_1(\theta)$ crosses a branch boundary of $e_2$; the exceptional set is therefore $B(e_1) \cup e_1^{-1}(B(e_2))$, writing $B(e_i) = \mathbb{R}^n \setminus \Theta_{\textrm{tc}}^{(i)}$ (and reading $e_1^{-1}(B(e_2))$ through the shared parameterization when $e_1$ and $e_2$ share $\theta$). The first term is measure zero by Theorem~\ref{thm:gradient}; the second is the boundary set of the refined PAP partition, which the $\omega$-PAP closure result establishes is also measure zero, a structural property of the composite $\omega$-PAP function, \emph{not} a Lipschitz-preimage argument (the preimage of a null set under a Lipschitz map need not be null). On the complement (full measure, and open since each $\Theta_{\textrm{tc}}^{(i)}$ is open and $e_1$ is continuous), the chain rule applies and reverse-mode AD computes the correct gradient. \qed

\subsection{Scoping Conditions}

Three points prevent over-interpretation:
\begin{enumerate}[leftmargin=2em,topsep=2pt,itemsep=1pt]
    \item \textbf{What is guaranteed}: gradient correctness on the full-measure set $\Theta_{\textrm{tc}}$ where the discrete execution trace is constant in $\theta$. $\Theta_{\textrm{tc}}$ is open because each branch condition is a continuous function of $\theta$, so a sufficiently small perturbation of $\theta$ does not change any branch outcome; it has full Lebesgue measure because its complement is a countable union of analytic hypersurfaces (see proof above). Under any initialization distribution absolutely continuous with respect to Lebesgue measure, this measure-zero boundary set is avoided at initialization with probability~1; empirically, trajectories remained in $\Theta_{\textrm{tc}}$ throughout training on all 171 tested (program, seed) pairs (Experiments~A--C), though we do not prove they can never reach it.
    \item \textbf{What is not guaranteed}: the gradient at branch boundaries (measure zero). The compiled program inherits the source program's non-differentiability.
    \item \textbf{Parameter-dependent recursion depth}: programs where a learnable parameter affects recursion depth have correct gradients almost everywhere. The boundary is a measure-zero hypersurface.
\end{enumerate}

The empirical trajectory-equivalence results ($< 7 \times 10^{-7}$ maximum loss difference across 171 (program, seed) pairs) are consistent with Theorem~\ref{thm:gradient}: on every tested trajectory, the parameters remained in $\Theta_{\textrm{tc}}$, and the compiled and source gradients agreed.

\subsection{Heap Correctness}

\begin{proposition}[Heap preservation of autograd chains]
\label{lem:heap}
The dictionary-backed heap preserves the autograd computation graph: for any sequence of $\texttt{cons}$, $\texttt{car}$, and $\texttt{cdr}$ operations, the tensor retrieved by $\texttt{car}$/$\texttt{cdr}$ is the identical Python object stored by $\texttt{cons}$.
\end{proposition}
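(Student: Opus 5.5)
We argue by induction on the length of an arbitrary sequence of heap operations, keeping the following invariant. After each step, for every allocated address $a$ the entry $H[a]$ is the very Python object that was stored when $a$ was allocated. That object has not been mutated in place, so its autograd version counter is unchanged. And the object's \texttt{grad\_fn} chain is exactly the chain recorded when it was produced. The base case is the empty heap, where the invariant holds vacuously.

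The inductive step treats the three operations in turn.
\begin{itemize}
\item For \texttt{cons}, the implementation draws a fresh integer key from a monotone counter, so it never overwrites an existing key. It stores the argument tensors $v_1, v_2$ directly, with no \texttt{clone}, \texttt{copy\_}, or indexed assignment into a preallocated buffer. It then returns a pair value whose payload holds the two addresses as detached integers. Existing entries are untouched, and the new entries are literally $v_1$ and $v_2$, so the invariant is preserved.
\item For \texttt{car} and \texttt{cdr}, the implementation reads the address out of the payload via \texttt{.item()}, which yields a Python integer with no autograd dependence. It then returns \texttt{H[a]} by dictionary lookup. Python's \texttt{dict.\_\_getitem\_\_} returns a reference to the stored object rather than a copy. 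Hence \texttt{car}$(\texttt{cons}(v_1,v_2))$ \texttt{is} $v_1$, and likewise for \texttt{cdr}. The heap itself is unchanged.
\end{itemize}

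The step that needs the most care is ruling out in-place mutation anywhere between store and retrieval. Identity of the object alone does not guarantee an intact autograd chain: if some later operation wrote into that tensor in place, its version counter would bump and backward would fail. This is exactly the failure mode of the preallocated buffer. The plan is to discharge this using purity of $\mathcal{L}_{\textrm{DMCI}}$. The language excludes \texttt{set!} and any mutation primitive, so no source construct compiles to an in-place tensor operation. In addition, every primitive lowering (arithmetic, tag wrap and unwrap, and the differentiable MUX of Definition~\ref{def:bigstep}) is implemented out of place. It follows that heap slots are write-once and stored tensors are never written. Identity plus unchanged version then gives exactly the original autograd node. Consequently, gradients flowing into a value retrieved by \texttt{car}/\texttt{cdr} propagate to the producers of the value originally passed to \texttt{cons}, and the detached addresses contribute no differentiable dependence, as used in the proof of Theorem~\ref{thm:gradient}.
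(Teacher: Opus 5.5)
Your proof is correct and follows essentially the same route as the paper's. Both arguments use purity of $\mathcal{L}_{\textrm{DMCI}}$ to make heap slots write-once, have \texttt{cons} allocate fresh keys and store the tensors directly, have \texttt{car}/\texttt{cdr} return the stored reference by key lookup on detached integer addresses, and rely on the absence of in-place writes to keep the version counter and \texttt{grad\_fn} chain intact. Your version is slightly more careful: you require every primitive lowering to be out-of-place, rather than inferring an unchanged version counter from ``no slot is overwritten'' alone, which the paper addresses only in its remark about the clamped $\sqrt{\cdot}$ and $\log$.
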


\begin{proof}
Since $\mathcal{L}_{\textrm{DMCI}}$ is a pure subset, having no mutation forms (\texttt{set!}, \texttt{set-car!}, \texttt{set-cdr!}), the heap's write operation is never invoked on the differentiable path, so every slot is written exactly once at allocation. Three properties then suffice: (i)~each $\texttt{cons}(a, b)$ allocates fresh keys and stores tensors as dictionary values, with no in-place mutation; (ii)~$\texttt{car}$ and $\texttt{cdr}$ retrieve by key lookup, returning the original tensor reference; (iii)~because no slot is overwritten, PyTorch's version counter is never incremented for a stored value tensor, and the autograd graph remains valid. Heap \emph{addresses} are detached integer scalars carried in the tag payload and never require gradients; the autograd chain runs through the stored value tensors, which the heap returns by reference. These three properties refer only to dictionary allocation, key lookup, and the absence of overwrites, and so are independent of the shape of the stored tensor: the same write-once, return-by-reference discipline governs both the $14$-dimensional tagged values of the scalar-and-pair core and the native matrix/vector payloads of the batched tensor surface (\texttt{matmul}, \texttt{inv}, \texttt{logdet}, etc.) used by the LIM-ENSO study, so a $D{\times}D$ operator stored by one allocation and retrieved by reference preserves its autograd chain exactly as a scalar payload does.

Note: the \texttt{clamp} operations in $\sqrt{\cdot}$ and $\log$ (see Theorem~\ref{thm:gradient} proof) create new tensors rather than mutating existing ones, so they do not violate this property.
\end{proof}

% ============================================================================
\section{Compiler Implementation Details}
\label{app:compiler}
% ============================================================================

\subsection{What Self-Hosting Exercises}

The compiled evaluator exercises every major language feature simultaneously on arbitrary input programs:
\begin{itemize}[leftmargin=2em,topsep=2pt,itemsep=1pt]
    \item \textbf{Environment model}: closures capture variables from enclosing scopes via heap-allocated environment lists
    \item \textbf{Recursive evaluation}: the evaluator calls itself to evaluate subexpressions
    \item \textbf{Data structures}: the input program is represented as lists of symbols and numbers
    \item \textbf{Pattern matching}: \texttt{cond}-based dispatch on expression type (number, symbol, \texttt{if}, \texttt{lambda}, application)
    \item \textbf{Higher-order functions}: the evaluator applies closures to arguments
    \item \textbf{Symbol manipulation}: variable lookup by name in an association-list environment
\end{itemize}

The compiled evaluator correctly evaluates arithmetic, \texttt{let}-bindings, conditionals, \texttt{lambda}/application, and recursion (\texttt{letrec}). Outputs match direct compilation for all test cases.

\subsection{What the Extension Requires}

Extending from first-order arithmetic to a self-hosting Scheme subset requires four capabilities:

\paragraph{First-class functions.} Functions must be values: passed as arguments, returned from functions, stored in data structures. This requires a closure representation that captures the defining environment. In a differentiable setting, gradients must flow through closure creation, environment capture, and application.

\paragraph{General recursion.} Not just tail-recursive loops, but tree recursion, mutual recursion, and recursion through closures. The computation graph must support recursive evaluation without infinite expansion, requiring lazy evaluation of conditional branches.

\paragraph{Heap-allocated data structures.} Pairs, lists, symbols, booleans, and closures as distinct runtime types. This requires the tagged value representation and a heap that preserves autograd chains.

\paragraph{Proper tail calls.} R7RS Scheme requires tail calls to execute in $O(1)$ space. With closures, this means handling tail calls through dynamically determined function values, requiring a trampoline-based evaluation strategy.

\subsection{Tagged Value Representation}

All runtime values are represented as $d$-dimensional vectors ($d = $ \texttt{TAG\_DIM} $+$ \texttt{PAYLOAD\_DIM}, currently 14):
\begin{equation}
    v = [\underbrace{t_1, \ldots, t_{10}}_{\text{type one-hot}} \;|\; \underbrace{p_1, \ldots, p_4}_{\text{payload}}]
\end{equation}
The type field uses one-hot encoding over: nil, boolean, integer, float, character, symbol, pair, string, closure, vector. The payload encodes the value: for numbers, the numeric value in $p_1$; for pairs, heap addresses of car and cdr; for closures, function identity and environment address.

This representation is differentiable: gradients flow through payload slots while type tags act as routing signals. The tags are discrete and non-differentiable, but this does not impede gradient flow for constant learning because program structure (and therefore the operation sequence) is determined by the fixed program, not by learnable constants.

\subsection{Heap Architecture}

The heap is a dictionary-backed store:
\begin{equation}
    \texttt{heap}: \mathbb{Z}^+ \to \text{Tensor}
\end{equation}
where each \texttt{cons} allocates two fresh integer keys and stores tensors directly. Unlike a pre-allocated tensor buffer (where in-place mutation via \texttt{storage[addr] = val} breaks autograd version tracking after many allocations), the dictionary store holds references with clean gradient chains.

\subsection{Four-Level Tail-Call Optimization}

\begin{enumerate}[leftmargin=2em,topsep=2pt,itemsep=1pt]
    \item \textbf{Self-TCO}: Single-binding \texttt{letrec} with self-calls in tail position $\to$ compiled loop.
    \item \textbf{Mutual TCO}: Multi-binding \texttt{letrec} with cross-calls in tail position $\to$ dispatch loop.
    \item \textbf{Per-binding TCO}: Multi-binding \texttt{letrec} where mutual TCO fails $\to$ per-binding inner loops.
    \item \textbf{Trampoline}: Dynamic dispatch through closures $\to$ \texttt{\_TailCall} sentinel with iterative driver.
\end{enumerate}

\subsection{Evaluation Strategies}

\begin{itemize}[leftmargin=2em,topsep=2pt,itemsep=1pt]
    \item \textbf{Eager}: Topological traversal evaluating all nodes. Used for top-level expressions and loop bodies.
    \item \textbf{Lazy}: Demand-driven from the root. Only evaluates the taken branch of conditionals. Essential for recursive functions.
    \item \textbf{Trampoline}: Lazy evaluation with tail-position propagation.
\end{itemize}

% ============================================================================
\section{Multi-Backend Architecture}
\label{app:backends}
% ============================================================================

The same Scheme source compiles to four execution backends through a shared \texttt{ComputeGraph} IR:

\begin{itemize}[leftmargin=2em,topsep=2pt,itemsep=1pt]
    \item \textbf{PyTorch}: \texttt{torch.Tensor} values, \texttt{torch.autograd} for gradients. Primary training backend.
    \item \textbf{JAX}: \texttt{jax.numpy} arrays, \texttt{jax.grad} for autograd, JIT compilation.
    \item \textbf{NumPy}: \texttt{numpy} arrays, no autograd. CPU inference only.
    \item \textbf{CuPy}: \texttt{cupy} arrays, GPU acceleration without autograd.
\end{itemize}

All backends share tagged value semantics, heap implementation, and evaluation strategies. On the scalar, straight-line direct-compile path (control flow independent of the differentiated constants), PyTorch and JAX produce identical gradients, verified against the NumPy forward-only oracle by the full test suite (1{,}225 tests). The differentiable meta-circular interpreter, by contrast, is PyTorch-only: its data-dependent dispatch and variable-length trampolined loop are differentiated by PyTorch's define-by-run tape directly, whereas \texttt{jax\_grad}/\texttt{jax\_value\_and\_grad} (\texttt{backend/jax\_backend.py}) are restricted to programs whose control flow does not depend on the differentiated variables, because \texttt{lax.while\_loop} is not reverse-mode differentiable. Porting the interpreter to JAX is future work requiring a fixed-length \texttt{lax.scan}+masking rewrite of the trampoline; NumPy and CuPy are forward-only, with the NumPy backend serving as the correctness oracle.

% ============================================================================
\section{Self-Hosted Evaluator Source}
\label{app:source_eval}
% ============================================================================

Appendix~D gives the complete meta-circular evaluator compiled by the system
(\texttt{bootstrap/compiler.scm}, 288 lines). This evaluator is an ordinary Scheme program written in
the same source-language subset supported by the compiler. When compiled to a differentiable PyTorch
module, it becomes the reusable DMCI execution engine used in the experiments: object-level Scheme
programs are supplied to this compiled evaluator as data, and gradients flow through the evaluator to
learnable constants bound in those programs.

The purpose of this appendix is not merely to show that differentiable interpretation is possible;
Appendix~\ref{app:handcoded} provides a hand-coded PyTorch interpreter baseline that already
demonstrates that. The purpose here is to show the stronger self-hosting claim: the differentiable
interpreter is not manually implemented in the host autodiff framework, but obtained by compiling the
language's own evaluator. This relocates the operational semantics into source-language code and
demonstrates that the compiler is expressive enough to compile an evaluator with environments,
closures, recursive evaluation, dispatch, and heap-allocated data structures.

The evaluator supports environment lookup via association lists, \texttt{if}, \texttt{cond},
\texttt{let}, \texttt{letrec}, \texttt{lambda}, \texttt{begin}, \texttt{quote}, closures,
\texttt{define} with function sugar, and 52 built-in primitives: variadic arithmetic (\texttt{+},
\texttt{-}, \texttt{*}, \texttt{/}), \texttt{min}, \texttt{max}, \texttt{modulo}, \texttt{remainder},
comparisons (\texttt{=}, \texttt{<}, \texttt{>}, \texttt{<=}, \texttt{>=}), list operations,
transcendental functions, and batched tensor vector/matrix operations (\texttt{dot}, \texttt{matmul},
\texttt{det}, \texttt{inv}, etc.). The theoretical guarantees are stated over the scalar-and-pair
core; the analyticity of the vector/matrix operations (off the measure-zero $\{\det=0\}$ locus) and
their heap storage are covered by the extended smoothness analysis and the representation-independence
of Proposition~\ref{lem:heap} in Appendix~\ref{app:proofs}.

Thus, Appendix~D is the artifact behind the paper's central DMCI claim: compile the evaluator once,
then use that fixed differentiable evaluator to execute and optimize many runtime-supplied programs
without recompiling the interpreter or hand-deriving gradients for each program.

\begin{lstlisting}[language=Lisp, basicstyle=\ttfamily\scriptsize, numbers=left, numberstyle=\tiny\color{gray}, xleftmargin=2em]
;;; Self-hosted Scheme evaluator -- full bootstrap
;;;
;;; Compiles to a differentiable PyTorch program via the neural_compiler.
;;; Supports self-hosting: can evaluate its own source code, producing
;;; a working evaluator that evaluates Scheme programs.
;;;
;;; Supported: numbers, booleans, symbols, quote, if, cond, let, letrec, lambda,
;;;            define, begin, cons/car/cdr, list, null?/pair?/number?/boolean?/symbol?/eq?/not,
;;;            +/-/*//, =/</>/<=/>=, sin/cos/exp/sqrt/log/abs/pow

;;; --- Environment ---

(define (env-lookup name env)
  (cond
    ((null? env) 0)
    ((eq? (car (car env)) name) (cdr (car env)))
    (#t (env-lookup name (cdr env)))))

(define (env-extend name val env)
  (cons (cons name val) env))

(define (env-extend-many names vals env)
  (if (null? names)
    env
    (env-extend-many
      (cdr names)
      (cdr vals)
      (env-extend (car names) (car vals) env))))

;;; --- Define processing ---

(define (define? form)
  (if (pair? form) (eq? (car form) 'define) #f))

(define (define-name form)
  (let ((target (car (cdr form))))
    (if (pair? target) (car target) target)))

(define (define-value form)
  (let ((target (car (cdr form))))
    (if (pair? target)
      (list 'lambda (cdr target) (car (cdr (cdr form))))
      (car (cdr (cdr form))))))

(define (collect-defines forms)
  (if (null? forms)
    '()
    (if (define? (car forms))
      (cons (cons (define-name (car forms))
                  (define-value (car forms)))
            (collect-defines (cdr forms)))
      (collect-defines (cdr forms)))))

(define (last-form forms)
  (if (null? (cdr forms))
    (car forms)
    (last-form (cdr forms))))

(define (build-defined-env defs env)
  (if (null? defs)
    env
    (let ((name (car (car defs)))
          (val-expr (cdr (car defs))))
      (if (pair? val-expr)
        (if (eq? (car val-expr) 'lambda)
          (build-defined-env (cdr defs)
            (env-extend name
              (list 'defined-fn (car (cdr val-expr)) (car (cdr (cdr val-expr))))
              env))
          (build-defined-env (cdr defs)
            (env-extend name (scheme-eval val-expr env) env)))
        (build-defined-env (cdr defs)
          (env-extend name (scheme-eval val-expr env) env))))))

;;; --- Letrec processing ---
;;; Lexical recursive bindings. Lambda bindings are registered as 'defined-fn
;;; entries (like top-level defines), so a recursive call resolves the binding
;;; through the same mechanism that makes recursive `define` work; non-lambda
;;; bindings are evaluated immediately. All differentiable value flow goes
;;; through the same primitives as the rest of the evaluator, so gradients are
;;; preserved exactly as for recursive `define`.

(define (build-letrec-env bindings env)
  (if (null? bindings)
    env
    (let ((binding (car bindings)))
      (let ((name (car binding))
            (val-expr (car (cdr binding))))
        (if (if (pair? val-expr) (eq? (car val-expr) 'lambda) #f)
          (build-letrec-env (cdr bindings)
            (env-extend name
              (list 'defined-fn (car (cdr val-expr)) (car (cdr (cdr val-expr))))
              env))
          (build-letrec-env (cdr bindings)
            (env-extend name (scheme-eval val-expr env) env)))))))

;;; --- Program evaluation ---

(define (scheme-eval-program forms env)
  (let ((defs (collect-defines forms))
        (body (last-form forms)))
    (let ((prog-env (build-defined-env defs env)))
      (scheme-eval body prog-env))))

;;; --- Evaluator ---

(define (scheme-eval expr env)
  (cond
    ((number? expr) expr)
    ((boolean? expr) expr)
    ((null? expr) expr)
    ((symbol? expr) (env-lookup expr env))
    ((pair? expr)
     (let ((head (car expr)))
       (cond
         ((eq? head 'quote) (car (cdr expr)))
         ((eq? head 'if)
          (let ((test-val (scheme-eval (car (cdr expr)) env)))
            (if test-val
              (scheme-eval (car (cdr (cdr expr))) env)
              (scheme-eval (car (cdr (cdr (cdr expr)))) env))))
         ((eq? head 'cond)
          (eval-cond (cdr expr) env))
         ((eq? head 'let)
          (let ((bindings (car (cdr expr)))
                (body (car (cdr (cdr expr)))))
            (let ((new-env (eval-let-bindings bindings env)))
              (scheme-eval body new-env))))
         ((eq? head 'letrec)
          (let ((bindings (car (cdr expr)))
                (body (car (cdr (cdr expr)))))
            (scheme-eval body (build-letrec-env bindings env))))
         ((eq? head 'lambda)
          (let ((params (car (cdr expr)))
                (body (car (cdr (cdr expr)))))
            (list 'closure params body env)))
         ((eq? head 'begin)
          (eval-begin (cdr expr) env))
         (#t (eval-apply head (cdr expr) env)))))
    (#t 0)))

(define (eval-cond clauses env)
  (if (null? clauses)
    #f
    (let ((clause (car clauses)))
      (if (eq? (car clause) 'else)
        (scheme-eval (car (cdr clause)) env)
        (let ((test-val (scheme-eval (car clause) env)))
          (if test-val
            (scheme-eval (car (cdr clause)) env)
            (eval-cond (cdr clauses) env)))))))

(define (eval-let-bindings bindings env)
  (if (null? bindings)
    env
    (let ((binding (car bindings)))
      (let ((name (car binding))
            (val (scheme-eval (car (cdr binding)) env)))
        (eval-let-bindings (cdr bindings) (env-extend name val env))))))

(define (eval-begin exprs env)
  (if (null? (cdr exprs))
    (scheme-eval (car exprs) env)
    (begin
      (scheme-eval (car exprs) env)
      (eval-begin (cdr exprs) env))))

;;; --- Variadic arithmetic folds ---
;;; (+ a b c ...) and (* ...) fold over all args; (- ...) and (/ ...) are left-
;;; associative with unary forms. These replace the old strictly-binary clauses,
;;; which silently dropped the 3rd and later arguments. All folds are tail-
;;; recursive and use only the host (directly-compiled) primitives + - * /.

(define (sum-rest acc args)
  (if (null? args) acc (sum-rest (+ acc (car args)) (cdr args))))
(define (sum-args args) (sum-rest 0 args))

(define (prod-rest acc args)
  (if (null? args) acc (prod-rest (* acc (car args)) (cdr args))))
(define (prod-args args) (prod-rest 1 args))

(define (sub-rest acc args)
  (if (null? args) acc (sub-rest (- acc (car args)) (cdr args))))
(define (diff-args args)
  (if (null? (cdr args)) (- 0 (car args)) (sub-rest (car args) (cdr args))))

(define (div-rest acc args)
  (if (null? args) acc (div-rest (/ acc (car args)) (cdr args))))
(define (quot-args args)
  (if (null? (cdr args)) (/ 1 (car args)) (div-rest (car args) (cdr args))))

(define (eval-apply func-expr arg-exprs env)
  (let ((func (scheme-eval func-expr env))
        (args (eval-args arg-exprs env)))
    (cond
      ((eq? func-expr '+) (sum-args args))
      ((eq? func-expr '-) (diff-args args))
      ((eq? func-expr '*) (prod-args args))
      ((eq? func-expr '/) (quot-args args))
      ((eq? func-expr '=) (= (car args) (car (cdr args))))
      ((eq? func-expr '<) (< (car args) (car (cdr args))))
      ((eq? func-expr '>) (> (car args) (car (cdr args))))
      ((eq? func-expr '<=) (not (> (car args) (car (cdr args)))))
      ((eq? func-expr '>=) (not (< (car args) (car (cdr args)))))
      ((eq? func-expr 'cons) (cons (car args) (car (cdr args))))
      ((eq? func-expr 'car) (car (car args)))
      ((eq? func-expr 'cdr) (cdr (car args)))
      ((eq? func-expr 'null?) (null? (car args)))
      ((eq? func-expr 'pair?) (pair? (car args)))
      ((eq? func-expr 'number?) (number? (car args)))
      ((eq? func-expr 'boolean?) (boolean? (car args)))
      ((eq? func-expr 'symbol?) (symbol? (car args)))
      ((eq? func-expr 'eq?) (eq? (car args) (car (cdr args))))
      ((eq? func-expr 'not) (not (car args)))
      ((eq? func-expr 'list) args)
      ((eq? func-expr 'sin) (sin (car args)))
      ((eq? func-expr 'cos) (cos (car args)))
      ((eq? func-expr 'exp) (exp (car args)))
      ((eq? func-expr 'sqrt) (sqrt (car args)))
      ((eq? func-expr 'log) (log (car args)))
      ((eq? func-expr 'abs) (abs (car args)))
      ((eq? func-expr 'pow) (pow (car args) (car (cdr args))))
      ((eq? func-expr 'min) (min (car args) (car (cdr args))))
      ((eq? func-expr 'max) (max (car args) (car (cdr args))))
      ((eq? func-expr 'modulo) (modulo (car args) (car (cdr args))))
      ((eq? func-expr 'remainder) (remainder (car args) (car (cdr args))))
      ;; User-defined functions (closures / defined-fns) take precedence over the
      ;; Strategy-B vector ops below, so a program may name a function `scale`/`dot`/etc.
      ;; without it being shadowed by the native op. An unbound vector-op name resolves
      ;; to 0 (not a pair) and falls through to the native clauses.
      ((pair? func)
       (cond
         ((eq? (car func) 'closure)
          (let ((params (car (cdr func)))
                (body (car (cdr (cdr func))))
                (closure-env (car (cdr (cdr (cdr func))))))
            (scheme-eval body (env-extend-many params args closure-env))))
         ((eq? (car func) 'defined-fn)
          (let ((params (car (cdr func)))
                (body (car (cdr (cdr func)))))
            (scheme-eval body (env-extend-many params args env))))
         (#t 0)))
      ;; Strategy B: tensor-payload vector/matrix ops (native; dispatch via VEC_OPS).
      ;; vec/mat take the element list ((vec a b c) is macro-lowered to (vec (list a b c)));
      ;; the rest take vector/matrix refs (and scalars).
      ((eq? func-expr 'vec) (vec (car args)))
      ((eq? func-expr 'mat) (mat (car args)))
      ((eq? func-expr 'ref) (ref (car args) (car (cdr args))))
      ((eq? func-expr 'dot) (dot (car args) (car (cdr args))))
      ((eq? func-expr 'cross) (cross (car args) (car (cdr args))))
      ((eq? func-expr 'norm) (norm (car args)))
      ((eq? func-expr 'normalize) (normalize (car args)))
      ((eq? func-expr 'vsum) (vsum (car args)))
      ((eq? func-expr 'vlen) (vlen (car args)))
      ((eq? func-expr 'scale) (scale (car args) (car (cdr args))))
      ((eq? func-expr 'matvec) (matvec (car args) (car (cdr args))))
      ((eq? func-expr 'matmul) (matmul (car args) (car (cdr args))))
      ((eq? func-expr 'transpose) (transpose (car args)))
      ((eq? func-expr 'trace) (trace (car args)))
      ((eq? func-expr 'det) (det (car args)))
      ((eq? func-expr 'logdet) (logdet (car args)))
      ((eq? func-expr 'inv) (inv (car args)))
      ((eq? func-expr 'outer) (outer (car args) (car (cdr args))))
      ((eq? func-expr 'eye) (eye (car args)))
      ((eq? func-expr 'zeros) (zeros (car args)))
      ((eq? func-expr 'ones) (ones (car args)))
      (#t 0))))

(define (eval-args arg-exprs env)
  (if (null? arg-exprs)
    '()
    (cons (scheme-eval (car arg-exprs) env)
          (eval-args (cdr arg-exprs) env))))
\end{lstlisting}

% ============================================================================
\section{Hand-Coded Differentiable PyTorch Interpreter Baseline}
\label{app:handcoded}
% ============================================================================

Appendix~E describes the hand-coded PyTorch baseline used in Experiments~A--C. In Experiment~A, this
baseline is a manually implemented differentiable Scheme interpreter. In Experiments~B and~C, the
baseline is a hand-coded PyTorch function implementing the same mathematical model as the generated or
compiled Scheme program. It uses native Python data structures and raw \texttt{torch.Tensor} values
without DMCI's tagged-value representation or compiled self-hosted evaluator.

The Appendix~E baseline is not intended to be a complete Scheme implementation or a self-hosted
evaluator. It is a hand-coded PyTorch comparator covering the program features needed by
Experiments~A--C. Its purpose is to provide a strong differentiable baseline for optimization behavior
and runtime cost, not to match the semantic coverage of the compiled evaluator in
Appendix~\ref{app:source_eval}. It shows that the basic capability of differentiating through
interpreted Scheme evaluation does not, by itself, distinguish DMCI: a differentiable interpreter can
be written manually in PyTorch. Its role is therefore to separate two claims. First, differentiable
interpretation is operationally possible and can provide correct gradients. Second, and more
importantly for DMCI, the differentiable interpreter can be obtained automatically by compiling a
self-hosted evaluator written in the source language itself.

The distinction is where the semantics live. In this hand-coded baseline, Scheme evaluation rules are
reimplemented directly in Python/PyTorch by the experimenter. In DMCI, the evaluator is ordinary Scheme
source code compiled by the system into a differentiable module. The baseline therefore measures the
cost and behavior of a bespoke host-language implementation, while Appendix~\ref{app:source_eval}
demonstrates the paper's stronger compiler-level claim: a source-language evaluator can itself become
the reusable differentiable execution engine.

Appendix~E should therefore not be read as a non-differentiable or strawman comparator: it is a
genuine differentiable interpreter baseline, narrower than the Appendix~\ref{app:source_eval}
evaluator in semantic coverage but strong on the optimization-behavior and runtime axes it is built
to measure. Its purpose is to make clear that DMCI's contribution is not
simply ``gradients through an interpreter,'' but ``gradients through an interpreter produced by
compiling the language's own evaluator,'' enabling programs to remain runtime data while avoiding
per-program hand engineering in the host autodiff framework.

\begin{lstlisting}[language=Python, basicstyle=\ttfamily\scriptsize, numbers=left, numberstyle=\tiny\color{gray}, xleftmargin=2em]
class HandCodedInterpreter:
    """Tree-walking Scheme evaluator in pure Python/PyTorch."""

    def eval_expr(self, expr, env: dict) -> torch.Tensor:
        if isinstance(expr, (int, float)):
            return torch.tensor(float(expr))
        if isinstance(expr, str):
            return env[expr]
        if not isinstance(expr, list) or len(expr) == 0:
            return torch.tensor(0.0)

        head = expr[0]
        if head == "quote":
            return expr[1]
        if head == "if":
            test = self.eval_expr(expr[1], env)
            test_val = test.item() if isinstance(test, torch.Tensor) else test
            if test_val != 0.0 and test_val is not False:
                return self.eval_expr(expr[2], env)
            return self.eval_expr(expr[3], env)
        if head == "lambda":
            params, body = expr[1], expr[2]
            return ("closure", params, body, dict(env))
        if head == "let":
            bindings, body = expr[1], expr[2]
            new_env = dict(env)
            for binding in bindings:
                name, val_expr = binding[0], binding[1]
                new_env[name] = self.eval_expr(val_expr, new_env)
            return self.eval_expr(body, new_env)
        if head in ("+", "-", "*", "/"):
            a = self.eval_expr(expr[1], env)
            b = self.eval_expr(expr[2], env)
            if head == "+": return a + b
            if head == "-": return a - b
            if head == "*": return a * b
            return a / b
        if head in ("=", "<", ">", "<=", ">="):
            a = self.eval_expr(expr[1], env)
            b = self.eval_expr(expr[2], env)
            a_v = a.item() if isinstance(a, torch.Tensor) else float(a)
            b_v = b.item() if isinstance(b, torch.Tensor) else float(b)
            ops = {"=": a_v == b_v, "<": a_v < b_v, ">": a_v > b_v,
                   "<=": a_v <= b_v, ">=": a_v >= b_v}
            return torch.tensor(1.0) if ops[head] else torch.tensor(0.0)
        func = self.eval_expr(head, env) if isinstance(head, (list, str)) else head
        args = [self.eval_expr(a, env) for a in expr[1:]]
        return self._apply(func, args)

    def _apply(self, func, args):
        if isinstance(func, tuple) and func[0] == "closure":
            _, param_names, body, captured_env = func
            new_env = dict(captured_env)
            for p, a in zip(param_names, args):
                new_env[p] = a
            return self.eval_expr(body, new_env)
        if isinstance(func, tuple) and func[0] == "defined_fn":
            _, param_names, body, def_env = func
            new_env = dict(def_env)
            for p, a in zip(param_names, args):
                new_env[p] = a
            return self.eval_expr(body, new_env)
        return torch.tensor(0.0)
\end{lstlisting}

% ============================================================================
\section{Experiment A: Extended Results, Ablations, and Full Baselines}
\label{app:exp_a_ext}
% ============================================================================

\subsection{Per-Program DMCI Results and Wall-Clock Breakdown}
Table~\ref{tab:exp_a_results} reports the per-program DMCI summary, and Figure~\ref{fig:exp_a_wallclock}
breaks down per-method wall-clock time across P1--P6. Averaged over Experiment~A training runs, the
compiled interpreter is ${\sim}19\times$ slower than direct compilation (53.8\,s vs.\ 2.8\,s) and
${\sim}48\times$ slower than the hand-coded PyTorch interpreter (53.8\,s vs.\ 1.1\,s); this whole-run
training ratio exceeds the ${\sim}14\times$ \emph{per-evaluation} overhead isolated in the decomposition
(Table~\ref{tab:overhead_decomp}).

\begin{table}[t]
\centering
\caption{Experiment~A: DMCI results (6 programs $\times$ 10 seeds, 60 runs). All quantities are \emph{training-loss} measurements: each program is fit on a fixed 8-point input grid with no held-out split, and the synthetic data are noiseless, so training loss equals recovery quality. A run counts as converged iff its best (minimum) training loss falls below $10^{-3}$; all 60 runs converged, so every Train Loss entry is correspondingly below $10^{-3}$. Conv: converged seeds out of 10. Train Loss and Param Error are seed-means of, respectively, the best (minimum) training loss reached and the closest approach $|\hat{\theta} - \theta^*|$ to each target during optimization, i.e.\ the early-stopping / best-checkpoint values, not the last iterate. Avg Epoch is the mean epoch at which the training loss first crosses $10^{-3}$; Avg Time is the mean total run wall-clock. The loss is summed over the input grid, so its magnitude is comparable \emph{within} but not \emph{across} program families.}
\label{tab:exp_a_results}
\small
\begin{tabular}{@{}lrrrrl@{}}
\toprule
Program & Conv & Avg Epoch & Train Loss & Avg Time & Param Error \\
\midrule
P1 (single const) & 10/10 & 20 & $1.5\times10^{-4}$ & 2.4\,s & $\alpha$: 0.0006 \\
P2 (multi const) & 10/10 & 124 & $2.7\times10^{-6}$ & 9.0\,s & $a$: 0.0002, $b$: $<\!10^{-4}$ \\
P3 (recursive) & 10/10 & 61 & $4.0\times10^{-5}$ & 58.6\,s & $\alpha$: 0.0002 \\
P4 (higher-order) & 10/10 & 48 & $1.1\times10^{-5}$ & 10.7\,s & $\alpha$: 0.0005 \\
P5 (multi-function) & 10/10 & 247 & $2.0\times10^{-4}$ & 31.6\,s & $a$: 0.005, $b$: 0.009 \\
P6 (composed) & 10/10 & 2017 & $8.5\times10^{-4}$ & 210.3\,s & $a$: 0.014, $b$: 0.010, $c$: 0.067 \\
\bottomrule
\end{tabular}
\end{table}

% Figure: Exp A - Wall-clock comparison across methods
% Grouped bar chart: mean wall time per program per method
% Usage: \input{figures/fig_a_wallclock.tex}
\begin{figure}[tbp]
\centering
\vspace*{2\baselineskip}% extra separation from the table above when placed together
\begin{tikzpicture}
\definecolor{darkgold}{RGB}{184,134,11}
\begin{axis}[
    width=0.95\textwidth,
    height=0.40\textwidth,
    ybar,
    bar width=4.5pt,
    ymode=log,
    ylabel={Mean wall-clock time (s)},
    xlabel={Program},
    symbolic x coords={P1,P2,P3,P4,P5,P6},
    xtick=data,
    x tick label style={font=\small},
    tick label style={font=\small},
    label style={font=\small},
    ymin=0.05, ymax=8000,
    grid=major,
    grid style={gray!20},
    enlarge x limits=0.12,
    legend style={
        font=\small,
        at={(0.5,1.02)},
        anchor=south,
        legend columns=5,
        draw=none,
        column sep=6pt,
    },
    legend cell align={left},
    area legend,
]

% Handcoded PyTorch
\addplot[fill=green!60!black, draw=green!60!black]
    coordinates {
        (P1, 0.1417) (P2, 0.3640) (P3, 0.5181) (P4, 0.2332) (P5, 0.6866) (P6, 4.8235)
    };
\addlegendentry{Handcoded PyTorch}

% Direct compilation
\addplot[fill=darkgold, draw=darkgold]
    coordinates {
        (P1, 0.3346) (P2, 0.9868) (P3, 0.8798) (P4, 0.8615) (P5, 1.5244) (P6, 12.4166)
    };
\addlegendentry{Direct compilation}

% DMCI (compiled interpreter)
\addplot[fill=blue!70, draw=blue!70]
    coordinates {
        (P1, 2.4094) (P2, 8.9860) (P3, 58.5817) (P4, 10.7477) (P5, 31.5880) (P6, 210.3447)
    };
\addlegendentry{DMCI}

% Finite differences
\addplot[fill=orange!80, draw=orange!80]
    coordinates {
        (P1, 4.1827) (P2, 159.7470) (P3, 46.8343) (P4, 45.1982) (P5, 813.8878) (P6, 2241.4814)
    };
\addlegendentry{Finite differences}

% Evolution strategy
\addplot[fill=red!70, draw=red!70]
    coordinates {
        (P1, 95.2060) (P2, 552.2642) (P3, 1696.1473) (P4, 206.0453) (P5, 1472.2267) (P6, 1620.9487)
    };
\addlegendentry{Evolution strategy}

\end{axis}
\end{tikzpicture}
\caption{Mean training wall-clock time across five optimization methods on programs P1--P6 (10 seeds each, log scale, early stopping at convergence). Handcoded PyTorch and direct compilation are fastest but require manual or per-program engineering. DMCI introduces approximately one order of magnitude runtime overhead relative to direct compilation while preserving identical optimization trajectories and convergence behavior. Finite differences and evolution strategies are substantially slower; moreover, finite differences did not converge on P6 and the evolution strategy did not converge on P2, P5, or P6 (those runs reached the termination criterion without meeting the convergence threshold, so their bars reflect time-to-termination rather than time-to-convergence).}
\label{fig:exp_a_wallclock}
\end{figure}

\begin{table}[t]
\centering
\caption{Full Scheme source for the Experiment~A program families (compact summary in Table~\ref{tab:programs}). Each family defines a program with learnable constants (initialized away from targets) to be recovered by gradient descent; complexity increases from scalar arithmetic (P1) to nested function composition (P6).}
\label{tab:programs_full}
\small
\begin{tabular}{@{}lllll@{}}
\toprule
ID & Capability tested & Constants & Targets & Program \\
\midrule
P1 & Scalar arithmetic & $\alpha$ & 0.5 & \texttt{(* alpha (* x x))} \\
\addlinespace[4pt]
P2 & Independent parameters & $a, b$ & 3.0, 0.5 & \texttt{(+ a (* b (* x x)))} \\
\addlinespace[4pt]
P3 & Fixed-depth recursion & $\alpha$ & 2.0 &
  \makecell[tl]{
  \texttt{(define (poly x n)}\\
  \texttt{~~(if (= n 0) 0}\\
  \texttt{~~~~(+ (* alpha x)}\\
  \texttt{~~~~~~~(poly x (- n 1)))))}\\
  \texttt{(poly x 3)}
  } \\
\addlinespace[4pt]
P4 & Higher-order closure application & $\alpha$ & 1.5 &
  \makecell[tl]{
  \texttt{(define (twice f x) (f (f x)))}\\
  \texttt{(twice (lambda (y) (+ y alpha)) x)}
  } \\
\addlinespace[4pt]
P5 & Multiple function scopes & $a, b$ & 2.0, 1.5 &
  \makecell[tl]{
  \texttt{(define (f x) (* a x))}\\
  \texttt{(define (g x) (+ b (* x x)))}\\
  \texttt{(+ (f x) (g x))}
  } \\
\addlinespace[4pt]
P6 & Nested composition & $a, b, c$ & 2.0, 0.5, 1.0 &
  \makecell[tl]{
  \texttt{(define (f x) (+ (* a (* x x)) c))}\\
  \texttt{(define (g x) (+ x b))}\\
  \texttt{(f (g x))}
  } \\
\bottomrule
\end{tabular}
\end{table}

\subsection{Methods}

Five optimization methods are compared:
\begin{enumerate}[leftmargin=2em,topsep=2pt,itemsep=1pt]
    \item \textbf{Direct compilation} (autograd). Compile the target program directly to a differentiable module. This is the upper bound: shortest gradient path.
    \item \textbf{Compiled interpreter} (autograd). DMCI: compile the self-hosted evaluator and pass the target program as quoted data.
    \item \textbf{Hand-coded PyTorch interpreter} (autograd). Tree-walking evaluator in Python/PyTorch ($\sim$100 lines). Isolates the effect of compilation.
    \item \textbf{Finite differences}. Central differences: $(f(\theta+\epsilon) - f(\theta-\epsilon)) / 2\epsilon$.
    \item \textbf{Evolution strategy}. $(\mu,\lambda)$-ES with $1/5$ sigma adaptation.
\end{enumerate}

\subsection{Training Protocol}

For each (method, program, seed) triple: initialize constants with Gaussian perturbation ($\pm 30\%$), controlled by seed. Eight training points on $[0.5, 3.0]$. Adam (lr $= 0.05$), up to 3000 epochs. A run is counted as converged at the first epoch its loss falls below $10^{-3}$, after which training continues for a fixed 50-epoch window and then stops; runs that never cross $10^{-3}$ train the full 3000 epochs. Reported losses are the best (minimum) training loss reached per run (the early-stopping / best-checkpoint value), not the last-epoch value. Ten seeds per pair yield 300 runs.

\subsection{Full Six-Program Convergence Figure}

\begin{figure*}[ht]
\centering
\begin{tikzpicture}
\pgfplotsset{
    every axis/.append style={
        width=0.33\textwidth, height=0.20\textwidth,
        tick label style={font=\tiny},
        label style={font=\scriptsize},
        title style={font=\small, yshift=-3pt},
        grid=major, grid style={gray!20},
        every axis plot/.append style={thick, mark=none},
    }
}

\begin{semilogyaxis}[
    name=p1,
    title={P1: single const},
    ylabel={\scriptsize Loss},
    xmin=0, xmax=90, ymin=1e-2, ymax=200,
    legend style={font=\tiny, at={(0.97,0.97)}, anchor=north east, draw=none},
]
\addplot[blue!20, very thin, forget plot] table[x=epoch, y=upper] {figures/exp_a_p1_single_const_compiled_interp.dat};
\addplot[blue!20, very thin, forget plot] table[x=epoch, y=lower] {figures/exp_a_p1_single_const_compiled_interp.dat};
\addplot[blue] table[x=epoch, y=mean] {figures/exp_a_p1_single_const_compiled_interp.dat};
\addlegendentry{DMCI $\pm\sigma$}
\addplot[teal, dashed] table[x=epoch, y=mean] {figures/exp_a_p1_single_const_direct.dat};
\addlegendentry{Direct}
\end{semilogyaxis}

\begin{semilogyaxis}[
    name=p2, at={(p1.east)}, anchor=west, xshift=0.5cm,
    title={P2: multi const},
    xmin=0, xmax=220, ymin=1e-5, ymax=500,
]
\addplot[blue!20, very thin, forget plot] table[x=epoch, y=upper] {figures/exp_a_p2_multi_const_compiled_interp.dat};
\addplot[blue!20, very thin, forget plot] table[x=epoch, y=lower] {figures/exp_a_p2_multi_const_compiled_interp.dat};
\addplot[blue] table[x=epoch, y=mean] {figures/exp_a_p2_multi_const_compiled_interp.dat};
\addplot[teal, dashed] table[x=epoch, y=mean] {figures/exp_a_p2_multi_const_direct.dat};
\end{semilogyaxis}

\begin{semilogyaxis}[
    name=p3, at={(p2.east)}, anchor=west, xshift=0.5cm,
    title={P3: recursive},
    xmin=0, xmax=140, ymin=1e-2, ymax=1e5,
]
\addplot[blue!20, very thin, forget plot] table[x=epoch, y=upper] {figures/exp_a_p3_recursive_compiled_interp.dat};
\addplot[blue!20, very thin, forget plot] table[x=epoch, y=lower] {figures/exp_a_p3_recursive_compiled_interp.dat};
\addplot[blue] table[x=epoch, y=mean] {figures/exp_a_p3_recursive_compiled_interp.dat};
\addplot[teal, dashed] table[x=epoch, y=mean] {figures/exp_a_p3_recursive_direct.dat};
\end{semilogyaxis}

\begin{semilogyaxis}[
    name=p4, at={(p1.south west)}, anchor=north west, yshift=-1.0cm,
    title={P4: higher-order},
    xlabel={\scriptsize Epoch}, ylabel={\scriptsize Loss},
    xmin=0, xmax=140, ymin=1e-3, ymax=200,
]
\addplot[blue!20, very thin, forget plot] table[x=epoch, y=upper] {figures/exp_a_p4_higher_order_compiled_interp.dat};
\addplot[blue!20, very thin, forget plot] table[x=epoch, y=lower] {figures/exp_a_p4_higher_order_compiled_interp.dat};
\addplot[blue] table[x=epoch, y=mean] {figures/exp_a_p4_higher_order_compiled_interp.dat};
\addplot[teal, dashed] table[x=epoch, y=mean] {figures/exp_a_p4_higher_order_direct.dat};
\end{semilogyaxis}

\begin{semilogyaxis}[
    name=p5, at={(p4.east)}, anchor=west, xshift=0.5cm,
    title={P5: multi-function},
    xlabel={\scriptsize Epoch},
    xmin=0, xmax=300, ymin=1e-3, ymax=500,
]
\addplot[blue!20, very thin, forget plot] table[x=epoch, y=upper] {figures/exp_a_p5_multi_function_compiled_interp.dat};
\addplot[blue!20, very thin, forget plot] table[x=epoch, y=lower] {figures/exp_a_p5_multi_function_compiled_interp.dat};
\addplot[blue] table[x=epoch, y=mean] {figures/exp_a_p5_multi_function_compiled_interp.dat};
\addplot[teal, dashed] table[x=epoch, y=mean] {figures/exp_a_p5_multi_function_direct.dat};
\end{semilogyaxis}

\begin{semilogyaxis}[
    name=p6, at={(p5.east)}, anchor=west, xshift=0.5cm,
    title={P6: composed},
    xlabel={\scriptsize Epoch},
    xmin=0, xmax=500, ymin=1e-3, ymax=5000,
]
\addplot[blue!20, very thin, forget plot] table[x=epoch, y=upper] {figures/exp_a_p6_composed_compiled_interp.dat};
\addplot[blue!20, very thin, forget plot] table[x=epoch, y=lower] {figures/exp_a_p6_composed_compiled_interp.dat};
\addplot[blue] table[x=epoch, y=mean] {figures/exp_a_p6_composed_compiled_interp.dat};
\addplot[teal, dashed] table[x=epoch, y=mean] {figures/exp_a_p6_composed_direct.dat};
\end{semilogyaxis}
\end{tikzpicture}
\caption{Full convergence curves for programs P1--P6, mean $\pm$ 1 std over 10 seeds. DMCI (solid blue) and direct compilation (dashed teal) produce indistinguishable trajectories on all six programs. The P6 panel is truncated at 500 epochs for readability; the curve continues to descend to ${\sim}0.0008$ by its mean convergence epoch ${\sim}2{,}017$.}
\label{fig:exp_a_convergence_full}
\end{figure*}

\subsection{Autograd vs.\ Gradient-Free}

Finite differences converge in 50/60 runs (83.3\%) but require $12.5\text{--}534\times$ more wall time than direct compilation (mean ${\sim}195\times$). The evolution strategy converges in only 27/60 runs (45.0\%), failing completely on P2 and P6 (0/10 each) and P5 (1/10); on partially converging programs (e.g., P3: 7/10) the remaining seeds leave best losses above the $10^{-3}$ threshold. These results validate that exact gradients through the interpreter provide qualitatively better optimization than numerical or black-box alternatives.

\subsection{Ablations}

\paragraph{Dict vs.\ tensor heap.} Replacing the dictionary-backed heap with a pre-allocated tensor buffer causes autograd failures after several dozen \texttt{cons} operations due to in-place mutation breaking version tracking. The self-hosted interpreter performs hundreds of \texttt{cons} calls; the dict heap is essential.

\paragraph{Lazy vs.\ eager evaluation.} Switching to eager evaluation of conditionals causes infinite expansion on any recursive program (P3--P6). Lazy evaluation is necessary, not optional, for DMCI.

\paragraph{Gradient path analysis.} Direct compilation produces 9--19 autograd graph nodes; the compiled interpreter produces 18--30 (5--18 extra nodes for interpreter dispatch, environment lookup, and tagged-value wrapping, roughly doubling the count for the simplest programs). Despite the longer path, DMCI gradients still match direct compilation to ${\sim}10^{-6}$ across all programs, so the added dispatch nodes preserve gradient quality rather than eroding it.

\subsection{Overhead Decomposition}

\begin{table}[ht]
\centering
\caption{Overhead decomposition: DMCI vs.\ direct compilation on P2, 100 iterations.}
\label{tab:overhead_decomp}
\small
\begin{tabular}{@{}lrrr@{}}
\toprule
Component & Direct (ms) & DMCI (ms) & \% of overhead \\
\midrule
Tagged-value wrap/unwrap & 8.6 & 240.4 & 41.4\% \\
Python overhead & 29.6 & 209.4 & 32.1\% \\
Evaluator graph walking & 1.7 & 96.5 & 16.9\% \\
Heap operations & 0.0 & 30.0 & 5.4\% \\
Dispatch (\texttt{tagged\_if}/\texttt{eq?}) & 0.0 & 12.5 & 2.2\% \\
Tagged arithmetic wrappers & 0.0 & 10.8 & 1.9\% \\
Raw arithmetic primitives & 2.2 & 2.5 & 0.1\% \\
PyTorch runtime / autograd & 1.8 & 1.9 & 0.0\% \\
\midrule
\textbf{Total} & \textbf{43.9} & \textbf{604.0} & \textbf{13.8$\times$} \\
\bottomrule
\end{tabular}
\end{table}

% Figure: Overhead decomposition horizontal stacked bar chart
% Visualizes where the ~14x DMCI overhead comes from
% Usage: \input{figures/fig_h_overhead_decomp.tex}
% Note: the amortization brace is drawn AFTER \end{axis}. pgfplots defers nodes
% placed at (axis cs:...), so any \draw referencing them must come after the
% axis closes -- drawing inside the axis silently fails ("No shape named ...").
\begin{figure}[t]
\centering
\begin{tikzpicture}
\begin{axis}[
    width=0.92\columnwidth,
    height=0.35\columnwidth,
    xmin=-5, xmax=105,  % symmetric 5-unit margins: bar spans 0..100
    xtick={0,10,20,30,40,50,60,70,80,90,100},
    xlabel={Fraction of DMCI overhead (\%)},
    xlabel style={font=\small, yshift=2pt},
    ytick={0},
    yticklabels={},
    tick label style={font=\small},
    ymin=-1, ymax=1,
    clip=false,
]

% --- Segments drawn as explicit rectangles (x in axis cs, +-14pt thickness).
% Using xbar stacked here draws the first segment from a baseline left of 0
% (overshooting into the xmin margin); explicit fills pin the bar to span 0..100.
\addplot[draw=none, forget plot] coordinates {(0,0) (100,0)};
\fill[blue!80!black] ([yshift=-14pt]axis cs:0,0)    rectangle ([yshift=14pt]axis cs:41.4,0);
\fill[red!70!black]  ([yshift=-14pt]axis cs:41.4,0) rectangle ([yshift=14pt]axis cs:73.5,0);
\fill[teal!80]       ([yshift=-14pt]axis cs:73.5,0) rectangle ([yshift=14pt]axis cs:90.4,0);
\fill[orange!80]     ([yshift=-14pt]axis cs:90.4,0) rectangle ([yshift=14pt]axis cs:95.8,0);
\fill[gray!60]       ([yshift=-14pt]axis cs:95.8,0) rectangle ([yshift=14pt]axis cs:100,0);

% --- Inline percentage labels (inside the three large segments) ---
\node[font=\scriptsize, text=white] at (axis cs:20.7, 0)  {41.4\%};
\node[font=\scriptsize, text=white] at (axis cs:57.45, 0) {32.1\%};
\node[font=\scriptsize, text=white] at (axis cs:81.95, 0) {16.9\%};

% --- Semantic labels BELOW the bar (top zone reserved for the brace) ---
\node[font=\tiny, text=blue!80!black, anchor=north] at (axis cs:20.7, 0)  [yshift=-15pt] {Tagged-value wrapping};
\node[font=\tiny, text=red!70!black,  anchor=north] at (axis cs:57.45, 0) [yshift=-15pt] {Python interpreter};
\node[font=\tiny, text=teal!80!black, anchor=north] at (axis cs:81.95, 0) [yshift=-15pt] {Graph walking};

% Narrow segments: staggered labels (text uses reliable yshift) with leader
% lines. Leader anchors are named here and drawn after \end{axis}; relative
% ++ offsets are unreliable in pgfplots and break when the x-range changes.
\node[font=\tiny, text=orange!70!black, anchor=north] at (axis cs:93.1, 0) [yshift=-27pt] {Heap (5.4\%)};
\node[font=\tiny, text=gray!50!black,  anchor=north] at (axis cs:97.9, 0) [yshift=-38pt] {Other (4.2\%)};
\node[inner sep=0pt, outer sep=0pt] (heapTop) at (axis cs:93.1, 0) [yshift=-13pt] {};
\node[inner sep=0pt, outer sep=0pt] (heapBot) at (axis cs:93.1, 0) [yshift=-22pt] {};
\node[inner sep=0pt, outer sep=0pt] (othTop)  at (axis cs:97.9, 0) [yshift=-13pt] {};
\node[inner sep=0pt, outer sep=0pt] (othBot)  at (axis cs:97.9, 0) [yshift=-33pt] {};

% --- Anchor nodes for the amortization brace ---
% Spans tagged-value wrapping + Python interpreter + graph walking (0 -> 90.4%).
% These materialize only after \end{axis}; the brace/label are drawn below.
% braceL/braceR sit at the bar's left edge (0) and the teal/orange boundary (90.4);
% with explicit-rectangle bars the feet land exactly on these coordinates, and
% braceLab (45.2) centers the tooth + label over the spanned region.
\node[inner sep=0pt, outer sep=0pt] (braceL)   at (axis cs:0, 0) [yshift=16pt] {};
\node[inner sep=0pt, outer sep=0pt] (braceR)   at (axis cs:90.4, 0) [yshift=16pt] {};
\node[inner sep=0pt, outer sep=0pt] (braceLab) at (axis cs:45.2, 0) [yshift=29pt] {};

\end{axis}

% --- Amortization brace + label (tagged-value + Python interpreter + graph walking = 90.4%) ---
\draw[decorate, decoration={brace, amplitude=5pt}, line width=0.9pt, black!75]
    (braceL.center) -- (braceR.center);
\node[font=\tiny, text=black!75, anchor=south, inner sep=1pt] at (braceLab.center)
    {90.4\% per-evaluation overhead, amortized by batching};

% --- Leader lines for the narrow segments (vertical; xmin-independent) ---
\draw[orange!70!black, thin] (heapTop.center) -- (heapBot.center);
\draw[gray!50!black, thin]   (othTop.center)  -- (othBot.center);

\end{tikzpicture}
\caption{Overhead decomposition for DMCI relative to direct compilation (P2, 100 fwd--bwd iterations). The cost is dominated by Python-level bookkeeping: tagged-value wrapping (41.4\%), Python interpreter overhead (32.1\%), and graph walking (16.9\%). These three components (90.4\% combined) are per-evaluation costs paid once per batch rather than once per input, and are therefore amortized by batching (Section~\ref{sec:exp_h}). Raw arithmetic contributes $<$0.2\%, explaining the near-linear batching speedups: the arithmetic that scales with batch size is already efficient.}
\label{fig:overhead_decomp}
\end{figure}

Tagged-value wrapping (41.4\%), Python interpreter overhead (32.1\%), and graph walking (16.9\%), 90.4\% combined (Figure~\ref{fig:overhead_decomp}), are per-evaluation costs paid once per batch and are therefore amortized by batching (Experiment~H): in batched execution each is performed once on a tensor of shape $(B, 14)$ rather than $B$ times on scalars, since the batch dimension is broadcast through the fixed computation graph. Only raw arithmetic ($<$0.2\%) scales with batch size. Reducing single-evaluation \emph{latency} (as opposed to throughput) is a separate problem that batching does not address; it would require compiling the evaluator graph to MLIR with Enzyme for differentiation.

\subsection{S3: Branch-Dependent Constants}
\label{app:s3}

We test the program $\texttt{(if (< x alpha) (* beta x) (* gamma x))}$ with learnable constants $\alpha, \beta, \gamma$ (true values: 1.0, 2.0, 0.5), fitting the piecewise-linear target it produces at those values ($\beta x$ for $x < \alpha$, else $\gamma x$). The comparison $\texttt{(< x alpha)}$ returns a discrete 0 or 1, so $\alpha$ sits inside the branch boundary and receives zero gradient.

Across 10 seeds (with $\alpha$ initialized at 0, away from the target), both direct compilation and DMCI converge 0/10 at final loss $3.93$, with \emph{zero} loss difference between the two methods at every seed. Receiving no gradient, $\alpha$ never leaves its initialization; with $\alpha \approx 0$ all training points fall into a single branch and the program collapses to the best single-line fit ($\gamma \to 0.54$), never recovering the true threshold. Trajectory equivalence thus holds even in this non-convergent case.

A sweep of $\alpha_0 \in [-2, 4]$ (21 points, with $\beta, \gamma$ held at their true values) confirms the mechanism: $\alpha_{\text{final}} = \alpha_0$ exactly at all points. The loss landscape (Figure~\ref{fig:s3_basin}) has a genuine minimum at the true threshold $\alpha^* = 1$ (loss $0$), flanked by flat plateaus where $\alpha$ falls below or above all training data; yet because $\nabla_\alpha = 0$ everywhere, gradient descent cannot descend into that minimum from any initialization.

\begin{figure}[ht]
\centering
\begin{tikzpicture}
\begin{axis}[
    width=0.92\columnwidth,
    height=0.55\columnwidth,
    xlabel={Initial $\alpha_0$},
    ylabel={Final loss},
    xmin=-2.5, xmax=4.5,
    ymin=0, ymax=160,
    grid=major,
    grid style={gray!30},
    tick label style={font=\small},
    label style={font=\small},
]
\addplot[blue, thick, const plot mark mid, mark=*, mark size=1.5pt] table[x=alpha0, y=loss] {figures/s3_basin_loss.dat};
% true threshold alpha*=1 is now the global minimum (loss 0)
\draw[red, dashed, thick] (axis cs:1.0,0) -- (axis cs:1.0,150) node[above, font=\footnotesize, text=red] {$\alpha^*{=}1$};
% a real minimum exists, but zero gradient means GD cannot reach it
\node[font=\footnotesize, align=left, anchor=west] (gd) at (axis cs:-2.3,92)
    {$\nabla_\alpha=0$ everywhere:\\gradient descent stays at\\$\alpha_0$ and cannot reach\\the minimum at $\alpha^*$};
\draw[->, gray!60!black, thick] (gd.south east) to[bend left=10] (axis cs:0.98,3);
\end{axis}
\end{tikzpicture}
\caption{S3 basin-of-attraction analysis ($\beta,\gamma$ fixed at their true values; only $\alpha$ swept). A genuine loss minimum exists at the true threshold $\alpha^*{=}1$ (loss~$0$), with flat plateaus where $\alpha$ lies below ($\approx 4$) or above ($\approx 139$) all training points. Because the branch comparison is non-differentiable in $\alpha$ ($\nabla_\alpha\,\texttt{(< x alpha)}=0$), $\alpha_{\text{final}}=\alpha_0$ at every point: gradient descent cannot descend into the minimum from any initialization. The discontinuity is the source program's, inherited rather than introduced by DMCI.}
\label{fig:s3_basin}
\end{figure}

\subsection{Noise Robustness}
\label{app:exp_a_noise}

The main experiments recover constants from noiseless, self-generated data. To test robustness to measurement noise, we re-run the constant-recovery task through the same compiled interpreter on data corrupted by additive Gaussian noise scaled to the signal's standard deviation, $y \mapsto y + \sigma\,\mathrm{std}(y)\,\mathcal{N}(0,1)$, for $\sigma \in \{0, 0.02, 0.05, 0.10, 0.20\}$ ($\sigma$ is the noise level as a fraction of the signal; $\sigma{=}0$ is the noiseless control). For a fair comparison across noise levels (under noise the loss plateaus at the noise floor and never reaches the $10^{-3}$ threshold), each run trains until its own loss plateaus (relative-improvement stop) and we report the constants at the best loss, averaged over five seeds. Data points are batched through a single interpreter walk per epoch (bit-identical to sequential; Appendix~\ref{app:exp_h_ext}).

Table~\ref{tab:exp_a_noise} and Figure~\ref{fig:exp_a_noise} report the mean relative parameter error $\frac{1}{|\theta|}\sum_n |\hat\theta_n - \theta^*_n|/|\theta^*_n|$ versus $\sigma$. Recovery is near-exact in the noiseless case ($10^{-7}$--$10^{-5}$, matching the main results, so the early-stopping protocol introduces no floor artifact) and degrades \emph{gracefully and roughly proportionally} to the noise level: at $\sigma{=}0.10$ ($10\%$ noise) the geometric-mean relative error is $1.9\%$, and at $\sigma{=}0.20$ it is $3.3\%$. The coupled multi-parameter program P5 is the most noise-sensitive ($27.8\%$ at $\sigma{=}0.20$), consistent with the partial non-identifiability of correlated parameters in composed programs (cf.\ P6, Section~\ref{sec:exp_a}); the single-parameter and recursive/higher-order programs (P1, P3, P4) stay within $1$--$2\%$ even at $\sigma{=}0.20$. Gradient correctness is unaffected by noise (the optimizer simply fits a noisier target), so the degradation reflects the statistics of the estimation problem, not the differentiable interpreter.

\begin{table}[ht]
\centering
\caption{Experiment~A noise robustness: mean relative parameter error (5 seeds) vs.\ noise level $\sigma$ (additive Gaussian scaled to the signal std). Recovery is near-exact at $\sigma{=}0$ and degrades gracefully; the coupled multi-parameter program P5 is the most sensitive.}
\label{tab:exp_a_noise}
\begin{tabular}{lccccc}
\toprule
Program & $\sigma{=}0$ & $\sigma{=}0.02$ & $\sigma{=}0.05$ & $\sigma{=}0.10$ & $\sigma{=}0.20$ \\
\midrule
P1 single const   & $8.7\times10^{-6}$ & $4.4\times10^{-3}$ & $9.5\times10^{-3}$ & $1.5\times10^{-2}$ & $1.9\times10^{-2}$ \\
P2 multi const    & $1.5\times10^{-7}$ & $6.1\times10^{-3}$ & $1.2\times10^{-2}$ & $2.2\times10^{-2}$ & $4.2\times10^{-2}$ \\
P3 recursive      & $3.4\times10^{-6}$ & $2.7\times10^{-3}$ & $5.1\times10^{-3}$ & $1.0\times10^{-2}$ & $1.4\times10^{-2}$ \\
P4 higher-order   & $2.9\times10^{-5}$ & $1.5\times10^{-3}$ & $2.6\times10^{-3}$ & $6.7\times10^{-3}$ & $1.1\times10^{-2}$ \\
P5 multi-function & $3.0\times10^{-8}$ & $2.6\times10^{-2}$ & $5.8\times10^{-2}$ & $1.1\times10^{-1}$ & $2.8\times10^{-1}$ \\
\midrule
\textbf{geometric mean} & $1.3\times10^{-6}$ & $4.9\times10^{-3}$ & $9.8\times10^{-3}$ & $1.9\times10^{-2}$ & $3.3\times10^{-2}$ \\
\bottomrule
\end{tabular}
\end{table}

\begin{figure}[ht]
\centering
\begin{tikzpicture}
\begin{semilogyaxis}[
    width=0.92\columnwidth, height=0.55\columnwidth,
    xlabel={Noise level $\sigma$ (fraction of signal std)}, ylabel={Relative parameter error},
    xmin=-0.005, xmax=0.205, ymin=1e-7, ymax=1,
    legend pos=south east, legend columns=2,
    legend style={font=\footnotesize, draw=none, fill=white, fill opacity=0.85, text opacity=1},
    grid=major, grid style={gray!30},
    tick label style={font=\small}, label style={font=\small},
]
\addplot[blue, thick, mark=*] table[x=sigma, y=P1] {figures/exp_a_noise.dat};
\addlegendentry{P1 scalar}
\addplot[red, thick, mark=square*] table[x=sigma, y=P2] {figures/exp_a_noise.dat};
\addlegendentry{P2 multi-const}
\addplot[black!60!green, thick, mark=triangle*] table[x=sigma, y=P3] {figures/exp_a_noise.dat};
\addlegendentry{P3 recursive}
\addplot[orange, thick, mark=diamond*] table[x=sigma, y=P4] {figures/exp_a_noise.dat};
\addlegendentry{P4 higher-order}
\addplot[purple, thick, mark=pentagon*] table[x=sigma, y=P5] {figures/exp_a_noise.dat};
\addlegendentry{P5 multi-function}
\addplot[black, very thick, dashed] table[x=sigma, y=agg] {figures/exp_a_noise.dat};
\addlegendentry{geomean}
\end{semilogyaxis}
\end{tikzpicture}
\caption{Experiment~A noise robustness: relative parameter-recovery error vs.\ noise level $\sigma$ (additive Gaussian, fraction of signal std; mean over 5 seeds, log scale). Recovery is near-exact at $\sigma{=}0$ and degrades roughly proportionally to $\sigma$; the coupled program P5 is most sensitive, the single-parameter/recursive programs least.}
\label{fig:exp_a_noise}
\end{figure}

\subsection{Full Five-Method Baselines}
\label{app:exp_a_full}

\begin{table}[ht]
\centering
\caption{Experiment~A: aggregate results across 10 seeds per (method, program) pair. Loss is the all-seed mean of the best (minimum) training loss reached per seed (the early-stopping / best-checkpoint value, matching Table~\ref{tab:exp_a_results}); a run is converged iff this best loss falls below $10^{-3}$, so the Conv and Loss columns are mutually consistent. Reporting the best checkpoint rather than the last iterate also avoids the divergent-outlier inflation that affects last-epoch means (notably for the evolution strategy, whose failed programs nonetheless reach best losses far above the $10^{-3}$ threshold).}
\label{tab:exp_a_full}
\small
\begin{tabular}{@{}llrrrr@{}}
\toprule
Method & Program & Conv & Epoch & Loss & Wall (s) \\
\midrule
\multirow{6}{*}{DMCI} & P1 single const & 10/10 & 20 & $1.5\times10^{-4}$ & 2.4 \\
 & P2 multi const & 10/10 & 124 & $2.7\times10^{-6}$ & 9.0 \\
 & P3 recursive & 10/10 & 61 & $4.0\times10^{-5}$ & 58.6 \\
 & P4 higher order & 10/10 & 48 & $1.1\times10^{-5}$ & 10.7 \\
 & P5 multi function & 10/10 & 247 & $2.0\times10^{-4}$ & 31.6 \\
 & P6 composed & 10/10 & 2017 & $8.5\times10^{-4}$ & 210.3 \\
\midrule
\multirow{6}{*}{Direct} & P1 single const & 10/10 & 20 & $1.5\times10^{-4}$ & 0.3 \\
 & P2 multi const & 10/10 & 124 & $2.7\times10^{-6}$ & 1.0 \\
 & P3 recursive & 10/10 & 61 & $4.0\times10^{-5}$ & 0.9 \\
 & P4 higher order & 10/10 & 48 & $1.1\times10^{-5}$ & 0.9 \\
 & P5 multi function & 10/10 & 247 & $2.0\times10^{-4}$ & 1.5 \\
 & P6 composed & 10/10 & 2017 & $8.5\times10^{-4}$ & 12.4 \\
\midrule
\multirow{6}{*}{Hand-coded} & P1 single const & 10/10 & 20 & $1.5\times10^{-4}$ & 0.1 \\
 & P2 multi const & 10/10 & 124 & $2.7\times10^{-6}$ & 0.4 \\
 & P3 recursive & 10/10 & 61 & $4.0\times10^{-5}$ & 0.5 \\
 & P4 higher order & 10/10 & 48 & $1.1\times10^{-5}$ & 0.2 \\
 & P5 multi function & 10/10 & 247 & $2.0\times10^{-4}$ & 0.7 \\
 & P6 composed & 10/10 & 2017 & $8.5\times10^{-4}$ & 4.8 \\
\midrule
\multirow{6}{*}{Finite diff.} & P1 single const & 10/10 & 11 & $<\!10^{-3}$ & 4.2 \\
 & P2 multi const & 10/10 & 861 & $5.5\times10^{-4}$ & 159.7 \\
 & P3 recursive & 10/10 & 8 & $<\!10^{-3}$ & 46.8 \\
 & P4 higher order & 10/10 & 84 & $<\!10^{-3}$ & 45.2 \\
 & P5 multi function & 10/10 & 1462 & $7.9\times10^{-4}$ & 813.9 \\
 & P6 composed & 0/10 & n/a & 3.5 & 2241.5 \\
\midrule
\multirow{6}{*}{Evol.\ strategy} & P1 single const & 9/10 & 17 & $2.7\times10^{-4}$ & 95.2 \\
 & P2 multi const & 0/10 & n/a & 0.21 & 552.3 \\
 & P3 recursive & 7/10 & 36 & $1.6\times10^{-3}$ & 1696.1 \\
 & P4 higher order & 10/10 & 16 & $4.9\times10^{-5}$ & 206.0 \\
 & P5 multi function & 1/10 & 44 & $1.8\times10^{-2}$ & 1472.2 \\
 & P6 composed & 0/10 & n/a & 0.16 & 1620.9 \\
\bottomrule
\end{tabular}
\end{table}

The three autograd methods converge on identical epochs because they compute identical gradients; the only difference is wall-clock cost. DMCI averages ${\sim}19\times$ the wall time of direct compilation and ${\sim}48\times$ hand-coded PyTorch. Finite differences fail on P6 (3 parameters, composed structure). Evolution strategies fail on P2, P5, and P6.

% ============================================================================
\section{Experiment B: Extended Results}
\label{app:exp_b_ext}
% ============================================================================

\subsection{System Architecture}

An LLM receives a natural-language description of a scientific model and generates a Scheme program encoding the known physics. The compiler translates this to a differentiable module with learnable parameters. The module trains end-to-end via backpropagation: gradients flow from a data-fitting loss through the compiled program to recover the model's physical constants.

We use a locally-hosted Qwen3.6-35B model via an OpenAI-compatible API. The system prompt constrains Scheme output to binary-only arithmetic operators, explicit recursion via \texttt{letrec}, and numeric-only control flow, requirements imposed by the compiler's tagged-value representation. After iterative prompt refinement of the shared system prompt, the final prompt yields correct, compilable Scheme for all 15 models; the cached generations are committed for reproducibility.

\subsection{Models}

\begin{table}[ht]
\centering
\caption{Experiment~B: 15 LLM-generated scientific models. All programs generated by Qwen3.6-35B and compiled without modification.}
\label{tab:exp_b_models}
\small
\begin{tabular}{@{}llrl@{}}
\toprule
ID & Model & Params & Domain \\
\midrule
M01 & Coulomb's law & 1 & Electrostatics \\
M02 & Beer--Lambert law & 1 & Spectroscopy \\
M03 & Michaelis--Menten & 2 & Enzyme kinetics \\
M04 & Arrhenius equation & 2 & Chemical kinetics \\
M05 & Damped oscillator & 3 & Oscillatory dynamics \\
M06 & Logistic growth & 2 & Population dynamics \\
M07 & Power law & 2 & Empirical scaling \\
M08 & Euler ODE solver & 1 & Numerical methods \\
M09 & Taylor series $e^{ax}$ & 1 & Series approximation \\
M10 & SiLU activation & 2 & ML activation \\
M11 & Recursive filter & 1 & Signal processing \\
M12 & Newton $\sqrt{x}$ & 1 & Iterative refinement \\
M13 & Composed transforms & 2 & Function composition \\
M14 & Anomaly scorer & 3 & Weighted scoring \\
M15 & Horner evaluation & 4 & Polynomial evaluation \\
\bottomrule
\end{tabular}
\end{table}

\subsection{Experimental Protocol}

For each model, we compare four methods: (1)~DMCI (the LLM-generated program interpreted through the compiled self-hosted evaluator), (2)~direct compilation of the same Scheme program without the interpreter layer, (3)~hand-coded PyTorch implementing the same mathematical model manually, and (4)~a pure MLP (two hidden layers, 64 units, tanh activation) learning the input-output mapping from data with no physics.

Training uses Adam (lr $= 0.05$) for up to 3000 epochs. A run is counted as \emph{converged} at the first epoch its loss falls below the threshold $10^{-3}$; training then continues for a fixed 50-epoch window before stopping, and we report the loss at this final epoch (not the best epoch), so it can exceed $10^{-3}$ when a run drifts after first crossing the threshold. Runs that never reach $10^{-3}$ train the full 3000 epochs and are counted as non-converged. Five seeds per (method, model) pair yield 300 total runs. Each seed produces a different initial parameter perturbation ($\pm 30\%$ Gaussian noise around a common starting point).

\subsection{Results}

\paragraph{Trajectory equivalence extends to LLM-generated programs.} DMCI and direct compilation match on convergence epoch for 75/75 (model, seed) pairs with zero final-loss difference. The hand-coded PyTorch function (a manual reimplementation of each model, not the tree-walking interpreter used in Experiment~A) matches DMCI's convergence epoch on 73/75 pairs; the two mismatches (both on M11, the recursive filter) differ by 6--8 epochs and at most $1.4 \times 10^{-4}$ in final loss, attributable to minor floating-point divergence in this iterative model. The key result: \emph{LLM-generated Scheme programs, compiled through the self-hosted interpreter, produce the same optimization dynamics as hand-coded PyTorch implementations.}

\paragraph{Universal convergence for physics-informed methods.} All three physics-informed methods (DMCI, direct compilation, and hand-coded PyTorch) converge on every run (75/75 = 100\%). Models spanning 1--4 parameters, with and without recursion, all reach the convergence threshold. Table~\ref{tab:exp_b_results} summarizes per-model DMCI results.

\begin{table}[ht]
\centering
\caption{Experiment~B DMCI results (5 seeds each). All 75 DMCI runs reach the convergence criterion (loss first below $10^{-3}$); the reported Avg Loss is the mean final-epoch loss, which can exceed $10^{-3}$ for a few models (notably M09, inflated by one post-convergence diverging seed). The MLP Conv column reports pure-MLP convergence (65/75) for comparison.}
\label{tab:exp_b_results}
\small
\begin{tabular}{@{}lrrrrr@{}}
\toprule
Model & Params & Avg Epoch & Avg Loss & Avg Time & MLP Conv \\
\midrule
M01 Coulomb & 1 & 556 & 0.0000 & 222\,s & 0/5 \\
M02 Beer--Lambert & 1 & 66 & 0.0016 & 21\,s & 5/5 \\
M03 Michaelis--Menten & 2 & 1753 & 0.0007 & 378\,s & 5/5 \\
M04 Arrhenius & 2 & 308 & 0.0001 & 130\,s & 5/5 \\
M05 Damped oscillator & 3 & 135 & 0.0000 & 103\,s & 1/5 \\
M06 Logistic growth & 2 & 388 & 0.0001 & 231\,s & 5/5 \\
M07 Power law & 2 & 752 & 0.0003 & 232\,s & 4/5 \\
M08 Euler ODE & 1 & 23 & 0.0030 & 241\,s & 5/5 \\
M09 Taylor $e^{ax}$ & 1 & 165 & 0.4803 & 3127\,s & 5/5 \\
M10 SiLU & 2 & 248 & 0.0004 & 160\,s & 5/5 \\
M11 Recursive filter & 1 & 17 & 0.0002 & 257\,s & 5/5 \\
M12 Newton $\sqrt{x}$ & 1 & 0 & 0.0000 & 203\,s & 5/5 \\
M13 Composed transforms & 2 & 356 & 0.0003 & 215\,s & 5/5 \\
M14 Anomaly scorer & 3 & 100 & 0.0002 & 182\,s & 5/5 \\
M15 Horner polynomial & 4 & 2128 & 0.0010 & 1992\,s & 5/5 \\
\bottomrule
\end{tabular}
\end{table}

\paragraph{Pure MLP baseline.} The MLP converges on 65/75 runs (87\%), failing entirely on M01 (Coulomb, $1/r^2$ singularity, 0/5), with 1/5 on M05 (damped oscillator) and 4/5 on M07 (power law). Among the 14 models where the MLP converges at least once, the comparison is uneven: on the models whose compiled form encodes the exact functional structure its mean loss is $50\times$ (M02) to ${\sim}4{,}100\times$ (M14) worse than DMCI, whereas on the models with smooth, monotonic targets the gap closes and the MLP attains comparable or lower loss (M08, M10, M15); it remains worse on other recursive models, such as the M11 filter ($4.4\times$). Where the compiled program already encodes the exact functional structure, it only tunes parameters; where the target is smooth and monotonic, the MLP's flexibility suffices. We exclude M12 from these ratios: DMCI reaches the exact-solution floor (${\sim}10^{-12}$), making the ratio degenerate. Figure~\ref{fig:exp_b_loss} plots the per-model final loss for all four methods.

% Figure: Exp B - Final loss comparison across methods
% Grouped bar chart: DMCI vs Direct vs Handcoded vs MLP per model
% Usage: \input{figures/fig_b_loss_comparison.tex}
\begin{figure*}[t]
\centering
\begin{tikzpicture}
\definecolor{darkgold}{RGB}{184,134,11}
\begin{axis}[
    width=0.95\textwidth,
    height=0.40\textwidth,
    ybar,
    bar width=3.2pt,
    ymode=log,
    ylabel={Mean final loss},
    xlabel={Model},
    symbolic x coords={M01,M02,M03,M04,M05,M06,M07,M08,M09,M10,M11,M12,M13,M14,M15},
    xtick=data,
    x tick label style={font=\tiny, rotate=45, anchor=east},
    tick label style={font=\small},
    label style={font=\small},
    ymin=1e-13, ymax=1e4,
    grid=major,
    grid style={gray!20},
    enlarge x limits=0.05,
    legend style={
        font=\small,
        at={(0.5,1.02)},
        anchor=south,
        legend columns=4,
        draw=none,
        column sep=6pt,
    },
    legend cell align={left},
    area legend,
]

% Handcoded PyTorch
\addplot[fill=green!60!black, draw=green!60!black]
    coordinates {
        (M01, 4.37e-05) (M02, 1.64e-03) (M03, 6.67e-04) (M04, 8.69e-05) (M05, 5.01e-06)
        (M06, 1.48e-04) (M07, 3.40e-04) (M08, 3.04e-03) (M09, 4.80e-01) (M10, 4.20e-04)
        (M11, 1.34e-04) (M12, 1.48e-12) (M13, 2.80e-04) (M14, 2.04e-04) (M15, 9.73e-04)
    };
\addlegendentry{Handcoded}

% Direct compiled
\addplot[fill=darkgold, draw=darkgold]
    coordinates {
        (M01, 4.39e-05) (M02, 1.64e-03) (M03, 6.67e-04) (M04, 8.69e-05) (M05, 5.01e-06)
        (M06, 1.48e-04) (M07, 3.40e-04) (M08, 3.04e-03) (M09, 4.80e-01) (M10, 4.20e-04)
        (M11, 1.52e-04) (M12, 1.48e-12) (M13, 2.80e-04) (M14, 2.04e-04) (M15, 9.73e-04)
    };
\addlegendentry{Direct}

% DMCI
\addplot[fill=blue!70, draw=blue!70]
    coordinates {
        (M01, 4.39e-05) (M02, 1.64e-03) (M03, 6.67e-04) (M04, 8.69e-05) (M05, 5.01e-06)
        (M06, 1.48e-04) (M07, 3.40e-04) (M08, 3.04e-03) (M09, 4.80e-01) (M10, 4.20e-04)
        (M11, 1.52e-04) (M12, 1.48e-12) (M13, 2.80e-04) (M14, 2.04e-04) (M15, 9.73e-04)
    };
\addlegendentry{DMCI}

% Pure MLP
\addplot[fill=red!70, draw=red!70]
    coordinates {
        (M01, 2.37e+02) (M02, 8.25e-02) (M03, 1.80e-01) (M04, 5.60e-03) (M05, 2.23e-01)
        (M06, 3.70e-01) (M07, 2.82e-01) (M08, 5.35e-04) (M09, 1.10e+00) (M10, 2.77e-04)
        (M11, 6.71e-04) (M12, 4.93e-02) (M13, 7.14e-02) (M14, 8.29e-01) (M15, 5.39e-04)
    };
\addlegendentry{Pure MLP}

% --- Annotations ----------------------------------------------------------
% Direction cue (top-right corner is empty)
\node[font=\scriptsize, anchor=north east] at (rel axis cs:0.995,0.96) {lower loss $=$ better};
% Mark pure-MLP non-convergence: seed-convergence count shown where below 5/5.
% Bars descend from the y=10^0 baseline, so labels are placed in the empty band
% above each model's bars (only M01's MLP bar exceeds 1) to avoid overlap, with a
% thin leader (vertical drop + short horizontal jog) tying each to its MLP bar
% (the rightmost bar of the group, ~+4.8pt from the model's symbolic x).
\node[red!80!black, font=\scriptsize, anchor=south, xshift=3.5pt, inner sep=1pt] (lbl01) at (axis cs:M01, 5e+02) {0/5};
\node[red!80!black, font=\scriptsize, anchor=south, xshift=3.5pt, inner sep=1pt] (lbl05) at (axis cs:M05, 1e+01) {0/5};
\node[red!80!black, font=\scriptsize, anchor=south, xshift=3.5pt, inner sep=1pt] (lbl07) at (axis cs:M07, 1e+01) {4/5};
\draw[red!75!black, line width=0.3pt, shorten <=1pt] (lbl01.south) -| ([xshift=7.8pt] axis cs:M01, 3.72e+02);
\draw[red!75!black, line width=0.3pt, shorten <=1pt] (lbl05.south) -| ([xshift=7.8pt] axis cs:M05, 1);
\draw[red!75!black, line width=0.3pt, shorten <=1pt] (lbl07.south) -| ([xshift=7.8pt] axis cs:M07, 1);
% Highlight M12 as the BEST result so the longest bars are not misread as worst.
\node[font=\scriptsize, anchor=west, align=left] (m12note) at (axis cs:M02, 3e-11)
    {M12: exact-solution floor ($\sim$10$^{-12}$), best result};
\draw[->, gray!60!black, shorten >=2pt] (m12note.east) to[bend right=12] (axis cs:M12, 4e-12);

\end{axis}
\end{tikzpicture}
\caption{Final loss across 15 LLM-generated models (Experiment~B, 5 seeds each; log scale, lower is better). The Handcoded, Direct, and DMCI bars sit at equal height on every model: DMCI and direct compilation are numerically identical, and handcoded differs only marginally (M01, M11), consistent with the trajectory-level gradient equivalence found for Experiment~B (all 75/75 (model, seed) pairs match on convergence epoch with zero final-loss difference). The pure MLP baseline matches or beats DMCI on the models with smooth, monotonic targets (M08, M10, M15) but is up to ${\sim}4{,}300\times$ worse on the sharp closed-form models (M14) and fails to converge on M01 (Coulomb) and M05 (damped oscillator); red labels give the MLP seed-convergence count wherever it falls below 5/5. The gap is not simply about program structure: the MLP is also worse on the recursive filter M11 and the iterative Newton solver M12, so the programmatic inductive bias DMCI preserves automatically is most valuable on the sharp closed-form targets. M12 attains the exact-solution floor (${\sim}10^{-12}$), the best result in the panel; because the bars are drawn from the $10^{0}$ baseline of the log axis, this best result yields the longest bars.}
\label{fig:exp_b_loss}
\end{figure*}

\paragraph{Wall-clock overhead.} DMCI averages 513\,s per run vs.\ 7.0\,s for direct compilation ($73\times$ overhead) and 2.1\,s for hand-coded PyTorch ($248\times$). The overhead is entirely wall-clock; gradient quality is identical. Complex recursive models (M09 Taylor series: 3127\,s; M15 Horner: 1992\,s) dominate the cost due to deep computation graphs through the interpreter. Simple equation models (M02 Beer--Lambert: 21\,s) incur minimal overhead.

\subsection{Why the Extended Language Matters}

With the first-order expression language of the original Neural Compiler~\citep{sheneman2026neural}, these models would be limited to flat arithmetic formulas. The self-hosting Scheme subset lets them use the interpreter's recursion, closures, and helper functions:
\begin{itemize}[leftmargin=2em,topsep=2pt,itemsep=1pt]
    \item Recursion and iteration: M08 (Euler integration, 10-step loop), M09 (Taylor series, 9-term summation), M11 (recursive filter, 8 steps), M12 (Newton's method, 5 iterations)
    \item Closures and higher-order composition: M13 (composed transforms, lambda-based pipeline)
\end{itemize}
Four of the 15 models are implemented with recursion or iteration and one more with closures and higher-order functions; the remaining ten are first-order arithmetic expressions of the kind the original Neural Compiler already handled.

% ============================================================================
\section{Experiment C: Extended Results}
\label{app:exp_c_ext}
% ============================================================================

\subsection{Models}

We select eight models across three categories (Table~\ref{tab:exp_c_models}):

\begin{itemize}[leftmargin=2em,topsep=2pt,itemsep=1pt]
    \item \textbf{Coupled ODEs} (C01--C03): Systems of differential equations solved by recursive Euler integration. Lotka--Volterra (predator-prey), SIR epidemic dynamics, and radioactive decay chains. Each requires 20--30 recursive integration steps with coupled state variables.
    \item \textbf{Iterative/recursive computations} (C04--C05): The logistic map (iterated 10 times) and continued fraction expansion (depth 8). These test gradient flow through long chains of dependent computations.
    \item \textbf{Recursive filters} (C06--C08): Damped pendulum (20-step ODE), second-order IIR filter (12 steps with feedback), and a two-stage cascaded exponential moving average (10 steps). Signal processing models where output at each step depends on previous outputs.
\end{itemize}

\begin{table}[ht]
\centering
\caption{Experiment~C: eight recursive scientific models.}
\label{tab:exp_c_models}
\small
\begin{tabular}{@{}llrrl@{}}
\toprule
ID & Model & Params & Depth & Category \\
\midrule
C01 & Lotka--Volterra & 2 & 20 & Coupled ODE \\
C02 & SIR epidemic & 2 & 30 & Coupled ODE \\
C03 & Decay chain & 2 & 25 & Coupled ODE \\
C04 & Logistic map & 1 & 10 & Iterative \\
C05 & Continued fraction & 1 & 8 & Recursive \\
C06 & Damped pendulum & 2 & 20 & Recursive filter \\
C07 & IIR filter & 3 & 12 & Recursive filter \\
C08 & Cascaded EMA & 2 & 10 & Recursive filter \\
\bottomrule
\end{tabular}
\end{table}

\subsection{Experimental Protocol}

Four methods as in Experiment~B: DMCI, direct compilation, hand-coded PyTorch, and pure MLP. Each model is fit to 20 $(\text{input}, \text{target})$ points sampled evenly over that model's input range (e.g.\ $[2.0, 15.0]$ for C01, $[0.1, 1.5]$ for C06). Training uses Adam (lr $= 0.05$), up to 3000 epochs with the same convergence-based stopping rule as Experiment~B (training continues for 50 epochs after the loss first falls below $10^{-3}$, or runs the full 3000 epochs if it never does). Five seeds per (method, model) pair for six models; three seeds for C01 and C06 (whose long convergence, ${\sim}1{,}400$--$1{,}800$ epochs, produces DMCI runtimes of 6--9.6 hours per seed). Total: 144 runs across the four methods, all complete.

\subsection{Results}

\paragraph{Trajectory equivalence through recursion.} Among the 36 completed (model, seed) pairs with all four methods, DMCI and direct compilation match on convergence epoch for 36/36 pairs. DMCI and hand-coded PyTorch match on 35/36 (the single mismatch, on C08 seed~3, shows convergence epochs of 43 vs.\ 37, a six-epoch difference from floating-point accumulation across 10 recursive steps). Gradients remain faithful to direct compilation even through 30 recursive calls.

\paragraph{All physics-informed methods converge.} DMCI converges on 36/36 completed runs (100\%), direct compilation 36/36, and hand-coded PyTorch 36/36. Table~\ref{tab:exp_c_results} summarizes per-model results.

\begin{table}[ht]
\centering
\caption{Experiment~C results (36 (model, seed) pairs per method, all complete). DMCI converges on all 36 runs; the pure MLP fails to converge on C01 (Lotka--Volterra; all 3 seeds), shown as n/a. The Loss Ratio column is the mean-final-loss ratio (MLP/DMCI); for C07 and C08 this mean is inflated by a single post-convergence diverging MLP seed (robust medians ${\sim}13\times$ and ${\sim}2\times$, respectively). The displayed Loss columns are rounded, whereas ratios are computed from unrounded means, so dividing the rounded columns need not reproduce the ratio exactly (e.g.\ C08: $0.3699/0.0003356=1{,}102$, not $0.3699/0.0003=1{,}233$).}
\label{tab:exp_c_results}
\small
\begin{tabular}{@{}lrrrrrr@{}}
\toprule
Model & Depth & \multicolumn{2}{c}{DMCI} & \multicolumn{2}{c}{MLP} & Loss \\
 & & Epoch & Loss & Epoch & Loss & Ratio \\
\midrule
C01 Lotka--Volterra & 20 & 1410 & 0.0008 & n/a & n/a & n/a \\
C02 SIR epidemic & 30 & 5 & 0.0000 & 140 & 0.0002 & 25$\times$ \\
C03 Decay chain & 25 & 32 & 0.0061 & 611 & 0.2559 & 42$\times$ \\
C04 Logistic map & 10 & 33 & 0.0001 & 70 & 0.0004 & 4$\times$ \\
C05 Continued fraction & 8 & 37 & 0.0023 & 123 & 0.0004 & 0.2$\times$ \\
C06 Damped pendulum & 20 & 1829 & 0.0009 & 227 & 0.0006 & 0.7$\times$ \\
C07 IIR filter & 12 & 49 & 0.0004 & 1339 & 0.0328 & 74$\times$ \\
C08 Cascaded EMA & 10 & 24 & 0.0003 & 939 & 0.3699 & 1102$\times$ \\
\bottomrule
\end{tabular}
\end{table}

\begin{figure}[ht]
\centering
\begin{tikzpicture}
\begin{semilogyaxis}[
    width=0.92\columnwidth,
    height=0.55\columnwidth,
    xlabel={Epoch},
    ylabel={MSE Loss},
    xmin=0, xmax=150,
    ymin=1e-4, ymax=300,
    legend pos=north east,
    legend style={font=\small, draw=none, fill=white, fill opacity=0.8, text opacity=1},
    grid=major,
    grid style={gray!30},
    tick label style={font=\small},
    label style={font=\small},
]
\addplot[blue, thick, mark=none] table[x=epoch, y=mean] {figures/c08_dmci_log.dat};
\addlegendentry{DMCI}
\addplot[red!70!black, only marks, mark=diamond*, mark size=2.5pt, mark repeat=6] table[x=epoch, y=mean] {figures/c08_direct_log.dat};
\addlegendentry{Direct compilation}
\addplot[red, thick, mark=none] table[x=epoch, y=mean] {figures/c08_mlp_log.dat};
\addlegendentry{Pure MLP}
\draw[<->, gray, thick] (axis cs:93,0.000336) -- (axis cs:93,0.0237) node[midway, right, font=\footnotesize, text=black] {$71\times$};
\end{semilogyaxis}
\end{tikzpicture}
\caption{C08 (cascaded EMA): loss trajectory for a two-stage recursive filter with two learnable smoothing constants ($\alpha,\beta$) and recursion depth~10, plotted as mean MSE over 5 seeds on a log scale through epoch~150. DMCI (blue line) and direct compilation (red diamonds) are bit-for-bit identical at every epoch and seed (the curves overplot exactly) confirming that gradients remain faithful through all 10 recursive steps; the two methods differ only in wall-clock cost. The pure MLP (red), lacking the program's structural prior, converges far more slowly: averaged over seeds it first reaches loss~$<\!10^{-3}$ at epoch~939 versus epoch~24 for DMCI, a ${\sim}39\times$ convergence-speed gap. The arrow marks the ${\sim}71\times$ DMCI--MLP loss gap at epoch~93 (a mid-training snapshot). The larger mean final-loss gap quoted in the text ($1{,}102\times$) is inflated by one post-convergence diverging MLP seed (final loss~$1.85$); the other four seeds finish near $5$--$7\times10^{-4}$ ($\approx2\times$ DMCI's $3.4\times10^{-4}$), so the robust, all-seed effect is faster and more stable convergence rather than the mean final-loss ratio.}
\label{fig:c08_loss}
\end{figure}

\paragraph{Recursive structure as inductive bias.} The pure MLP converges on every model except C01 (Lotka--Volterra), where it fails on all three seeds; elsewhere it reaches dramatically higher loss. The most striking case is C08 (cascaded EMA): DMCI achieves loss 0.0003 while the MLP reaches only 0.3699, a factor of $1102\times$ in mean final loss, though this mean is inflated by one post-convergence diverging MLP seed and collapses to ${\sim}2\times$ by median (see the Figure~\ref{fig:c08_loss} caption), so the robust effect is faster, more stable convergence rather than the mean-loss ratio. The cascaded filter's two-stage recursive structure imposes strong inductive bias that the MLP cannot match from data alone. Similarly, C07 (IIR filter) shows a $74\times$ mean gap (though this collapses to ${\sim}13\times$ by median, inflated by one diverging MLP seed) and C03 (decay chain) a $42\times$ gap. The exceptions are C05 (continued fraction) and C06 (damped pendulum), where the MLP achieves lower loss than DMCI; their smooth, monotonic outputs are well-suited to neural approximation. Figure~\ref{fig:c08_loss} shows the loss trajectory for C08, illustrating the convergence gap.

\paragraph{Why first-order compilation cannot express these models.}
Lotka--Volterra (C01) requires threading two coupled state variables through 20 recursive Euler integration steps. The damped pendulum (C06) has coupled angular position and velocity over 20 recursive ODE steps. Cascaded EMA (C08) threads two filter stages through 10 recursive steps. In each case, the model requires recursion and mutable state threading that cannot be expressed as flat arithmetic.

\paragraph{Wall-clock cost of deep recursion.} DMCI wall times range from ${\sim}250\,$s (C04, C05) to ${\sim}7$--$8$ hours for C01 and C06; the latter are driven by their ${\sim}1{,}400$--$1{,}800$ epochs to convergence rather than by recursion depth (C02 and C03 use more recursive steps yet finish in under an hour). Direct compilation runs the same models in 2--14\,s (most) to ${\sim}250\,$s (C01, C06), a per-run overhead of $74$--$158\times$.

% ============================================================================
\section{Experiment LIM-ENSO: Real-Data Case Study: Full Results and Extended Methodology}
\label{app:exp_lim_enso_ext}
% ============================================================================

This appendix gives the complete LIM-ENSO case study summarized in Section~\ref{sec:exp_c}: the full
model description, the gate-first protocol, the operator diagnostics, held-out forecast skill
(Table~\ref{tab:lim_enso_forecast}), the optimizer-portfolio and dimension-scaling sweeps
(Tables~\ref{tab:lim_enso_solver} and~\ref{tab:lim_enso_scaling}), the LLM-driven model-selection
comparison (Table~\ref{tab:lim_enso_select}), and all figures
(Figures~\ref{fig:lim_enso_eigenmode}--\ref{fig:lim_enso_forecast}), followed by the extended
methodology (data, binding contract, the Scheme Kalman-NLL program, and gate results).

% ============================================================================
% Flagship experiment: LIM/ENSO Kalman-MLE folded through the DMCI interpreter.
% Single-file paper convention: \input this from paper2.tex inside \section{Experiments}
% (after Experiment C, before Experiment H), and \input the three figure files
% (figures/fig_lim_enso_{eigenmode,forecast,recon}.tex) where the floats should land.
% Requires \providecommand{\TODO}{...} (defined locally in the figure files; redefined here
% so the placeholder tables render if this section is \input on its own).
% ============================================================================

\providecommand{\TODO}{\textcolor{red!75!black}{\textsc{[todo]}}}

% ----------------------------------------------------------------------------
\subsection{Experiment LIM-ENSO: A Real-Data Dynamical-Systems MLE as a Program}
\label{sec:exp_lim_enso}
% ----------------------------------------------------------------------------

Experiments~A--C recover known constants from data the same program generated. The natural
objection is that this is a synthetic setting: the targets are self-consistent, the noise is
controlled, and no observational dataset is ever fit. This experiment answers that objection
directly. We fit a real, multivariate, dynamical-systems maximum-likelihood problem, a Linear
Inverse Model (LIM) of the El~Ni\~no--Southern Oscillation (ENSO), with the \emph{entire}
Kalman-filter negative log-likelihood (NLL) folded through the compiled meta-circular
interpreter. The transition operator and noise covariances enter as ordinary parameter
tensors bound by name into the interpreter's environment; reverse-mode autograd through the
interpreter produces \emph{exact} gradients of the NLL with respect to them; and an optimizer
recovers the LIM by maximum likelihood. The recovered operator is then scored on held-out
ENSO forecast skill and put through the standard LIM operator diagnostics. The headline is a
\emph{capability}: a real-data dynamical-systems MLE, expressed as a program an LLM can emit,
calibrated by exact DMCI gradients, with zero per-structure transcription.

\paragraph{The model.} A LIM treats tropical sea-surface-temperature (SST) anomalies as a
linear-Gaussian state-space process~\citep{penland1995optimal,penland1996stochastic}. We build
the state from the leading empirical orthogonal functions (EOFs) of ERSSTv5~\citep{huang2017extended}
monthly SST anomalies over the tropical Indo-Pacific ($30^\circ$S--$30^\circ$N,
$30^\circ$E--$290^\circ$E, 1950--2024): climatology-removed, linearly detrended, 3-month
running-mean, area-weighted by $\sqrt{\cos\,\mathrm{lat}}$, then reduced to the leading $D$
principal-component (PC) time series by an economy SVD. The state $x_k\in\mathbb{R}^D$ is the
PC vector at month $k$; the LIM is the discrete-time linear-Gaussian model
\begin{equation}
\label{eq:lim_ssm}
x_{k+1} = F\,x_k + w_k, \quad w_k \sim \mathcal{N}(0,Q);
\qquad
y_k = x_k + v_k, \quad v_k \sim \mathcal{N}(0,R),
\end{equation}
with transition operator $F\in\mathbb{R}^{D\times D}$, process covariance $Q$, and
observation covariance $R$ (identity observation operator $H=I$, since the PCs \emph{are} the
observed state). Up to an additive constant and overall scale, the marginal NLL of the observed
PC window is the standard Kalman-filter
accumulation: predict $x^-=Fx$, $P^-=FPF^\top+Q$; innovation $e=y_k-x^-$ with covariance
$S=P^-+R$; gain $K=P^-S^{-1}$; update $x=x^-+Ke$, $P=(I-K)P^-$; and increment
$\mathrm{NLL}\mathrel{+}= \log\det S + e^\top S^{-1} e$ (this drops the per-step $\tfrac12$ factor and the
$\tfrac{D}{2}\log 2\pi$ constant of the Gaussian negative log-likelihood, a monotone affine
transform that leaves the $\arg\min$ over $(F,Q,R)$ unchanged). Fitting a LIM is exactly maximizing
this likelihood over $(F,Q,R)$, a genuine multivariate dynamical-systems MLE rather than a
curve fit.

\paragraph{The method: one program, every operator, exact gradients.} The Kalman NLL of
Eq.~\eqref{eq:lim_ssm} is written \emph{once} as a Scheme program (Appendix~\ref{app:exp_lim_enso_ext},
Listing~\ref{lst:lim_kalman}) and run through the compiled DMCI interpreter. $F$, $Q$, $R$, and
the observation window \texttt{obs} are bound as \texttt{as\_matrix} inputs in the
interpreter's environment; \texttt{(ref obs k)} gathers the $D$-vector of PCs at month $k$,
and the loop counter $k$ (data-independent, as the interpreter requires) drives a
tail-recursive \texttt{recur} that trampolines to an $O(T)$ walk. The covariance algebra uses
the interpreter's batched matrix surface (\texttt{matmul}, \texttt{matvec}, \texttt{transpose},
\texttt{inv}, \texttt{dot}), and the $\log\det S$ term uses a native \texttt{logdet} primitive
(a \texttt{slogdet}-based log-determinant). The \texttt{logdet} primitive matters numerically:
$\det S$ is a product of $D$ sub-unit eigenvalues and underflows ($\sim\!10^{-20}$ at $D=20$)
into the \texttt{log} clamp, corrupting the log-det term for $D\ge 15$; \texttt{logdet} sums
$\log$ of the LU pivots and stays exact (relative error ${\sim}10^{-8}$ at $D=20$). Because the
program is fixed and only the bound tensors change, the \emph{same} compiled artifact serves
every operator: $F$ is supplied as data, so swapping a dense $F$ for a diagonal, low-rank, or
companion-structured $F$ changes only the bound factor tensors and the in-program combine
algebra, never the compiled interpreter and never a line of hand-written gradient code. This is
the program-as-data thesis made concrete on a real scientific model: the LLM proposes the model
\emph{as code}, and DMCI calibrates it with exact gradients and zero per-structure transcription.

\paragraph{Gate-first protocol and correctness oracles.} DMCI is the gradient engine, not a
faster solver, so we never frame this experiment as beating an existing toolkit. We hold two
\emph{correctness oracles}: a NumPy twin whose arithmetic and accumulation order mirror the
Scheme program term-for-term (run in float32 for tight forward parity and float64 for accurate
finite differences), and \texttt{dynamax} as an optional cross-check on the marginal
log-likelihood (compared after matching this objective's additive constant and scale). The MLE runs only behind a numerical go/no-go gate that re-runs, at the real
$(D,T)$, the checks that license the fit: a finite NLL over the whole window (G1), a
positive-definite innovation covariance at every step (G2), forward parity of the DMCI NLL
against the float32 twin (G3), agreement of the DMCI autograd gradient with a central finite
difference on the float64 twin (G4), and a det-underflow margin above the float32 denormal
floor (G5). The gate returns \textbf{GO at $D=6,10,15,20$} on the real PC window: the DMCI NLL
matches the float32 twin within the gate's $2\times 10^{-3}$ relative tolerance, the
autograd-vs-finite-difference relative gradient error is $\approx 5\times 10^{-5}$ (well within
the $3\times 10^{-2}$ tolerance), and the innovation covariance stays positive-definite at every
step, with the worst-step $\det S$ falling from $\approx 1.0\times 10^{-4}$ at $D=6$ to
$\approx 4.1\times 10^{-12}$ at $D=20$, all above the $1.2\times 10^{-38}$ float32 denormal floor.
Only on GO does the fit run.

\paragraph{Scientific reference operator (never an NLL competitor).} The scientific reference
forecaster is the classical Green-function LIM, $G(\tau)=C(\tau)\,C(0)^{-1}$, the
Penland--Sardeshmukh moment estimator~\citep{penland1995optimal,penland1996stochastic} fit
\emph{without} any likelihood from the lag-$\tau$ and lag-$0$ covariances of the train window.
It is the established LIM operator a climate scientist would write down, and it is the
benchmark for \emph{scientific} skill, never an NLL competitor and never a speed contest. The
entire scientific pay-off layer, held-out forecast skill, the physical-units Nino-3.4
reconstruction, and the operator eigen-diagnostics, is pure NumPy/SciPy on the fitted float
matrices; nothing there touches the interpreter. The interpreter's sole job was to make the
likelihood differentiable.

\paragraph{What the recovered operator says about ENSO.} On the real ERSSTv5 train window, the
LIM operator carries the canonical ENSO signature at every $D$. The least-damped
\emph{oscillatory} eigenmode of the continuous generator $L=\log(F)/\Delta t$ falls inside the
ENSO band, with oscillation period ${\sim}2.3$--$3.3$\,yr and $e$-folding decay
${\sim}14$--$24$\,mo (Figure~\ref{fig:lim_enso_eigenmode}); we report this in-band oscillatory
mode deliberately, since the globally least-damped eigenmode is a near-stationary decadal mode
that is not ENSO. The selection rule is fixed in advance and stated explicitly: among the
oscillatory, decaying eigenmodes of $L$ (finite period, $|\mathrm{eig}\,F|<1$) whose period lies
in 2--7\,yr \emph{and} whose $e$-folding decay lies in 4--30\,mo, we report the least-damped one
(largest decay timescale); if none falls in-band we fall back to the least-damped oscillatory
mode overall, flagged out-of-band. This is a principled filter, not a post-hoc pick: it returns an
in-band mode at every $D\in\{6,10,15,20\}$, so the fallback is never exercised here. The physical-units reconstruction is faithful: un-normalizing the PC state,
projecting through the area-weighted EOFs, un-weighting by $\sqrt{\cos\,\mathrm{lat}}$, and
area-averaging over the Nino-3.4 box yields an index that tracks the leading PC at correlation
$r=0.97$ over the full record (Figure~\ref{fig:lim_enso_recon}); since that index is built from
the full $D$-mode state in physical space rather than from PC1 alone, the agreement shows that
the leading mode of the state space closely corresponds to the canonical ENSO index. The dynamics also pass the data
$\tau$-test, the Penland--Sardeshmukh linearity/Markov check that the lag-$2\tau$ Green
function matches the squared lag-$\tau$ Green function: the relative residual is $0.107$ at the
headline $D=10$ (and stays in the linearly-consistent regime across $D$), the non-trivial
data-moment test rather than the trivial operator-level $F^2$ versus $F\!\cdot\!F$ identity.

% Figure: LIM-ENSO recovered ENSO eigenmode (period / decay) vs state dimension D.
% Two panels: (left) oscillation period; (right) e-folding decay timescale.
% The shaded band marks the canonical ENSO window; the mode stays in-band at every D.
% Usage: \input{figures/fig_lim_enso_eigenmode.tex}
% Data: figures/lim_enso_eigenmode.dat (agg/T4_scaling_by_D.csv + Penland-Sardeshmukh G(1))
%   green_* columns are REAL NOW (Penland-Sardeshmukh G(1) on the real ERSSTv5 PCs);
%   fitted_* columns are REAL NOW (DMCI-MLE-recovered F, dmci_adam seed-mean).
% FORMATTING NOTES (pgfplots 1.12 compatibility):
%   - The ENSO band is drawn as a plain filled polygon (\closedcycle), NOT via the
%     `fill between' library: fill-between inside a 2nd groupplot cell corrupts cell
%     advancement in pgfplots 1.12 and stacks both panels on top of each other.
%   - The shared legend lives in panel 1 but is positioned BELOW and centered across
%     both panels: with panel width W and horizontal sep S, the group centre sits at
%     x-fraction 1+S/(2W) of panel 1 (= 1.159 for W=6.3cm, S=2.0cm).
\providecommand{\TODO}{\textcolor{red!75!black}{\textsc{[todo]}}}
\definecolor{darkgreen}{RGB}{0,135,60}
\begin{figure}[tbp]
\centering
\begin{tikzpicture}
\begin{groupplot}[
    group style={group size=2 by 1, horizontal sep=2.0cm},
    width=6.3cm, height=5.0cm, scale only axis,
    xlabel={State dimension $D$}, xmin=4, xmax=22, xtick={6,10,15,20},
    tick label style={font=\scriptsize}, label style={font=\small},
    title style={font=\small}, grid=major, grid style={gray!18},
]
% ---- panel 1: oscillation period (years); ENSO band 2--7 yr ----
% The shared legend is attached here, centered under BOTH panels (group centre = x-frac 1.159).
\nextgroupplot[title={ENSO-mode period}, ylabel={Period (yr)},
    ymin=1.5, ymax=7.5, ytick={2,3,4,5,6,7},
    legend style={font=\scriptsize, draw=gray!40, fill=white,
                  legend columns=3, column sep=10pt,
                  at={(1.159,-0.30)}, anchor=north},
    legend cell align={left}]
% canonical ENSO period band 2--7 yr (filled polygon; see header note)
\addplot[draw=none, fill=blue!8, forget plot] coordinates {(4,2)(22,2)(22,7)(4,7)} \closedcycle;
\addlegendimage{area legend, fill=blue!8}\addlegendentry{Canonical ENSO range}
% Green-function reference operator G(1) (real)
\addplot[mark=*, mark size=2.4pt, draw=darkgreen, thick, mark options={fill=darkgreen}]
    table[x=D, y expr=\thisrow{green_period}/12] {figures/lim_enso_eigenmode.dat};
\addlegendentry{Green-function reference $G(1)$}
% Fitted-LIM operator (DMCI-MLE-recovered, real)
\addplot[mark=square*, mark size=2.4pt, draw=blue!70, thick, mark options={fill=blue!70}]
    table[x=D, y expr=\thisrow{fitted_period}/12] {figures/lim_enso_eigenmode.dat};
\addlegendentry{Fitted LIM (DMCI-MLE)}
% ---- panel 2: e-folding decay (months); ENSO band 4--30 mo ----
\nextgroupplot[title={ENSO-mode decay}, ylabel={$e$-folding (mo)},
    ymin=0, ymax=34, ytick={0,10,20,30}]
\addplot[draw=none, fill=blue!8, forget plot] coordinates {(4,4)(22,4)(22,30)(4,30)} \closedcycle;
\addplot[mark=*, mark size=2.4pt, draw=darkgreen, thick, mark options={fill=darkgreen}]
    table[x=D, y=green_decay] {figures/lim_enso_eigenmode.dat};
\addplot[mark=square*, mark size=2.4pt, draw=blue!70, thick, mark options={fill=blue!70}]
    table[x=D, y=fitted_decay] {figures/lim_enso_eigenmode.dat};
\end{groupplot}
\end{tikzpicture}
\caption{Physical-timescale diagnostic of the fitted LIM operator. For each state dimension $D$,
we extract the ENSO-band oscillatory eigenmode of the continuous-time generator
$L=\log(F)/\Delta t$ and report its oscillation period (left) and $e$-folding decay timescale
(right). The shaded region marks the canonical ENSO range: 2--7\,yr periods and 4--30\,mo decay
times. The DMCI-MLE fitted operator (filled squares, $\blacksquare$) remains inside this range for every $D$,
indicating that the likelihood-calibrated operator recovers a physically plausible ENSO-like
damped oscillation rather than only fitting forecast statistics. The fitted mode also tracks the
classical Penland--Sardeshmukh Green-function LIM reference (filled circles, $\bullet$)
$G(1)=C(1)C(0)^{-1}$, with slightly faster decay at larger $D$. At each $D$, the plotted mode is
the least-damped oscillatory mode satisfying the ENSO-band selection rule in
Section~\ref{sec:exp_lim_enso}; the globally least-damped mode is excluded because it is a
near-stationary decadal mode rather than ENSO.}
\label{fig:lim_enso_eigenmode}
\end{figure}

\paragraph{Held-out forecast skill.} On the held-out window, the fitted LIM is scored against
three references, persistence, damped persistence (per-PC lag-$h$ autocorrelation), and the
Green-function operator, by the anomaly correlation coefficient (ACC) and RMSE of the Nino-3.4
index at leads of 3, 6, 9, and 12 months (Table~\ref{tab:lim_enso_forecast},
Figure~\ref{fig:lim_enso_forecast}). The fitted LIM, which maximizes the same NLL the
interpreter differentiates, beats both persistence and damped persistence at \emph{every} lead
and is comparable to the classical Green-function reference, matching it at short leads and
edging it at the 9- and 12-month leads ($0.51$ vs $0.48$ ACC at 12 months). At $D=10$ it attains
$0.90$/$0.65$/$0.50$/$0.51$ Nino-3.4 ACC at the
$3$/$6$/$9$/$12$-month leads, against $0.86$/$0.54$/$0.20$/$-0.08$ for persistence; the skill
gap over the persistence baselines widens with lead, the standard LIM signature. The
Green-function operator likewise beats both persistence baselines at every lead, the scientific
reference curve. We select the optimizer-portfolio winner on held-out forecast skill (Nino-3.4
ACC at the headline lead), never on train NLL: the lesson carried from the composite-calibration
study (Appendix~\ref{app:exp_i}) is that converged is not recovered, so the model that
\emph{forecasts} best wins, not the one that drives the training likelihood lowest.

\begin{table}[t]
\centering
\caption{Held-out Nino-3.4 forecast skill for the DMCI-calibrated LIM. At $D=10$, the dense
fitted LIM operator is evaluated on held-out ERSSTv5 data at 3-, 6-, 9-, and 12-month forecast
leads. The DMCI-MLE fitted LIM outperforms persistence and damped persistence at every lead and
remains comparable to the classical Green-function LIM reference, showing that the operator
recovered by differentiating the Kalman likelihood through the compiled interpreter has real
predictive skill rather than merely fitting the training objective. ACC is the anomaly
correlation coefficient, where higher is better; RMSE is lower better. Best values in each column
are bolded.}
\label{tab:lim_enso_forecast}
\small
\begin{tabular}{@{}lcccccccc@{}}
\toprule
 & \multicolumn{4}{c}{Nino-3.4 ACC $\uparrow$} & \multicolumn{4}{c}{RMSE $\downarrow$} \\
\cmidrule(lr){2-5}\cmidrule(lr){6-9}
Method & $3$mo & $6$mo & $9$mo & $12$mo & $3$mo & $6$mo & $9$mo & $12$mo \\
\midrule
Fitted LIM (DMCI-MLE)        & $0.90$ & $0.65$ & $\mathbf{0.50}$ & $\mathbf{0.51}$ & $0.39$ & $0.69$ & $\mathbf{0.79}$ & $\mathbf{0.77}$ \\
Persistence                  & $0.86$ & $0.54$ & $0.20$ & $-0.08$ & $0.45$ & $0.80$ & $1.07$ & $1.24$ \\
Damped persistence           & $0.85$ & $0.52$ & $0.17$ & $0.22$ & $0.44$ & $0.72$ & $0.84$ & $0.86$ \\
\addlinespace
Green-function $G(\tau)$ (reference) & $\mathbf{0.91}$ & $\mathbf{0.66}$ & $0.49$ & $0.48$ & $\mathbf{0.38}$ & $\mathbf{0.67}$ & $\mathbf{0.79}$ & $0.78$ \\
\bottomrule
\end{tabular}
\end{table}

% Figure: LIM-ENSO Nino-3.4 reconstruction timeseries (truth index vs leading PC).
% The reconstructed Nino-3.4 index tracks PC1 at r=0.97 over the full record.
% Usage: \input{figures/fig_lim_enso_recon.tex}
% Data: figures/lim_enso_nino34_recon.dat (make_exp_lim_enso_data.py).
%   The reconstructed index is built from the FULL D-mode state in physical (SST) space
%   (reconstruct_field sums all D PCs through the area-weighted EOFs), NOT from PC1 alone;
%   the r=0.97 agreement with PC1 is therefore evidence that the leading mode carries ENSO.
\definecolor{brightred}{RGB}{227,26,28}
\begin{figure}[tbp]
\centering
\begin{tikzpicture}
\begin{axis}[
    width=11.2cm, height=4.4cm, scale only axis,
    xlabel={Year}, ylabel={Standardized index (z-score)},
    xmin=1950, xmax=2025, xtick={1950,1960,1970,1980,1990,2000,2010,2020},
    ymin=-3.6, ymax=3.6,
    tick label style={font=\scriptsize}, label style={font=\small},
    grid=major, grid style={gray!16},
    legend style={font=\scriptsize, draw=gray!40, fill=white,
                  at={(0.5,1.02)}, anchor=south, legend columns=2, column sep=8pt},
    legend cell align={left},
]
% leading PC (PC1) reference: bright red, a little thicker so it reads against the blue
\addplot[draw=brightred, line width=1.2pt]
    table[x=year, y=pc1] {figures/lim_enso_nino34_recon.dat};
\addlegendentry{Leading PC (PC1)}
% reconstructed Nino-3.4 index (blue, drawn on top)
\addplot[draw=blue!75, line width=1.0pt]
    table[x=year, y=nino34] {figures/lim_enso_nino34_recon.dat};
\addlegendentry{Reconstructed Nino-3.4}
\node[font=\scriptsize, anchor=north east, draw=gray!40, fill=white, inner sep=2pt,
      rounded corners=2pt] at (rel axis cs:0.99,0.97) {$r = 0.97$};
\end{axis}
\end{tikzpicture}
\caption{Reconstructed Nino-3.4 SST-anomaly index (blue) against the leading PC (PC1, red)
over the full 1950--2024 ERSSTv5 record, both $z$-scored for overlay. The index is
reconstructed by un-normalising the PC state, projecting through the area-weighted EOFs,
un-weighting by $\sqrt{\cos\,\mathrm{lat}}$, and area-averaging over the Nino-3.4 box
($5^\circ$S--$5^\circ$N, $170^\circ$W--$120^\circ$W). Crucially, this index is built from the
\emph{full} $D$-mode state in physical (SST) space, not from PC1 alone, so the close agreement
($r=0.97$) shows that the leading state-space mode closely corresponds to the canonical ENSO
variability that Nino-3.4 measures, and that the physical-units reconstruction pipeline is
faithful. Because the reconstruction is performed directly from the fitted EOF/LIM state using
standard climate-analysis procedures, the close agreement provides an independent physical
validation of the model recovered through DMCI calibration.}
\label{fig:lim_enso_recon}
\end{figure}

% Figure: LIM-ENSO held-out Nino-3.4 forecast skill (ACC) vs lead.
% fitted-LIM vs persistence vs damped-persistence vs the Green-function reference.
% Usage: \input{figures/fig_lim_enso_forecast.tex}
% Data: figures/lim_enso_forecast_acc.dat (agg/T3_forecast_skill.csv)
%   all four columns REAL (S0 dense operator, D=10, 3 seeds): fitted_lim is the DMCI-MLE-recovered
%   operator forecast; persistence / damped / green are classical baselines on the held-out PCs.
\providecommand{\TODO}{\textcolor{red!75!black}{\textsc{[todo]}}}
\providecommand{\darkgoldcolor}{}\definecolor{darkgold}{RGB}{184,134,11}
\definecolor{darkgreen}{RGB}{0,135,60}
\begin{figure}[tbp]
\centering
\begin{tikzpicture}
\begin{axis}[
    width=8.2cm, height=5.4cm, scale only axis,
    xlabel={Forecast lead (months)}, ylabel={Nino-3.4 ACC},
    xmin=2, xmax=13, xtick={3,6,9,12},
    ymin=-0.25, ymax=1.0, ytick={-0.2,0,0.2,0.4,0.6,0.8,1.0},
    tick label style={font=\scriptsize}, label style={font=\small},
    grid=major, grid style={gray!18},
    legend style={font=\scriptsize, draw=gray!40, fill=white,
                  at={(0.98,0.98)}, anchor=north east},
    legend cell align={left},
]
% ACC=0.5 reference threshold -- kept subordinate to the data curves (thin, light, no legend)
\addplot[gray!50, dashed, forget plot] coordinates {(2,0.5)(13,0.5)};
\node[font=\scriptsize, gray!55!black, anchor=south west] at (axis cs:2.1,0.5) {ACC $=0.5$};
% Marker fills match each line colour (no black fills); shapes keep the curves grayscale-distinct.
% fitted-LIM (DMCI-MLE-recovered operator)
\addplot[mark=square*, mark size=2.2pt, draw=blue!70, thick, mark options={fill=blue!70}]
    table[x=lead, y=fitted_lim] {figures/lim_enso_forecast_acc.dat};
\addlegendentry{Fitted LIM (DMCI-MLE)}
% Green-function LIM reference (green)
\addplot[mark=*, mark size=2.4pt, draw=darkgreen, thick, mark options={fill=darkgreen}]
    table[x=lead, y=green] {figures/lim_enso_forecast_acc.dat};
\addlegendentry{Green-function LIM reference}
% damped persistence
\addplot[mark=triangle*, mark size=2.4pt, draw=darkgold, thick, mark options={fill=darkgold}]
    table[x=lead, y=damped] {figures/lim_enso_forecast_acc.dat};
\addlegendentry{Damped persistence}
% persistence
\addplot[mark=diamond*, mark size=2.4pt, draw=red!70!black, thick, mark options={fill=red!70!black}]
    table[x=lead, y=persistence] {figures/lim_enso_forecast_acc.dat};
\addlegendentry{Persistence}
\end{axis}
\end{tikzpicture}
\caption{Held-out Nino-3.4 forecast skill of the DMCI-calibrated LIM. The dense ($D=10$) LIM
operator learned by differentiating the Kalman-filter likelihood through the compiled interpreter
is evaluated on held-out ERSSTv5 data at 3-, 6-, 9-, and 12-month forecast leads. The fitted LIM
outperforms persistence and damped persistence at every lead and remains comparable to the
classical Green-function LIM reference. This matters for DMCI because it shows that exact gradients
through an interpreted program recover a dynamical operator with real predictive skill, not merely
a low training likelihood. The Green-function LIM is included as a scientific reference operator,
not as an NLL optimization competitor. ACC is the anomaly correlation coefficient; higher is
better.}
\label{fig:lim_enso_forecast}
\end{figure}

\paragraph{Capability and program-as-data.} One compiled artifact, the DMCI interpreter,
renders an \emph{arbitrary} runtime-emitted dynamical model differentiable. The LIM Kalman NLL
is the \emph{same} program across every $(D,\text{structure})$ cell; only the bound $F/Q/R/
\texttt{obs}$ tensors change. The optimizer portfolio we run over this objective,
exact-gradient Adam (primary), multi-start L-BFGS, and a batched differential-evolution
population that binds its whole $[N,D,D]$ population in one interpreter walk, all optimize the
\emph{identical} compiled objective; we compare \emph{optimizers}, not objectives. This is the
distinction from prior differentiable scientific modeling
(Section~\ref{sec:related})~\citep{aboelyazeed2023,jiang2025canveg}: those works hand-port each
model into an autodiff framework, whereas here a single compiled interpreter makes a
real, observational-data MLE, and any structural variant of it, differentiable with no
per-model engineering.

\paragraph{Optimizer comparison and dimension scaling.} The optimizer portfolio (exact-gradient
Adam, multi-start L-BFGS, batched differential evolution) over the shared DMCI NLL, and the full
$D\in\{6,10,15,20\}$ scaling of held-out skill, recovered-mode timescales, and operator
stability, are reported in Tables~\ref{tab:lim_enso_solver} and~\ref{tab:lim_enso_scaling}. The
finding mirrors Experiment~I: the \emph{exact DMCI gradients} are what make the MLE work. Both
gradient optimizers, Adam and multi-start L-BFGS, reach essentially the same optimum (L-BFGS
edges Adam on NLL at every $D$, being curvature-aware) and are $100\%$ seed-robust, whereas the
gradient-free batched differential evolution never optimizes (positive NLL at every $D$), is
$0\%$ robust, and erodes further with dimension. Crucially, this is a genuinely high-dimensional
estimation problem: the dense LIM estimates $k = D^2 + D(D{+}1)/2 + 1$ free parameters (a full
transition matrix $F$, a Cholesky-factored process covariance $Q$, and a scalar observation
covariance $R$), growing from $58$ at $D=6$ to $\mathbf{611}$ at $D=20$
(Table~\ref{tab:lim_enso_scaling}), one to two orders of magnitude beyond the few-parameter
recoveries of Experiments~A--C. That plain gradient descent on exact DMCI gradients fits a
$611$-parameter real-data MLE while gradient-free differential evolution fails outright, and
degrades as $D$ grows, is the regime where differentiating the interpreter most clearly earns
its cost. The point is not that one gradient optimizer beats the other; it is that exact DMCI
gradients turn this real-data MLE into a problem both gradient optimizers solve, where
gradient-free search fails.

\begin{table}[t]
\centering
\caption{Exact DMCI gradients make the real-data LIM maximum-likelihood fit tractable. All
optimizers are applied to the same compiled Kalman-filter NLL objective for the dense $S0$ LIM
operator; only the optimization method changes. The two exact-gradient methods that use gradients
propagated through the compiled interpreter, Adam and multi-start L-BFGS, converge reliably across
seeds, recover stable operators with $\rho(F)<1$, and achieve held-out Nino-3.4 forecast skill
comparable to the classical Green-function LIM reference. In contrast, the gradient-free batched
differential-evolution baseline fails to minimize the likelihood, is not seed-robust, and returns
an unstable operator. This comparison isolates the practical value of DMCI: it turns a
high-dimensional program-defined dynamical-systems likelihood into an optimization problem that
standard gradient methods can solve. Train NLL is reported across state dimensions $D$; ACC@3,
spectral radius $\rho(F)$, stability, and robustness are reported at the headline $D=10$. Lower
NLL is better, higher ACC is better, and $\rho(F)<1$ indicates a stable fitted operator.}
\label{tab:lim_enso_solver}
\small
\begin{tabular}{@{}lcccccccc@{}}
\toprule
 & \multicolumn{4}{c}{Train NLL $\downarrow$ (by $D$)} & \multicolumn{4}{c}{At headline $D=10$} \\
\cmidrule(lr){2-5}\cmidrule(lr){6-9}
Solver & $D{=}6$ & $10$ & $15$ & $20$ & ACC@3 $\uparrow$ & $\rho(F)$ & Stable & Robust \\
\midrule
Exact-grad Adam (DMCI)      & $-2060$ & $-2814$ & $-3676$ & $-4494$ & $0.90$ & $0.96$ & Yes & $3/3$ \\
Multi-start L-BFGS (DMCI)   & $-2065$ & $-2836$ & $-3706$ & $-4551$ & $0.90$ & $0.96$ & Yes & $3/3$ \\
Differential evolution (batched) & $+636$ & $+3878$ & $+7876$ & $+11383$ & $0.87$ & $1.08$ & No & $0/3$ \\
\addlinespace
Green-function $G(\tau)$ (reference) & n/a & n/a & n/a & n/a & $0.91$ & $0.96$ & Yes & n/a \\
\bottomrule
\end{tabular}
\end{table}

\begin{table}[t]
\centering
\caption{Dimension scaling of the DMCI-calibrated LIM. The dense fitted LIM is trained at state
dimensions $D\in\{6,10,15,20\}$, increasing the number of learned parameters from 58 to 611 while
using the same DMCI-interpreted Kalman-filter likelihood. Exact gradients through the compiled
interpreter allow the MLE to remain numerically stable at every dimension: all fits pass the gate
checks, the recovered operators remain stable, and the extracted ENSO-band mode stays within
physically plausible period and decay ranges. Held-out 6-month Nino-3.4 skill peaks at $D=15$,
even though training NLL continues to improve with $D$, showing that model selection is based on
forecast generalization rather than simply pushing dimensionality upward. Lower NLL is better;
higher ACC is better; $\rho(F)<1$ indicates stability.}
\label{tab:lim_enso_scaling}
\small
\begin{tabular}{@{}lcccccccc@{}}
\toprule
$D$ & $k$ (params) & Gate & Train NLL $\downarrow$ & ACC@6 $\uparrow$ & ENSO period (yr) & ENSO decay (mo) & $\min\det S$ & $\mathrm{cond}\,S$ \\
\midrule
6  & $58$  & GO & $-2060$ & $0.62$ & $2.49$ & $24.3$ & $1.0{\times}10^{-4}$ & $1.97$ \\
10 & $156$ & GO & $-2814$ & $0.65$ & $2.41$ & $18.1$ & $7.7{\times}10^{-7}$ & $2.64$ \\
15 & $346$ & GO & $-3676$ & $\mathbf{0.73}$ & $2.27$ & $16.4$ & $1.9{\times}10^{-9}$ & $3.09$ \\
20 & $611$ & GO & $-4494$ & $0.61$ & $3.29$ & $13.7$ & $4.1{\times}10^{-12}$ & $3.62$ \\
\bottomrule
\end{tabular}
\end{table}

\paragraph{Model selection over LLM-proposed structures.} The experiments above all fit a single
structural form of the transition operator, a dense $F$. But a LIM's $F$ admits many structural
hypotheses, each a distinct claim about how ENSO evolves: a diagonal $F$ (decoupled modes), a
low-rank-plus-diagonal $F$ (a few coupled teleconnections on a damped background), an
AR(2)-companion $F$ (an oscillator with explicit memory), or a symmetric-plus-antisymmetric split
(the normal versus non-normal decomposition central to LIM theory). This is where the
program-as-data thesis becomes a \emph{method}, not just an implementation convenience. We give
this menu of structures to a mid-size open LLM (Qwen3.6-35B) served on a local OpenAI-compatible
endpoint, which emits each hypothesis as a one-to-three line $F$-assembly edit to the \emph{same}
17-line Kalman-NLL program (Appendix~\ref{app:exp_lim_enso_ext}): between a dense, a diagonal, an
AR(2), and a normal/non-normal $F$, the only thing that changes is the prelude that builds $F$
from bound factor tensors. Calibration is then two-level: the LLM proposes the discrete
\emph{structure} as code, and exact-gradient descent (the gradient method established above) fits
the \emph{continuous} parameters inside it. Two properties make this a real capability rather than
a templating trick. First, the LLM is a one-shot code generator that never sees the data or the
fit; a static unsupported-operator prescan, a compile check, and a parity check against the NumPy
twin gate every emitted program before any optimization runs (here all five structures were
accepted on the first generation, with forward parity to the twin at the $10^{-6}$ level or
better). Second, everything beneath the structure, the compiled interpreter, the autograd path,
and the MLE driver, is byte-identical across all five hypotheses; no code is recompiled and no
gradient is re-derived by hand between them.

\paragraph{Selection is an Occam tradeoff, not raw fit.} Ranking these structures by training
likelihood alone would be meaningless: the dense $F$ has the most free parameters and can always
match or exceed a constrained one in-sample, and the dimension sweep already showed that excess
capacity \emph{hurts}, with held-out skill peaking at $D=15$ and falling at $D=20$. Selection is
therefore a complexity tradeoff. Table~\ref{tab:lim_enso_select} calibrates each LLM-proposed
structure at $D=15$ (the dimension of peak held-out skill) with the same frozen interpreter and
scores it by train NLL,
by AIC and BIC (which charge for the per-structure parameter count, $123$ for the AR(2)-companion
up to $346$ for the dense and normal/non-normal forms), and by held-out Nino-3.4 forecast skill,
the out-of-sample criterion that actually matters. The three criteria disagree informatively.
By held-out skill and by AIC, the expressive full-operator structures win: the
symmetric-plus-antisymmetric split (S5), the LIM-theoretic normal/non-normal decomposition of a
full $F$, edges the dense operator on both. By BIC, whose heavier penalty rewards parsimony, the
low-rank-plus-diagonal structure (S3) wins outright, capturing most of the forecast skill ($0.66$
versus $0.73$) with $196$ parameters against $346$. The two restrictive extremes fail honestly:
the diagonal operator (S1) forecasts worst by discarding the teleconnections that couple ENSO
modes, and the two-parameter AR(2)-companion (S4) cannot fit a fifteen-dimensional state at all
(positive NLL). The data therefore prefer an expressive operator for raw fit and skill while a
low-rank operator is the most defensible under strict parsimony, the Occam tension this comparison
is built to expose, and every one of these conclusions came from editing a one-line prelude and
re-running the same engine. The final column, ``same DMCI engine?'',
reads \emph{yes} on every row. This closes the loop the paper opened: an LLM proposes a space of
dynamical models \emph{as code}, one differentiable interpreter calibrates and ranks them by exact
gradients and predictive skill, and improving the model becomes a one-line edit to a 17-line
program instead of a per-hypothesis rebuild of the fitting machinery.

\begin{table}[t]
\centering
\caption{Model selection over LLM-proposed transition structures using one fixed DMCI engine. Five
alternative structures for the LIM transition operator $F$ are evaluated at $D=15$, with three
optimization seeds per structure. Each structure is calibrated by the same compiled DMCI interpreter,
the same Kalman-filter likelihood program, and the same maximum-likelihood driver; only the small
program fragment that assembles $F$ from its bound parameters changes. This is the key DMCI capability
illustrated here: structural hypotheses can be proposed as code, fit with exact gradients, and compared
without recompiling the interpreter or re-deriving gradients by hand. $k_F$ is the number of parameters
in $F$, and $k$ is the total number of fitted parameters including $Q$ and $R$. Lower train NLL, AIC,
and BIC are better; higher held-out Nino-3.4 ACC is better. AIC and held-out skill favor the more
expressive full-operator structures, while BIC favors the lower-rank-plus-diagonal structure, exposing
the intended tradeoff between fit, parsimony, and forecast skill.}
\label{tab:lim_enso_select}
\small
\begin{tabular}{@{}lccccccc@{}}
\toprule
Structure & $k_F$ & $k$ & Train NLL $\downarrow$ & AIC $\downarrow$ & BIC $\downarrow$ & Held-out ACC $\uparrow$ & Same DMCI engine? \\
\midrule
S0 dense              & $225$ & $346$ & $-3676$ & $-6659$ & $-5315$ & $0.73$ & yes \\
S1 diagonal           & $15$  & $136$ & $-3036$ & $-5799$ & $-5271$ & $0.52$ & yes \\
S3 low-rank$+$diag    & $75$  & $196$ & $-3434$ & $-6477$ & $\mathbf{-5715}$ & $0.66$ & yes \\
S4 AR(2)-companion    & $2$   & $123$ & $+6439$ & $+13124$ & $+13602$ & $0.57$ & yes \\
S5 sym/antisym        & $225$ & $346$ & $-3695$ & $\mathbf{-6699}$ & $-5354$ & $\mathbf{0.73}$ & yes \\
\bottomrule
\end{tabular}
\end{table}

\subsection{Extended Methodology}

\paragraph{Data.} We use ERSSTv5~\citep{huang2017extended} monthly-mean SST, masked to the
tropical Indo-Pacific ($30^\circ$S--$30^\circ$N, $30^\circ$E--$290^\circ$E), 1950--2024
($T=898$ months, $3{,}373$ ocean grid points). Preprocessing removes the monthly climatology,
linearly detrends, applies a 3-month running mean, area-weights each column by
$\sqrt{\cos\,\mathrm{lat}}$, and takes an economy SVD of the time-by-space weighted-anomaly
matrix to produce $D_{\max}=20$ unit-variance PC time series (the EOFs are the right singular
vectors in weighted space; \texttt{pc\_std} stores the un-normalization factors). A run at
dimension $D$ slices the leading $D$ PCs (nested basis), fits on the first $T_{\text{train}}=360$
months (Jan~1950--Dec~1979), and evaluates held-out skill on the next $T_{\text{test}}=120$ months
(Jan~1980--Dec~1989); the split is fixed in advance and identical across all $D$ and all solvers.
The leading PC explains $42\%$ of the tropical SST-anomaly variance and is, by sign convention,
warm-positive over the Nino box.

\paragraph{Baseline fairness.} Three properties keep the comparison honest. (i)~The Green-function
reference $G(\tau)=C(\tau)C(0)^{-1}$ is fit from the \emph{same} train-window lag-$\tau$ and lag-$0$
covariances of the \emph{same} PC state used by the MLE, so it sees identical data in an identical
representation; it is a scientific skill reference, never an NLL competitor. (ii)~All three optimizers
(exact-gradient Adam, multi-start L-BFGS, and gradient-free differential evolution) minimize the
\emph{identical} compiled Kalman-NLL objective; only the optimizer changes, so the comparison isolates
the optimizer, not the model or the implementation. The differential-evolution population is bound as a
single $[N,D,D]$ tensor and evaluated in one batched interpreter walk, the same machinery the gradient
methods use for the forward pass. (iii)~Held-out skill is the $h$-step Nino-3.4 anomaly correlation and
RMSE of a Kalman analysis warm-started through the held-out window with the fitted $(F,Q,R)$ and
forecast by $\hat{x}_{t+h}=F^h\hat{x}_t$, scored against persistence, damped persistence, and the
Green-function reference. The matched-compute scaling of exact-gradient versus gradient-free
optimization across dimension is studied separately and under an explicit equal-budget protocol in
Experiment~I (Appendix~\ref{app:exp_i}).

\paragraph{Binding contract.} \texttt{pcs.npy} has shape $[T,D_{\max}]$ with rows indexed by
month and columns by PC. The observation window is bound as \texttt{obs = as\_matrix(pcs[:T, :D])};
inside the Scheme program \texttt{(ref obs k)} gathers \emph{row} $k$, the $D$-vector of PCs at
month $k$, with $k$ the data-independent loop counter the interpreter requires. $F$, $Q$, $R$
are bound as \texttt{as\_matrix} inputs; $Q$ is Cholesky-parametrized as $L_qL_q^\top + q_{\text{floor}}I$
($q_{\text{floor}}=10^{-4}$) and $R=\mathrm{softplus}(r_{\text{raw}})I + r_{\text{floor}}I$ with
$r_{\text{floor}}=0.1$ kept physical as the primary float32 PD lever (never a fudge factor). The
estimated parameters are therefore the $D^2$ entries of $F$, the $D(D{+}1)/2$ free entries of the
Cholesky factor $L_q$ (for $Q$), and the single $r_{\text{raw}}$ (for $R$): $k = D^2 +
D(D{+}1)/2 + 1$ in all, which is $58$, $156$, $346$, and $611$ free parameters at $D=6,10,15,20$.

\paragraph{The Scheme Kalman-NLL program (program-as-data).} The complete program is the
following source string, emitted once per $(D,T)$ and reused across all seeds and solvers with
only the bound tensors changing (shown at $D=10$, $T=360$; the $\log\det S$ term uses the native
\texttt{logdet} primitive):

\begin{lstlisting}[language=Lisp]
(loop ((k 0)
       (x  (zeros 10))
       (P  (eye 10))
       (L  0.0))
  (if (= k 360)
      L
      (let* ((xpred (matvec F x))
             (Ppred (+ (matmul (matmul F P) (transpose F)) Q))
             (y     (ref obs k))
             (e     (- y xpred))
             (S     (+ Ppred R))
             (Sinv  (inv S))
             (Kg    (matmul Ppred Sinv))
             (xnew  (+ xpred (matvec Kg e)))
             (Pnew  (matmul (- (eye 10) Kg) Ppred))
             (nll   (+ (logdet S) (dot e (matvec Sinv e)))))
        (recur (+ k 1) xnew Pnew (+ L nll)))))
\end{lstlisting}
\noindent The tail-recursive \texttt{recur} trampolines to an $O(T)$ walk through the compiled
interpreter; the structural integers ($k$, the dimension literals) and the dispatch are
data-independent, so gradients flow only through the bound $F/Q/R$ payload tensors, exactly the
program-determined-control-flow premise of Section~\ref{sec:method}. The NumPy reference twin
(\texttt{reference.py}) mirrors this arithmetic and accumulation order term-for-term, evaluating
\texttt{(matmul (matmul F P) (transpose F))} as $(FP)F^\top$ and the Mahalanobis term as
$e^\top(S^{-1}e)$ so that float32 forward parity (gate G3) and the float64 finite-difference
gradient (gate G4) are tight.
\label{lst:lim_kalman}

\paragraph{Structural variants are one-line edits.} The Kalman-NLL body above is 17 lines, and
it is reused \emph{verbatim} for every transition-operator structure. A different dynamical
hypothesis about $F$ changes only how $F$ is assembled from bound factor tensors: a one-to-three
line \texttt{let*} prelude spliced ahead of the per-step bindings, never the compiled
interpreter and never a line of gradient code. The five structures (the menu an LLM proposes as
code) differ \emph{only} in this prelude, shown here at $D=10$; the factor symbols
\texttt{Dvec}, \texttt{U}, \texttt{V}, \texttt{arow}, \texttt{e0}, \texttt{Sub}, \texttt{M} are
bound as \texttt{as\_matrix}/\texttt{as\_vector} inputs, decoded from the raw parameters by the
same combine-algebra the program runs, so the in-program $F$ matches its float twin bit-for-bit:

\begin{lstlisting}[language=Lisp]
;; S0 dense        : F bound directly (no prelude)               k_F = D^2     = 100
;; S1 diagonal     : (F (* (eye 10) Dvec))                       k_F = D       = 10
;; S3 lowrank+diag : (F (+ (* (eye 10) Dvec)
;;                         (matmul U (transpose V))))            k_F = D + 2Dr = 50
;; S4 AR(2)-compan.: (F (+ (outer e0 arow) Sub))                 k_F = 2
;; S5 sym/antisym  : (Ssym  (scale 0.5 (+ M (transpose M))))
;;                   (Aanti (scale 0.5 (- M (transpose M))))
;;                   (F     (+ Ssym Aanti))                      k_F = D^2     = 100
\end{lstlisting}
\noindent The free-parameter count $k_F$ differs by structure (above), so AIC/BIC genuinely
separate the candidates; and because the interpreter, the autograd path, and the MLE driver are
byte-identical across all five, a model-selection comparison isolates the structural hypothesis,
not the implementation. This is the program-as-data thesis at its most literal: a structural
hypothesis is a one-line edit to a 17-line program, and the same compiled artifact calibrates
every edit with exact gradients.

\paragraph{Gate results.} At the real $(D,T)$ the gate returns GO at $D=6,10,15,20$. Forward
parity against the float32 twin holds within the gate's $2\times 10^{-3}$ relative tolerance at
every $D$; the autograd-vs-central-finite-difference relative gradient error is
$\approx 5\times 10^{-5}$ across $D$ (well within the $3\times 10^{-2}$ tolerance); and the
innovation covariance stays PD at every step, worst-step $\det S$ on this raw gate run falling
from $\approx 1\times 10^{-5}$ ($D=6$) to $\approx 2.6\times 10^{-17}$ ($D=20$), all above the
$1.2\times 10^{-38}$ float32 denormal floor (the fitted-operator dimension sweep of
Table~\ref{tab:lim_enso_scaling} reports correspondingly larger per-fit minima, $1.0\times 10^{-4}$
to $4.1\times 10^{-12}$).

\paragraph{Scientific diagnostics (computed on the fitted/Green-function operators, not the
interpreter).} The forecast and diagnostic layer (\texttt{forecast.py}) is pure NumPy/SciPy:
it warm-starts a Kalman analysis through the held-out window with the fitted $(F,Q,R)$, makes
$h$-step forecasts $\hat{x}_{t+h}=F^h\hat{x}_t$, reconstructs the physical SST-anomaly field and
the Nino-3.4 index, and scores ACC/RMSE against persistence, damped persistence, and the
Green-function reference $G(\tau)=C(\tau)C(0)^{-1}$. The operator diagnostics take the
continuous generator $L=\log(F)/\Delta t$ and report per-mode decay timescales
$-1/\mathrm{Re}(\lambda)$ and oscillation periods $2\pi/|\mathrm{Im}(\lambda)|$; the reported
ENSO mode is the least-damped \emph{oscillatory} mode in the band (period 2--7\,yr, decay
4--30\,mo), excluding the near-stationary decadal mode that is not ENSO.

% ============================================================================
\section{Experiment D: Structural Search and the Cost of Interpretation}
\label{app:exp_d}
% ============================================================================

Experiments A--C hold the program structure fixed. A natural question is whether DMCI is practical when program structure also varies, e.g., in a genetic programming (GP) loop.

\subsection{Design}

GP-based symbolic regression targeting $f(x) = 2\sin(3x) + 0.5x^2$ on 20 points. Population of 50 expression trees over $\{+, -, *, /, \sin, \cos\}$. Each candidate's constants are optimized for 20 Adam epochs (lr $= 0.05$). Two methods: GP + direct-recompile vs.\ GP + DMCI. Five seeds, 1{,}000 candidates each (5{,}000 total).

\subsection{Results}

Table~\ref{tab:exp_d_timing} decomposes the per-candidate compile-and-train cost of direct compilation versus DMCI.

\begin{table}[ht]
\centering
\caption{Experiment~D: per-candidate cost decomposition.}
\label{tab:exp_d_timing}
\small
\begin{tabular}{@{}lrrrrr@{}}
\toprule
Method & $t_{\text{compile}}$ & $t_{\text{train}}$ & $t_{\text{total}}$ & \% Compile & Ratio \\
\midrule
Direct & 4.1\,ms & 148\,ms & 152\,ms & 2.7\% & $1\times$ \\
DMCI & 22.9\,ms & 3{,}824\,ms & 3{,}846\,ms & 0.6\% & $25.3\times$ \\
\bottomrule
\end{tabular}
\end{table}

No wall-clock crossover occurs (Table~\ref{tab:exp_d_timing}, Figure~\ref{fig:exp_d_crossover}): the $25\times$ per-candidate overhead prevents DMCI from becoming faster than direct compilation at any candidate budget. Both methods produce identical GP trajectories (fitness equivalence): we verified per-seed that direct compilation and DMCI explore the same candidate program sequence and select the same best program and best loss across all five seeds (the GP path is identical because both execute the same underlying math). GP solution quality is not the object of study here: only the timing and the method equivalence are.

\begin{figure}[ht]
\centering
\begin{tikzpicture}
\begin{axis}[
    width=0.92\columnwidth,
    height=0.55\columnwidth,
    xlabel={Candidates evaluated},
    ylabel={Cumulative wall-clock time (s)},
    xmin=0, xmax=1050,
    ymin=0, ymax=4200,
    legend pos=north west,
    legend style={font=\small, draw=none, fill=white, fill opacity=0.8, text opacity=1},
    grid=major,
    grid style={gray!30},
    tick label style={font=\small},
    label style={font=\small},
]
\addplot[blue, thick, mark=none] table[x=candidates, y=time] {figures/exp_d_dmci_cumulative.dat};
\addlegendentry{DMCI}
\addplot[teal, thick, dashed, mark=none] table[x=candidates, y=time] {figures/exp_d_direct_cumulative.dat};
\addlegendentry{Direct compilation}
\draw[<->, gray, thick] (axis cs:1000,152.10) -- (axis cs:1000,3846.44) node[midway, left, font=\footnotesize, text=black] {$25\times$};
\end{axis}
\end{tikzpicture}
\caption{Experiment~D: cumulative wall-clock time. DMCI is $25\times$ steeper; no crossover occurs.}
\label{fig:exp_d_crossover}
\end{figure}

\subsection{Implications for the Utility Argument}

The absence of a wall-clock crossover means DMCI's value in structural-search workloads is not a runtime advantage over direct compilation. Instead, the value lies in three properties that this experiment does not measure on a clock:

\begin{enumerate}[leftmargin=2em,topsep=2pt,itemsep=1pt]
    \item \textbf{Guaranteed correctness.} Direct compilation produces correct gradients for each candidate, but this correctness must be verified per-program (or trusted based on compiler testing). DMCI provides correctness through the evaluator's compilation: any program representable as an S-expression receives correct gradients by construction.
    \item \textbf{Zero-effort integration.} In Experiment~B, 15 LLM-generated Scheme programs compiled and trained with zero manual adaptation. Direct compilation also works for these programs, but DMCI's evaluator-as-module architecture makes the integration path explicit: string in, differentiable module out.
    \item \textbf{Runtime composition.} Programs can be composed, modified, or selected at runtime without touching the compilation pipeline. This matters for workloads beyond GP, such as an LLM agent that iteratively refines a scientific model, or a meta-learning system that selects among a library of compiled programs.
\end{enumerate}

The $25\times$ per-candidate overhead is the cost of this flexibility. Batched evaluation (Experiment~H, Section~\ref{sec:exp_h}) addresses one dimension of this cost: by evaluating all $N$ data points simultaneously, the per-epoch graph-walking cost is paid once rather than $N$ times, yielding $12$--$45\times$ training speedups. However, the structural search setting of Experiment~D mutates the program each generation, so the evaluator graph changes per candidate; batching helps within each candidate's training loop but does not eliminate the per-candidate overhead. Further reductions through MLIR compilation of the evaluator (Section~\ref{sec:limitations}) could address the remaining tagged-value wrapping and Python dispatch costs, which account for 90\% of the per-evaluation overhead (Table~\ref{tab:overhead_decomp}).

% ============================================================================
\section{Experiment E: Discrete-Continuous Operator Recovery}
\label{app:exp_e}
% ============================================================================

\subsection{Task Design}

Each target function has the form $f^*(x) = a^* \cdot \texttt{op}_1(x, \texttt{op}_2(x))$, where $\texttt{op}_1, \texttt{op}_2 \in \{+, -, \times, \div, \sin, \cos, \exp, \log\}$. Search space: $8 \times 8 = 64$ combinations plus continuous constant $a^*$. Twelve targets spanning three constant regimes.

\subsection{Methods}

Four methods: (1)~DMCI soft-dispatch via Gumbel-Softmax \texttt{soft-choice} (temperature annealed 1.0 to 0.1 over 3000 epochs, Adam lr $= 0.05$), (2)~exhaustive enumeration with least-squares, (3)~evolutionary algorithm (elitist generational GA with tournament selection $k{=}3$, pop 32, 100 generations), (4)~random search (one random operator-pair draw per restart). Twenty restarts per target (240 per method).

\paragraph{Training objective.} DMCI's loss includes a derivative-matching term: $\mathcal{L} = \sum_i (f(x_i) - y_i)^2 + \lambda_d (f'(x_i) - y'_i)^2$ with $\lambda_d = 1.0$, where $f'$ and $y'$ are computed by autograd. This term exploits DMCI's ability to differentiate through the soft-dispatched interpreter and is not available to the non-gradient baselines (exhaustive, evolutionary, random), which optimize value-matching only. The asymmetry favors DMCI on continuous precision; discrete recovery rates are comparable with or without the term, since operator selection is primarily driven by value matching.

\subsection{Results}

Table~\ref{tab:exp_e_results} reports per-target operator-recovery success rates and continuous-constant precision, and Figure~\ref{fig:exp_e_success} visualizes the per-target success rates.

\begin{table}[ht]
\centering
\caption{Experiment~E: operator recovery success rates (20 restarts per target). Mean $|a{-}a^*|$ is shown for DMCI and exhaustive only; evolutionary precision matches exhaustive (both apply least-squares constant refinement) and random rarely recovers the correct operators, so these columns are omitted for space.}
\label{tab:exp_e_results}
\small
\begin{tabular}{@{}llrrrrrrr@{}}
\toprule
 & & & \multicolumn{4}{c}{Success rate} & \multicolumn{2}{c}{Mean $|a{-}a^*|$} \\
\cmidrule(lr){4-7} \cmidrule(lr){8-9}
ID & Target $f^*(x)$ & $a^*$ & DMCI & Exh. & Evol. & Rand. & DMCI & Exh. \\
\midrule
T01 & $0.5 \cdot x \cdot 2x$ & 0.5 & 25\% & 100\% & 20\% & 10\% & 0.0001 & 0.0001 \\
T02 & $0.5 \cdot x \cdot \sin x$ & 0.5 & 5\% & 100\% & 90\% & 0\% & 0.0008 & 0.0009 \\
T03 & $0.5 \cdot x \cdot e^x$ & 0.5 & 5\% & 100\% & 100\% & 0\% & 0.0001 & 0.0001 \\
T04 & $0.5 \cdot (x + x^2)$ & 0.5 & 10\% & 100\% & 75\% & 0\% & 0.0001 & 0.0001 \\
T05 & $\sin(x^2)$ & 1.0 & 10\% & 100\% & 100\% & 0\% & 0.0001 & 0.0019 \\
T06 & $x + \cos x$ & 1.0 & 15\% & 100\% & 100\% & 0\% & 0.0006 & 0.0004 \\
T07 & $x - \sin x$ & 1.0 & 20\% & 100\% & 100\% & 0\% & 0.0012 & 0.0009 \\
T08 & $\cos(x^2)$ & 1.0 & 5\% & 100\% & 95\% & 15\% & 0.0001 & 0.0014 \\
T09 & $2(x + x^2)$ & 2.0 & 15\% & 100\% & 80\% & 0\% & 0.0001 & 0.0002 \\
T10 & $2\cos(2x)$ & 2.0 & 5\% & 100\% & 100\% & 0\% & 0.0000 & 0.0017 \\
T11 & $2\sin(2x)$ & 2.0 & 10\% & 100\% & 90\% & 0\% & 0.0003 & 0.0012 \\
T12 & $2 \cdot x \cdot \cos x$ & 2.0 & 5\% & 100\% & 45\% & 0\% & 0.0004 & 0.0008 \\
\midrule
\textbf{All} & & & \textbf{10.8\%} & \textbf{100\%} & \textbf{82.9\%} & \textbf{2.1\%} & \textbf{0.0004} & \textbf{0.0008} \\
\bottomrule
\end{tabular}
\end{table}

\paragraph{Discrete recovery.} Exhaustive enumeration achieves 100\% (240/240; Wilson 95\% lower bound 98.4\%). Evolutionary: 82.9\% (199/240; Wilson 95\% CI [77.6, 87.2]\%). DMCI: 10.8\% (26/240; [7.5, 15.4]\%). Random: 2.1\% (5/240; [0.9, 4.8]\%).  All intervals are Wilson score intervals at 95\%.

\paragraph{Continuous parameter precision.} Despite modest discrete search performance, DMCI achieves the best continuous parameter precision among methods that recover the correct operators. On the 26 restarts where DMCI correctly recovers both operators, the mean constant error is $|a - a^*| = 3.7 \times 10^{-4}$ with median $1.6 \times 10^{-4}$, compared to $0.0008$ for exhaustive enumeration ($2.2\times$ better); the gradient-based optimizer converges to higher precision than the least-squares baseline. Over all 240 restarts (including the 214 where discrete recovery fails), DMCI's mean $|a - a^*| = 1.11$, reflecting that failed operator recovery yields meaningless constant fits. The table reports precision over successful restarts only. This result is consistent with DMCI's core strength: once the discrete structure is (correctly) determined by the annealed Gumbel-Softmax, the continuous optimizer achieves excellent precision through exact gradients.

\paragraph{Why discrete search is hard.} DMCI's 10.8\% success rate, while $5\times$ above random search, reflects the well-known difficulty of gradient-based discrete optimization. The Gumbel-Softmax relaxation creates a continuous surrogate for a discrete choice, but the loss landscape over operator logits is highly multimodal: each of the 64 combinations defines a different functional form, creating basins separated by regions where the soft-blended function matches no target well. Temperature annealing from $\tau = 1.0$ (soft blending) to $\tau = 0.1$ (near-discrete selection) must traverse these basins, and most restarts converge to incorrect local optima. This is exactly the phenomenon TerpreT documented for differentiable interpreters: gradient descent over discrete program structure converges in a small fraction of restarts.

\begin{figure}[t]
\centering
\begin{tikzpicture}
\begin{axis}[
    ybar,
    width=0.92\columnwidth,
    height=0.5\columnwidth,
    bar width=3pt,
    xlabel={Target},
    ylabel={Success rate (\%)},
    ymin=0, ymax=110,
    xtick=data,
    xticklabels={T01,T02,T03,T04,T05,T06,T07,T08,T09,T10,T11,T12},
    x tick label style={font=\scriptsize, rotate=45, anchor=east},
    tick label style={font=\small},
    label style={font=\small},
    legend style={font=\scriptsize, at={(0.98,0.98)}, anchor=north east, draw=none, legend columns=2},
    grid=major,
    grid style={gray!20},
    ymajorgrids=true,
    xmajorgrids=false,
    area legend,
]

\addplot[fill=blue!70, draw=blue!70] table[x=task, y=dmci] {figures/exp_e1_success_rates.dat};
\addlegendentry{DMCI}

\addplot[fill=orange!80, draw=orange!80] table[x=task, y=evolutionary] {figures/exp_e1_success_rates.dat};
\addlegendentry{Evolutionary}

\addplot[fill=red!70, draw=red!70] table[x=task, y=random] {figures/exp_e1_success_rates.dat};
\addlegendentry{Random}

\end{axis}
\end{tikzpicture}
\caption{Operator recovery success rates by target (Experiment~E, 20 restarts each). Exhaustive enumeration (100\% on all targets, omitted for clarity) is the ceiling. The evolutionary algorithm (orange) achieves the highest non-exhaustive rates on most targets. DMCI (blue) shows per-target rates of 5--25\%, above random search (red, 0--15\%) but well below exhaustive or evolutionary methods. The variation across targets is suggestive of loss-landscape geometry effects: targets with distinct functional signatures (T01, T07) may be easier for gradient-based search than those with similar-looking alternatives (T02, T10). This interpretation is tentative given the small per-target restart counts (20 each), where individual rates have wide binomial confidence intervals; only the aggregate rate (10.8\%) is well-estimated.}
\label{fig:exp_e_success}
\end{figure}

\paragraph{Wall-clock cost.} DMCI averages 689\,s per restart (the full 3000-epoch optimization through the compiled interpreter). Exhaustive enumeration averages 0.012\,s per restart (64 least-squares fits). The evolutionary algorithm averages 0.57\,s. DMCI is ${\sim}57{,}000\times$ slower than exhaustive and ${\sim}1{,}200\times$ slower than evolutionary search on this 64-combination space. For small discrete spaces, exhaustive enumeration dominates on every axis. DMCI's potential lies in scaling to spaces too large for exhaustive search, where gradient signal through the relaxed dispatch could guide the optimizer more efficiently than random or evolutionary methods, but validating this requires larger-scale experiments that we leave to future work.

\subsection{Implications}

Experiment~E confirms and refines DMCI's positioning. The compiled interpreter's strength is continuous parameter optimization: when the program structure is known, DMCI matches direct compilation exactly (Experiments~A--C) and achieves the best constant precision among successful restarts in the joint discrete-continuous setting ($2.2\times$ better than exhaustive on correct-operator restarts, Table~\ref{tab:exp_e_results}). Discrete structure search via Gumbel-Softmax relaxation is feasible but limited by the multimodal loss landscape over operator logits, consistent with TerpreT's findings for all gradient-based program search. The 10.8\% success rate is above random (2.1\%) but far below exhaustive (100\%) and evolutionary (82.9\%) baselines on this small space. Whether gradient-based structure search through a compiled interpreter can be competitive in larger spaces, where exhaustive enumeration is impractical, remains an open question.

% ============================================================================
\section{Experiment F: LLM-in-the-Loop Scientific Model Discovery}
\label{app:exp_f}
% ============================================================================

\paragraph{Motivation.} Experiment~B (Section~\ref{sec:exp_b}) established that an LLM-generated Scheme program compiles to a differentiable module and optimizes end-to-end. Experiment~F closes the loop: the LLM proposes the \emph{structure} of a scientific model as a Scheme expression, DMCI compiles and calibrates its constants by gradient descent, and the residuals are fed back to the LLM to refine the structure. Because a program is runtime data that compiles directly to a differentiable graph, ``a model was discovered and trained'' requires no per-model engineering. The experiment is necessary for two reasons: it tests whether the generation capability extends to autonomous \emph{discovery}, and, by holding the loop fixed while varying only the LLM's reasoning and the parameter optimizer, it isolates \emph{where} discovery fails, directly grounding the correctness-versus-optimization distinction of Section~\ref{sec:tradeoffs}.

\paragraph{Design.} Four targets of increasing structural difficulty: F1 exponential decay, F2 damped oscillation ($a\sin(bx)\,e^{-cx}$), F3 decay-plus-sine ($a\,e^{-bx}+c\sin(dx)$), and F4 logistic growth, each sampled at 64 points with 2\% noise, three seeds. Each iteration: the LLM proposes a Scheme expression in the supported subset (input $x$, named constants $a,b,c,d$, binary operators, no \texttt{define}/\texttt{lambda}); the proposal is parsed, arity-checked, compiled, and evaluated for finiteness (up to two retry prompts on failure); DMCI fits the constants by batched gradient descent; the residuals are analyzed and fed back via a refinement prompt. The loop runs $\le 5$ iterations, stops at held-out MSE $<10^{-3}$, and keeps a best-of-iterations checkpoint. We vary the LLM (Qwen3.6-27B via the MindRouter endpoint; GPT-5.5 via the OpenAI API), the reasoning mode (off/on), and the parameter optimizer (single-start Adam vs.\ the multi-optimizer portfolio of Section~\ref{sec:tradeoffs}: a cheap-first Adam~$\rightarrow$~multi-start L-BFGS cascade with held-out selection).

\paragraph{Results.} Table~\ref{tab:exp_f} reports per-target convergence (converged seeds out of three). The original configuration (Qwen3.6-35B, no reasoning, single-start Adam) converged 6/12. Tracing the failures revealed two \emph{distinct} causes, neither a limitation of DMCI: \textbf{(1) structure-discovery failures}, without reasoning the model one-shot-guessed the wrong functional family (F2); enabling reasoning on a capable model raised F2 from 1/3 to 3/3, lifting the total to 9/12 (the smaller Qwen3.6-27B, even with reasoning, still missed F2, so model capability matters). \textbf{(2) parameter-optimization failures}, on F3, GPT-5.5 with reasoning proposed the \emph{correct} structure $a\,e^{-bx}+c\sin(dx)$ on the first iteration, yet single-start Adam (initialized near~1) could not recover the sine frequency $d\approx3$: the frequency landscape is strongly multimodal and Adam settled in a wrong-frequency basin, after which the residual-feedback loop mis-diagnosed a fitting failure as a structure failure and oscillated between structures. Replacing Adam with the multi-optimizer portfolio recovers the frequency: the full thinking-plus-portfolio discovery loop reaches \textbf{11/12}. The single residual (F3 seed~1) is a loop dynamic rather than a fitting barrier; the loop refined away from the fittable structure (best at iteration~2, final MSE $1.7\times10^{-2}$); feeding the correct iteration-0 F3 structure directly to the multi-start L-BFGS fitter converges all three F3 seeds to $3$--$6\times10^{-4}$ (3/3).

\begin{table}[ht]
\centering
\caption{Experiment~F: discovery-loop convergence (seeds converged out of 3, MSE~$<10^{-3}$). Reasoning fixes structure-discovery failures (F2); the multi-optimizer portfolio fixes parameter-optimization failures (F3). The lone residual (F3 seed~1) is a loop-refinement artifact: an isolated multi-start fit of the correct F3 structure converges 3/3.}
\label{tab:exp_f}
\begin{tabular}{lccccc}
\toprule
Configuration (model, reasoning, optimizer) & F1 & F2 & F3 & F4 & Total \\
\midrule
Qwen3.6-35B, no reasoning, Adam            & 3/3 & 1/3 & 0/3 & 2/3 & \textbf{6/12} \\
Qwen3.6-27B, reasoning, Adam               & 3/3 & 0/3 & 0/3 & 3/3 & \textbf{6/12} \\
GPT-5.5, reasoning, Adam                   & 3/3 & 3/3 & 0/3 & 3/3 & \textbf{9/12} \\
GPT-5.5, reasoning, portfolio              & 3/3 & 3/3 & 2/3 & 3/3 & \textbf{11/12} \\
\midrule
GPT-5.5, reasoning, multi-start (F3 only)  & n/a & n/a & 3/3 & n/a & n/a \\
\bottomrule
\end{tabular}
\end{table}

\paragraph{Conclusion.} End-to-end LLM-to-differentiable-program discovery works, and the experiment localizes its two failure modes precisely: \emph{structure} discovery (improved by reasoning on a capable model) and \emph{parameter} optimization (improved by a multi-optimizer portfolio). Syntax was never the bottleneck; every proposal across the runs was valid Scheme. This is the empirical basis for the correctness-versus-optimization-success limitation (Section~\ref{sec:tradeoffs}): compiling the interpreter guarantees correct gradients, but discovering and fitting a runtime-proposed program also depends on the LLM's structural search and the optimizer's ability to navigate the resulting landscape.

% ============================================================================
\section{Experiment G: Runtime Compositional Modeling}
\label{app:exp_g}
% ============================================================================

\paragraph{Motivation.} Because DMCI treats programs as data, a library of symbolic model components are simply strings, \emph{composition} is string manipulation, and a freshly composed program flows through the same compiled interpreter with no per-composition engineering. Experiment~G makes this concrete and supplies the empirical counterpart to the paper's runtime-composition claim. It tests three things: that gradient descent through the compiled interpreter can identify the \emph{correct} way to compose two components from data; that incorrect compositions are distinguishable; and that swapping a component (``hot-swapping'') recompiles cheaply.

\paragraph{Design.} A small module library (exponential decay, oscillation, polynomial, sigmoid, power law, Gaussian) and three target problems whose ground truth is a known composition: G1 decay-plus-oscillation (\emph{sum}), G2 damped oscillation (\emph{product} of oscillation and decay), and G3 sigmoid-of-polynomial (\emph{chain}); five seeds each. For every problem we fit the individual modules, the correct composition, the plausible \emph{wrong} compositions (different operator or operand order), and a hot-swap (replace one module with an alternative, recompile, and refit). Every composition is built by string manipulation and compiled fresh; we record the mean fitted MSE and the mean recompilation time.

\paragraph{Results.} Table~\ref{tab:exp_g} reports mean MSE (over five seeds) for the correct composition and the best competing wrong composition, plus the mean hot-swap recompilation time. For the \emph{product} (G2) and \emph{chain} (G3) targets the correct composition wins decisively ($4\times10^{-4}$ versus $\ge 1.1\times10^{-1}$ for G2 and versus $\ge 7\times10^{-4}$ for G3), and hot-swap recompilation is $20$--$35$\,ms, i.e.\ negligible and requiring zero engineering. The \emph{sum} target (G1) is the honest exception: decay-plus-oscillation is poorly separable from a chain of the same parts, so the correct sum (0.94) does not clearly beat the wrong chain (0.77); compositional identifiability depends on how distinguishable the candidate structures are on the data, not on any limitation of the mechanism.

\begin{table}[ht]
\centering
\caption{Experiment~G: runtime composition (mean MSE over 5 seeds). Correct compositions win decisively for product (G2) and chain (G3) structures; the sum target (G1) is poorly separable from a chain of the same parts. Hot-swap recompilation is $20$--$35$\,ms regardless.}
\label{tab:exp_g}
\begin{tabular}{llccc}
\toprule
Problem & Correct op & Correct MSE & Best wrong MSE & Hot-swap recompile \\
\midrule
G1 decay $+$ oscillation & sum     & 0.942 & 0.772 (chain) & 20.0\,ms \\
G2 damped oscillation    & product & \textbf{0.0004} & 0.108 (sum) & 35.2\,ms \\
G3 sigmoid of polynomial & chain   & \textbf{0.0004} & 0.0007 (sum) & 27.2\,ms \\
\bottomrule
\end{tabular}
\end{table}

\paragraph{Conclusion.} Runtime composition is real and essentially free on the engineering axis: components compose by string manipulation, the composite is differentiable by construction, and hot-swapping recompiles in tens of milliseconds. Where the candidate structures are distinguishable on the data (product, chain), gradient descent identifies the correct composition; where they are not (the sum target), identifiability, not the composition mechanism, is the limiting factor.

% ============================================================================
\section{Experiment K (FluZoo): Influenza Co-Search and the Fitness-Fidelity Stress Test}
\label{app:exp_fluzoo_ext}
% FluZoo is the cautionary half of the co-search story in \S\ref{sec:exp_battery}: its apparent
% win is a search-time-fitness artifact that vanishes under rigorous re-scoring. All numbers below
% are final, drawn from experiments/exp_fluzoo/results/ (baseline_test.json + _pool_124_125/);
% see that directory's MANIFEST.txt for the artifact-to-claim map.

\paragraph{Data.} Weekly weighted percent influenza-like illness (wILI) is obtained from CDC
ILINet through the Delphi Epidata \texttt{fluview} endpoint~\citep{farrow2015delphi,cdc_fluview}
for the national region and the ten HHS regions ($R=11$ series), MMWR epiweeks 2010w40--2025w39
(latest issue; per-region release dates recorded for a pinned \emph{as-of} snapshot). Values are
expressed as proportions (wILI${}/100$) so they match the model observable. Seasons are split
train (2010--2017), validation (2018, 2021--22, 2022--23), and test (2023--24); the 2020--21
pandemic season is held out as distribution shift. The observation matrix bound into every
program is $\mathrm{obs}\in\mathbb{R}^{T\times 11}$; \texttt{(ref obs k)} gathers week $k$. Short-horizon forecasting of this influenza-like-illness signal is the task of the CDC {FluSight} forecasting challenge~\citep{mathis2024flusight}.

\paragraph{Program contract and generation.} Each generated artifact is two S-expressions: a
machine-readable \texttt{(params ...)} schema declaring every parameter with a constraint kind
(positive, unit, signed-unit, free) and an initial value, and a tail-recursive weekly rollout
\texttt{(loop ((k 0) ...\,(yhat ...)\,(L 0.0)) (if (= k NWEEKS) L (let* (...) (recur ...))))}
that accumulates a Gaussian negative log-likelihood with a floored observation variance and
carries the predicted observable \texttt{yhat}. Seasonal transmission is built from the integer
week counter (e.g.\ $\cos(\omega k+\varphi)$), avoiding external per-step forcing (which would
blow up the interpreter heap). The language model (Qwen3.6-35B) is prompted with the op surface
and these rules plus a recipe sampled across families (compartmental, seasonal, observation,
regional-coupling, closure) to drive diversity.

\paragraph{Validity funnel.} Each proposal passes through, in order: parse and schema, static
operator prescan (rejecting any head outside the interpreter's surface, which would silently
evaluate to zero), compilation, a free-variable contract check, finite forward NLL, finite and
nonzero gradients to every parameter, a stable rollout across random parameter draws, and a
forecastability check (the prediction form emits a finite $R$-vector). On failure a targeted
repair hint is returned and the program regenerated (up to three repairs). A canonical structural
hash (positional renaming of bound variables and parameters, constant bucketing) counts distinct
structures.

\paragraph{Calibration and forecasting.} Each accepted program is bound, as data, to the same
compiled interpreter; its parameters are mapped from unconstrained leaves through their declared
transforms and fit by multi-start Adam summing the per-season training NLL. Selection is on
held-out validation skill, never training NLL. Held-out skill uses filter-then-forecast: with
structural parameters frozen at their trained values, the initial-condition parameters are
re-estimated on the observed weeks up to each test-season forecast origin, and the autonomous
model is rolled $1$--$4$ weeks ahead (the prediction form is the same program with its loop
base case returning \texttt{yhat}; future weeks are bound to a zero-padded observation matrix
since the dynamics are autonomous). Baselines (seasonal-naive, autoregressive, classical regional
SEIR through the same interpreter, random-grammar${}+{}$DMCI, and direct-compile-each) are scored
on identical origins and horizons. The reproduction pipeline is
\texttt{experiments/exp\_fluzoo/} (\texttt{data.build\_data} $\to$ \texttt{llm\_generate} $\to$
\texttt{run\_all} $\to$ \texttt{aggregate}).

\paragraph{The fitness-fidelity finding (the stress test).} Three long-horizon islands were evolved to completion ($200$ iterations each); the global best---a spatial mean-field-coupling structure with dual exposed/infectious reporting, $11$ parameters---attained a search-time validation RMSE of $0.0100$ under the cheap $60$-step calibration. Re-scored under the rigorous consistent protocol ($300$ Adam steps, rolling forecast origins, initial-condition refit), that advantage disappears: the evolved program scores validation $0.01477$ / test $0.01812$, against the hand-written regional SEIR seed at $0.01488$ / $0.01805$ and an SEIRS variant at $0.01473$ / $0.01856$---a tie on both held-out splits. The under-converged $60$-step fitness acts as an implicit regularizer that flatters the more flexible $11$-parameter program, and the outer search optimized that loose proxy rather than true skill; converging the inner fitness erases the gap. We report this as a positive methodological result: in program-and-parameter co-search the inner-loop fitness must be converged enough to be a faithful skill signal, or the outer structure search overfits it. Mitigations we recommend are a converged or nested re-ranked fitness, rolling-origin or $k$-fold cross-validation as the selection signal, complexity-penalized selection, a cross-island agreement filter, and significance testing given few test seasons. On real influenza the long-horizon signal that would separate structures is weak, so the co-search correctly collapses to the baseline; the battery study (\S\ref{sec:exp_battery}) is the complementary case where the data \emph{do} discriminate structure and the recovered mechanism survives the same rigorous re-scoring.

\section{Experiment H: Extended Results}
\label{app:exp_h_ext}
% ============================================================================

Figure~\ref{fig:exp_h_speedup} summarizes batched forward-throughput speedups over sequential DMCI
across batch sizes for the Experiment~B model suite.

% Figure: Exp H - Speedup factor vs batch size (both models)
% Log-log plot showing near-linear throughput scaling
% Usage: \input{figures/fig_h_speedup.tex}
\begin{figure}[t]
\centering
\begin{tikzpicture}
\begin{loglogaxis}[
    width=0.85\columnwidth,
    height=0.60\columnwidth,
    xlabel={Batch size (log scale)},
    ylabel={Throughput speedup vs.\ sequential},
    xmin=0.7, xmax=1500,
    ymin=0.5, ymax=2000,
    xtick={1,4,16,64,256,1024},
    xticklabels={1,4,16,64,256,1024},
    grid=major,
    grid style={gray!20},
    tick label style={font=\small},
    label style={font=\small},
    legend style={font=\small, at={(0.03,0.97)}, anchor=north west, draw=none, fill=white, fill opacity=0.8, text opacity=1},
    legend cell align={left},
]

% Ideal linear speedup reference line
\addplot[gray, thin, dashed, domain=1:1024, samples=2] {x};
\addlegendentry{Ideal scaling (speedup $= M$)}

% DiffSoc-S (206-node model)
\addplot[blue, very thick, mark=square*, mark size=3pt]
    table[x=batch_size, y=speedup] {figures/exp_h_speedup_soc.dat};
\addlegendentry{DiffSoc-S (206 nodes)}

% DiffESM-S (97-node model)
\addplot[red!70!black, very thick, mark=diamond*, mark size=3.5pt]
    table[x=batch_size, y=speedup] {figures/exp_h_speedup_esm.dat};
\addlegendentry{DiffESM-S (97 nodes)}

\end{loglogaxis}
\end{tikzpicture}
\caption{Throughput speedup relative to sequential evaluation ($M{=}1$) as batch size increases (100 timesteps, single CPU core). The dashed line represents ideal linear scaling where speedup equals batch size. Both models closely track this line despite a $2.1\times$ difference in graph complexity (97 vs.\ 206 nodes), confirming that interpreter traversal and Python dispatch costs are amortized across batch elements while vectorized tensor operations scale efficiently. At $M{=}1024$, batching achieves $851$--$875\times$ throughput, demonstrating that the per-evaluation interpreter overhead is substantially amortized at large batch sizes.}
\label{fig:exp_h_speedup}
\end{figure}

\subsection{Implementation}

We extended the tagged-value constructors (\texttt{make\_float}, \texttt{make\_int}, \texttt{make\_bool}) to detect batch dimensions: when given a tensor of shape $(*B)$ rather than a scalar, they produce tagged values of shape $(*B, 14)$. The extractors (\texttt{unwrap\_number}, \texttt{extract\_tag}) already used ellipsis indexing (\texttt{tv[..., :10]}) and required no changes. A new entry point, \texttt{evaluate\_batched}, walks the computation graph once with batched tensors. Constants broadcast naturally against batched inputs. Programs using heap operations (cons, car, cdr, closures) are excluded \emph{from this directly-compiled batched path}; all 12 arithmetic and program-tier models from Experiment~B that do not require heap allocation are supported. Batching the heap-\emph{using} meta-circular interpreter itself is a distinct, later capability, benchmarked separately in the next subsection (Table~\ref{tab:exp_h_dmci_batched}).

\subsection{Batching the Meta-Circular Interpreter Itself}

The eight parts below batch \emph{directly-compiled}, heap-free graphs. The complementary case, batching the heap-\emph{using} compiled interpreter (DMCI) itself, is possible because a fixed program's heap layout and control flow are data-independent: structural (tag and heap-address) values stay scalar while only numeric leaves carry the batch dimension. Table~\ref{tab:exp_h_dmci_batched} measures three programs run \emph{through} the compiled meta-circular interpreter (\texttt{bootstrap/compiler.scm} with \texttt{(scheme-eval $P$ env)}) at batch sizes up to 256 on a single CPU core. A single batched interpreter walk reproduces $N$ sequential walks \emph{bit-for-bit} (max $|\textrm{seq}-\textrm{bat}|=0$, all gradients finite) and yields $214$--$270\times$ forward and $56$--$63\times$ training throughput over the sequential interpreter, the figures quoted in Section~\ref{sec:exp_h}.

\begin{table}[ht]
\centering
\caption{Self-batching the heap-\emph{using} meta-circular interpreter (DMCI itself), the case the directly-compiled batched path of this appendix excludes. Each program is executed \emph{through} the compiled interpreter (\texttt{bootstrap/compiler.scm} $+$ \texttt{(scheme-eval $P$ env)}); a single batched interpreter walk reproduces $N$ sequential walks bit-for-bit. Forward speedup is the sequential ($N$ walks) versus batched (1 walk) ratio; training speedup is batched versus sequential fitting at $N{=}64$ over 15 epochs. Single CPU core; artifact \texttt{dmci\_batched\_bench.json} in \texttt{experiments/exp\_h/results/}.}
\label{tab:exp_h_dmci_batched}
\small
\begin{tabular}{@{}lccccc@{}}
\toprule
 & & & \multicolumn{2}{c}{Forward speedup} & Training \\
\cmidrule(lr){4-5}
Program (run through interpreter) & Params & max$|\textrm{seq}{-}\textrm{bat}|$ & BS=64 & BS=256 & speedup ($N{=}64$) \\
\midrule
Beer--Lambert (arithmetic)       & 2  & $0$ & $67.1\times$ & $270.2\times$ & $56.3\times$ \\
Decay chain (recursive ODE)      & 2  & $0$ & $64.0\times$ & $255.1\times$ & $63.3\times$ \\
GPP (multi-driver composite)     & 12 & $0$ & $64.0\times$ & $214.0\times$ & $62.9\times$ \\
\bottomrule
\end{tabular}
\end{table}

\subsection{Experimental Design}

Eight sub-experiments measure the practical impact of batching, all run on a single NVIDIA A100 (40\,GB) GPU:

\begin{enumerate}[leftmargin=2em,topsep=2pt,itemsep=1pt]
\item \textbf{Part~A (Forward throughput).} For each of 12 directly-compiled models from Experiment~B, measure forward-pass time at batch sizes 1--4096. Compare batched evaluation against a sequential loop baseline.
\item \textbf{Part~B (Training speedup).} Run 500-epoch training with Adam on 64 data points, comparing sequential (64 evaluations per epoch) against batched (1 evaluation per epoch). Both start from identical initial parameters.
\item \textbf{Part~C (Correctness).} Verify that batched and sequential evaluation produce identical predictions (within floating-point tolerance) and that gradients flow correctly through the batched path.
\item \textbf{Part~D (Population batching).} Evaluate $M$ random parameter initializations across $N$ data points simultaneously, with shape $(M, N)$, through a single forward pass. This tests DMCI as an engine for population-based optimization.
\item \textbf{Part~E (\texttt{torch.compile}).} Apply \texttt{torch.compile} (PyTorch~2.x graph compilation with Inductor backend) to the batched evaluator, measuring whether kernel fusion and Python overhead elimination provide additional speedup beyond batching alone.
\item \textbf{Part~F (Large-model stress test).} Benchmark a coupled 9-module Earth System Model (DiffESM-S) with 95 inputs, 70 learnable parameters, 20 state variables, and 100-step recursive time integration. This tests batching on a program an order of magnitude larger than the Experiment~B models, with complex nonlinear feedback loops and deeply nested arithmetic.
\item \textbf{Part~G (Convergence speedup).} Fit 15 DiffESM-S climate parameters to synthetic observational data ($N{=}32$ data points). Compare sequential, batched, and population-batched ($M$ random restarts $\times$ $N$ data points) training over 300 epochs, each restart initialized by perturbing the 15 parameters with $15\%$ Gaussian noise. This tests whether throughput gains translate to faster optimization convergence.
\item \textbf{Part~H (Second large-model stress test).} Benchmark \textsc{DiffSoc-S}, a 10-module Urban Political Economy Simulator with 204 inputs, 87 learnable parameters, and 206 compiled graph nodes. This second large model tests whether the batching results from Part~F generalize across model domains and at $2.1\times$ the graph complexity of DiffESM-S.
\end{enumerate}

\subsection{Correctness Verification (Part~C)}

\begin{table}[ht]
\centering
\caption{Experiment~H Part~C: Correctness verification on A100 GPU.}
\label{tab:exp_h_correctness}
\small
\begin{tabular}{@{}lrrr@{}}
\toprule
Model & Pred.\ diff & Loss at $\theta^*$ & Grads OK \\
\midrule
M01 Coulomb & $0.00$ & $2.8 \times 10^{-14}$ & \checkmark \\
M02 Beer--Lambert & $0.00$ & $7.2 \times 10^{-14}$ & \checkmark \\
M03 Michaelis--Menten & $0.00$ & $9.7 \times 10^{-14}$ & \checkmark \\
M04 Arrhenius & $4.8 \times 10^{-7}$ & $1.9 \times 10^{-14}$ & \checkmark \\
M05 Damped oscillator & $2.4 \times 10^{-7}$ & $4.1 \times 10^{-14}$ & \checkmark \\
M06 Logistic growth & $4.8 \times 10^{-7}$ & $7.6 \times 10^{-14}$ & \checkmark \\
M07 Power law & $9.5 \times 10^{-7}$ & $1.9 \times 10^{-13}$ & \checkmark \\
M08 Euler ODE & $0.00$ & $6.6 \times 10^{-15}$ & \checkmark \\
M10 SiLU & $1.2 \times 10^{-7}$ & $4.9 \times 10^{-15}$ & \checkmark \\
M11 Recursive filter & $0.00$ & $1.1 \times 10^{-14}$ & \checkmark \\
M12 Newton sqrt & $0.00$ & $7.8 \times 10^{-15}$ & \checkmark \\
M14 Anomaly scorer & $0.00$ & $7.9 \times 10^{-15}$ & \checkmark \\
\bottomrule
\end{tabular}
\end{table}

All 12 models pass: maximum prediction difference between sequential and batched is below $10^{-6}$ (Table~\ref{tab:exp_h_correctness}), loss at true parameters is below $10^{-13}$, and all parameter gradients are finite. Batching introduces no numerical artifacts.

\subsection{Forward Throughput (Part~A)}

Table~\ref{tab:exp_h_throughput} reports batched forward-throughput speedups over sequential DMCI across batch sizes.

\begin{table}[ht]
\centering
\caption{Experiment~H Part~A: Forward throughput speedup (batched vs.\ sequential) on A100 GPU.}
\label{tab:exp_h_throughput}
\small
\begin{tabular}{@{}lrrrrr@{}}
\toprule
Model & BS=1 & BS=32 & BS=128 & BS=512 \\
\midrule
M01 Coulomb & 2.9$\times$ & 83$\times$ & 302$\times$ & 1{,}309$\times$ \\
M02 Beer--Lambert & 3.5$\times$ & 110$\times$ & 374$\times$ & 1{,}710$\times$ \\
M03 Michaelis--Menten & 2.7$\times$ & 77$\times$ & 312$\times$ & 1{,}226$\times$ \\
M04 Arrhenius & 2.0$\times$ & 56$\times$ & 240$\times$ & 950$\times$ \\
M05 Damped oscillator & 1.6$\times$ & 48$\times$ & 205$\times$ & 821$\times$ \\
M06 Logistic growth & 1.3$\times$ & 39$\times$ & 164$\times$ & 657$\times$ \\
M07 Power law & 3.2$\times$ & 84$\times$ & 385$\times$ & 1{,}531$\times$ \\
M08 Euler ODE & 0.6$\times$ & 17$\times$ & 66$\times$ & 279$\times$ \\
M10 SiLU & 1.5$\times$ & 39$\times$ & 158$\times$ & 662$\times$ \\
M11 Recursive filter & 0.6$\times$ & 18$\times$ & 74$\times$ & 291$\times$ \\
M12 Newton sqrt & 0.7$\times$ & 23$\times$ & 90$\times$ & 357$\times$ \\
M14 Anomaly scorer & 2.8$\times$ & 82$\times$ & 355$\times$ & 1{,}409$\times$ \\
\bottomrule
\end{tabular}
\end{table}

At batch size 1, batched evaluation is comparable to sequential (1--3$\times$). At batch size 512, speedups range from $279\times$ (M08 Euler ODE, 10 nested function calls) to $1{,}710\times$ (M02 Beer--Lambert, simple multiplication). The pattern is consistent: the per-evaluation overhead of Python dispatch and graph walking is paid once per batch, not once per input. Batched forward-pass time is nearly constant across batch sizes (0.03--0.7\,ms), confirming that the overhead is dominated by graph traversal rather than tensor computation.

\subsection{Training Speedup (Part~B)}

Table~\ref{tab:exp_h_training} reports batched training speedups over sequential DMCI.

\begin{table}[ht]
\centering
\caption{Experiment~H Part~B: Training time (500 epochs, 64 data points, Adam, A100 GPU).}
\label{tab:exp_h_training}
\small
\begin{tabular}{@{}lrrrr@{}}
\toprule
Model & Sequential & Batched & Speedup & Loss \\
\midrule
M01 Coulomb & 5.81\,s & 0.34\,s & 17.0$\times$ & 14.22 \\
M02 Beer--Lambert & 3.81\,s & 0.31\,s & 12.2$\times$ & 0.00 \\
M03 Michaelis--Menten & 5.95\,s & 0.35\,s & 16.9$\times$ & 4.60 \\
M04 Arrhenius & 6.82\,s & 0.37\,s & 18.5$\times$ & 0.14 \\
M05 Damped oscillator & 10.86\,s & 0.44\,s & 24.7$\times$ & $<10^{-5}$ \\
M06 Logistic growth & 11.06\,s & 0.44\,s & 25.0$\times$ & 0.32 \\
M07 Power law & 6.55\,s & 0.37\,s & 17.8$\times$ & 0.69 \\
M08 Euler ODE & 48.21\,s & 1.09\,s & 44.3$\times$ & 0.00 \\
M10 SiLU & 11.03\,s & 0.44\,s & 25.2$\times$ & $<10^{-3}$ \\
M11 Recursive filter & 42.02\,s & 0.94\,s & 44.8$\times$ & 0.00 \\
M12 Newton sqrt & 21.97\,s & 0.61\,s & 36.3$\times$ & 0.00 \\
M14 Anomaly scorer & 7.07\,s & 0.37\,s & 19.0$\times$ & 0.00 \\
\bottomrule
\end{tabular}
\end{table}

Batched training is $12$--$45\times$ faster than sequential, with losses matching to all reported digits. Program-tier models with nested function calls (M08, M11) show the largest speedups ($44$--$45\times$) because the per-call overhead of graph walking is amortized most effectively. A training run that took 48\,s sequentially completes in 1.1\,s batched.

\subsection{Population Batching (Part~D)}

Table~\ref{tab:exp_h_population} reports population-batching speedups.

\begin{table}[ht]
\centering
\caption{Experiment~H Part~D: Population batching speedup on A100 GPU ($N{=}64$ data points).}
\label{tab:exp_h_population}
\small
\begin{tabular}{@{}lrrrr@{}}
\toprule
Model & Pop=1 & Pop=10 & Pop=100 \\
\midrule
M01 Coulomb & 147$\times$ & 1{,}401$\times$ & 14{,}528$\times$ \\
M02 Beer--Lambert & 169$\times$ & 1{,}685$\times$ & 16{,}873$\times$ \\
M03 Michaelis--Menten & 144$\times$ & 1{,}402$\times$ & 14{,}395$\times$ \\
M04 Arrhenius & 109$\times$ & 1{,}111$\times$ & 11{,}145$\times$ \\
M05 Damped oscillator & 90$\times$ & 916$\times$ & 9{,}156$\times$ \\
\bottomrule
\end{tabular}
\end{table}

When $M$ independent parameter initializations are evaluated across $N$ data points simultaneously, the total evaluation count is $M \times N$ but the graph is walked only once. At population size 100, speedups range from $9{,}156\times$ to $16{,}873\times$: 6{,}400 evaluations complete in ${\sim}0.1$\,ms. Gradients flow correctly through all population members, enabling batched multi-start optimization in a single backward pass over a tensor of shape $(M, N)$. Standard vectorization (\texttt{jax.vmap}/\texttt{torch.func.vmap}) can vectorize a directly-compiled model over the same population axis with one graph and one optimizer, so this is not unique to DMCI; the distinction is that batched DMCI provides it \emph{automatically for any program supplied as data}, with no per-program vectorization, because the batch dimension broadcasts through the fixed interpreter graph. All speedups in this section are measured against DMCI's own sequential evaluation, not against a \texttt{vmap} baseline.

\subsection{Comparison to Standard Vectorization (\texttt{jax.vmap})}

To place the batched-throughput numbers against standard practice rather than only against DMCI's own sequential loop, we emit each heap-free model to a functional JAX module (\texttt{apply(params, **inputs)}, fully \texttt{jax.grad}/\texttt{jax.vmap}-able) and time \texttt{jax.jit(jax.vmap(\textbf{$\cdot$}))} of the \emph{directly-compiled} graph against DMCI's \texttt{evaluate\_batched} on the identical graph, data ($N{=}64$), and single CPU core (jax CPU backend). A numerical-equivalence guard confirms both paths compute the same function (maximum absolute difference $\le 9.5\times10^{-7}$ across all covered models). Table~\ref{tab:exp_h_vmap} reports DMCI throughput as a fraction of the \texttt{jax.vmap} baseline.

\begin{table}[ht]
\centering
\caption{DMCI batched throughput as a fraction of a \texttt{jax.vmap} baseline (directly-compiled model, same graph/data/CPU). Values ${<}1$ mean DMCI is slower than vmap. ``forward'' is a batched forward pass; ``train'' is one value-and-grad step. jit/trace (XLA compile) time is excluded from the steady-state timing.}
\label{tab:exp_h_vmap}
\begin{tabular}{lcc}
\toprule
Model & forward & train step \\
\midrule
M02 Beer--Lambert        & $0.93\times$ & $0.11\times$ \\
M01 Coulomb              & $0.63\times$ & $0.10\times$ \\
M07 power law            & $0.63\times$ & $0.10\times$ \\
M03 Michaelis--Menten    & $0.54\times$ & $0.10\times$ \\
M04 Arrhenius            & $0.46\times$ & $0.11\times$ \\
M14 anomaly scorer       & $0.45\times$ & $0.10\times$ \\
M05 Damped oscillator    & $0.36\times$ & $0.08\times$ \\
M06 logistic growth      & $0.19\times$ & $0.08\times$ \\
M12 Newton $\sqrt{\cdot}$ (5-iter) & $0.10\times$ & $0.03\times$ \\
M11 recursive filter (8-step)      & $0.05\times$ & $0.06\times$ \\
M08 Euler ODE (10-step)            & $0.04\times$ & $0.03\times$ \\
\midrule
\textbf{geometric mean} ($n{=}11$) & $\mathbf{0.27\times}$ & $\mathbf{0.076\times}$ \\
\bottomrule
\end{tabular}
\end{table}

The pattern is clear: for closed-form equations DMCI's broadcasting graph-walk is within a small constant factor of XLA-compiled \texttt{vmap} ($0.36$--$0.93\times$), while loop-heavy programs (Euler ODE, recursive filter, Newton iteration) are ${\sim}10$--$25\times$ slower because each loop iteration is a Python-level graph traversal rather than a fused XLA kernel; training steps are uniformly ${\sim}0.03$--$0.11\times$, where XLA's compilation advantage is largest. The comparison is therefore honest about speed: \texttt{vmap} wins wherever it applies. The two DMCI advantages it does not capture are (i) \emph{zero per-program vectorization}, the same interpreter batches every program supplied as data with no per-program code, and (ii) \emph{coverage}: of the models DMCI batches natively, M10 (a looping activation) and both production integrators (\textsc{DiffESM-S}, \textsc{DiffSoc-S}) cannot be \texttt{jit}/\texttt{vmap}-ed by the JAX backend at all (the recursive integrator raises a concretization error on the traced loop bound), whereas DMCI batches them bit-for-bit. The within-system speedups reported above ($\le 875\times$, $3{,}848\times$) are thus recoveries of interpreter overhead relative to sequential interpretation, not gains over standard vectorization.

\subsection{\texttt{torch.compile} (Part~E)}

Table~\ref{tab:exp_h_compile} reports the effect of \texttt{torch.compile} on the batched evaluator.

\begin{table}[ht]
\centering
\caption{Experiment~H Part~E: \texttt{torch.compile} speedup over plain batched evaluation. Values $<1.0\times$ indicate compiled is slower.}
\label{tab:exp_h_compile}
\small
\begin{tabular}{@{}lrrrrrr@{}}
\toprule
 & \multicolumn{3}{c}{Forward only} & \multicolumn{3}{c}{Forward + backward} \\
\cmidrule(lr){2-4}\cmidrule(lr){5-7}
Model & BS=64 & BS=512 & BS=4096 & BS=64 & BS=512 & BS=4096 \\
\midrule
M01 Coulomb & 0.30 & 0.46 & 0.39 & 0.40 & 0.40 & 0.41 \\
M02 Beer--Lambert & 0.29 & 0.24 & 0.24 & 0.32 & 0.31 & 0.33 \\
M03 Michaelis--Menten & 0.37 & 0.34 & 0.34 & 0.49 & 0.51 & 0.49 \\
M04 Arrhenius & 0.58 & 0.49 & 0.48 & 0.54 & 0.56 & 0.55 \\
M05 Damped oscillator & 0.80 & 0.60 & 0.63 & 0.89 & 0.85 & 0.86 \\
M06 Logistic growth & 0.64 & 0.63 & 0.64 & 0.85 & 0.85 & 0.85 \\
M07 Power law & 0.39 & 0.37 & 0.39 & 0.80 & 0.79 & 0.80 \\
M08 Euler ODE & 0.90 & 0.86 & 0.86 & 0.94 & 0.93 & 0.94 \\
M10 SiLU & 0.69 & 0.64 & 0.65 & 0.68 & 0.84 & 0.84 \\
M11 Recursive filter & 0.87 & 0.87 & 0.86 & 0.94 & 0.94 & 0.98 \\
M12 Newton sqrt & 0.81 & 0.81 & 0.80 & 0.89 & 0.93 & 0.91 \\
M14 Anomaly scorer & 0.56 & 0.59 & 0.60 & 0.81 & 0.82 & 0.82 \\
\bottomrule
\end{tabular}
\end{table}

\texttt{torch.compile} (PyTorch~2.12, Inductor backend, \texttt{mode="reduce-overhead"}) provides \emph{no benefit} for any model at any batch size, with compiled execution $1.02$--$4.2\times$ \emph{slower} than the plain batched evaluator.

The reason is architectural: the compiled programs are small computation graphs (3--30 tensor operations). The plain batched evaluator already issues a short sequence of individually optimized CUDA kernels, and the per-kernel launch overhead is negligible on GPU hardware. \texttt{torch.compile}'s tracing, guard checks, and Triton code generation impose a fixed overhead that exceeds any savings from kernel fusion. The deepest graphs (M08 Euler ODE: 10 nested calls; M11 recursive filter: 8 iterations) approach parity at $0.86$--$0.98\times$ because they have more operations to fuse, but never break even.

This negative result is informative: it confirms that the batched evaluator is already near-optimal for GPU throughput on these program sizes. The remaining $14\times$ per-evaluation latency (Table~\ref{tab:overhead_decomp}) is dominated by tagged-value wrapping and Python dispatch costs that are amortized by batching but not eliminated by graph compilation. Reducing single-evaluation latency requires lowering to a compiled representation (e.g., MLIR) rather than applying \texttt{torch.compile} on top of the existing Python evaluator.

\subsection{DiffESM-S Large-Model Benchmark (Part~F)}

Table~\ref{tab:diffesm_spec} specifies the DiffESM-S model and Table~\ref{tab:exp_h_diffesm} reports its batched-throughput benchmark.

\begin{table}[ht]
\centering
\caption{DiffESM-S model specification.}
\label{tab:diffesm_spec}
\small
\begin{tabular}{@{}lr@{}}
\toprule
Property & Value \\
\midrule
Source lines (Scheme) & 317 \\
Compiled graph nodes  & 97 \\
Total inputs          & 95 \\
\quad Initial state variables & 20 \\
\quad Learnable parameters    & 70 \\
\quad Emission drivers        & 4 \\
\quad Control (timestep count)& 1 \\
Coupled modules       & 9 \\
Recursive timesteps   & 100 \\
Effective tensor ops  & ${\sim}9{,}700$ \\
\bottomrule
\end{tabular}
\end{table}

\begin{table}[ht]
\centering
\caption{Experiment~H Part~F: DiffESM-S benchmark (100 timesteps) on A100 node at batch size~64.}
\label{tab:exp_h_diffesm}
\small
\begin{tabular}{@{}llrrrrr@{}}
\toprule
 & & \multicolumn{2}{c}{Forward} & \multicolumn{2}{c}{Fwd+Bwd} & GPU \\
\cmidrule(lr){3-4}\cmidrule(lr){5-6}
Device & Method & Time (s) & ms/eval & Time (s) & ms/eval & MB \\
\midrule
\multirow{2}{*}{CPU} & Sequential & 11.25 & 175.8 & 30.85 & 482.0 & n/a \\
 & Batched    & 0.18  & 2.85  & 0.45  & 7.09  & n/a \\
\addlinespace
\multirow{2}{*}{GPU} & Sequential & 23.65 & 369.5 & 62.59 & 978.0 & n/a \\
 & Batched    & 0.38  & 5.92  & 0.94  & 14.73 & 5.5 \\
\bottomrule
\multicolumn{7}{@{}l}{\footnotesize Batch size = 64 for all rows. Full data: BS $\in \{1,4,16,64,256,1024\}$, steps $\in \{10,50,100\}$.}
\end{tabular}
\end{table}

At batch size~1, batched and sequential performance are equivalent (${\sim}176$~ms on CPU, ${\sim}370$~ms on GPU per evaluation).
As batch size increases, the batched evaluator amortizes the Python-level graph walk:
at batch size~64, CPU batched achieves a $60.5\times$ forward throughput gain over the single-evaluation baseline
(2.85~ms/eval vs.\ 172.7~ms/eval);
at batch size~1024, amortized cost drops to 0.20~ms/eval, an $875\times$ reduction.
(The table's $175.8$~ms/eval sequential figure is the sustained cost over its 64-evaluation run; throughput speedups use the batch-size-1 baseline, $172.7$~ms/eval, differing only by run-to-run timing jitter.)
Forward-plus-backward speedups track closely: $68.0\times$ at batch size~64.

A striking result is that \emph{CPU batched evaluation outperforms GPU batched} across all batch sizes.
At batch size~64, CPU batched completes in 0.18\,s vs.\ GPU batched at 0.38\,s ($2.1\times$ faster);
at batch size~1024, CPU finishes in 0.21\,s vs.\ GPU at 0.37\,s.
This inversion occurs because DiffESM-S's 97 graph nodes are individually small scalar operations
(addition, multiplication, log, exp, sigmoid).
The GPU must launch a separate CUDA kernel for each of the ${\sim}9{,}700$ effective operations,
and kernel launch overhead (${\sim}5$--$10\,\mu$s each) dominates the negligible arithmetic cost.
CPU execution avoids this overhead entirely, making PyTorch's vectorized CPU kernels faster for this workload.
GPU peak memory remains modest: 43\,MB at batch size~1024 with 100 timesteps, confirming that
memory is not the bottleneck.

This result has important implications for deployment: DMCI-compiled scientific models with
many small coupled operations should default to CPU batched evaluation rather than GPU,
unless individual operations are large enough (e.g., matrix multiplications) to amortize kernel launch costs.
The crossover point depends on per-node computation intensity, which for DiffESM-S is below the GPU advantage threshold.

% Figure: Exp H - Batched throughput scaling (log-log)
% DiffESM-S and DiffSoc-S, 100 timesteps, CPU forward pass
% Usage: \input{figures/fig_h_throughput_scaling.tex}
\begin{figure}[t]
\centering
\begin{tikzpicture}
\begin{loglogaxis}[
    width=0.85\columnwidth,
    height=0.60\columnwidth,
    xlabel={Batch size},
    ylabel={ms / evaluation (forward)},
    xmin=0.7, xmax=1500,
    ymin=0.1, ymax=3000,
    xtick={1,4,16,64,256,1024},
    xticklabels={1,4,16,64,256,1024},
    grid=major,
    grid style={gray!20},
    tick label style={font=\small},
    label style={font=\small},
    legend style={font=\small, at={(0.03,0.03)}, anchor=south west, draw=none, fill=white, fill opacity=0.8, text opacity=1},
    legend cell align={left},
]

% Sequential baselines (horizontal dashed lines)
\addplot[orange, dashed, thick, domain=0.7:1500, samples=2, forget plot] {1686.9506};
\node[font=\tiny, orange, anchor=south west] at (axis cs:2,1686.9506) {DiffSoc seq.};

\addplot[violet, dashed, thick, domain=0.7:1500, samples=2, forget plot] {172.7051};
\node[font=\tiny, violet, anchor=south west] at (axis cs:2,172.7051) {DiffESM seq.};

% DiffSoc-S batched
\addplot[blue, very thick, mark=square*, mark size=2.5pt]
    table[x=batch_size, y=ms_per_eval] {figures/exp_h_throughput_soc.dat};
\addlegendentry{DiffSoc-S (batched)}

% DiffESM-S batched
\addplot[red!70!black, very thick, mark=triangle*, mark size=2.5pt]
    table[x=batch_size, y=ms_per_eval] {figures/exp_h_throughput_esm.dat};
\addlegendentry{DiffESM-S (batched)}

% Ideal linear scaling reference (slope = -1 on log-log)
\addplot[gray, thin, dashed, domain=1:1024, samples=2] {1686.9506/x};
\node[font=\tiny, gray, anchor=south east, rotate=-38] at (axis cs:200,8.5) {ideal $\propto 1/M$};

\end{loglogaxis}
\end{tikzpicture}
\caption{Batched throughput scaling for DMCI-compiled models (100 timesteps, CPU forward pass). Both DiffSoc-S and DiffESM-S achieve near-linear scaling: ms/eval decreases proportionally to batch size, closely tracking the ideal $1/M$ line. Horizontal dashed lines show the sequential baseline (constant per-evaluation cost regardless of batch size).}
\label{fig:exp_h_throughput}
\end{figure}

\subsection{Convergence Speedup (Part~G)}

Table~\ref{tab:convergence} reports the convergence-speedup comparison.

\begin{table}[ht]
\centering
\caption{Experiment~H Part~G: Convergence speedup on DiffESM-S (20 timesteps, 15 params, $N{=}32$, 300 epochs, CPU).}
\label{tab:convergence}
\small
\begin{tabular}{@{}lrrrr@{}}
\toprule
Condition & Time (s) & Final loss & Conv.\ & vs.\ Seq.\ \\
\midrule
Sequential        & 731.4 & ${<}10^{-8}$ & 1/1   & $1.0\times$ \\
Batched           &  23.9 & ${<}10^{-8}$ & 1/1   & $30.6\times$ \\
\addlinespace
Population $M{=}1$   &  23.8 & $2.0{\times}10^{-8}$ & 1/1   & n/a \\
Population $M{=}10$  &  26.8 & ${<}10^{-8}$ & 10/10 & n/a \\
Population $M{=}50$  &  29.9 & ${<}10^{-8}$ & 50/50 & n/a \\
Population $M{=}200$ &  38.0 & ${<}10^{-8}$ & 200/200 & n/a \\
\bottomrule
\multicolumn{5}{@{}l}{\footnotesize Conv.\ = restarts reaching loss ${<}10^{-3}$ by epoch 300.} \\
\multicolumn{5}{@{}l}{\footnotesize $200$ sequential restarts: $146{,}279$\,s (40.6\,h) vs.\ population: $38.0$\,s ($3{,}848\times$ speedup).}
\end{tabular}
\end{table}

Sequential and batched training produce \emph{identical} optimization trajectories (the same loss at each epoch to floating-point precision) because both compute the same mean-squared-error gradient over the same $N{=}32$ samples.
The only difference is wall-clock time: batched completes $300$ epochs in $23.9$\,s versus $731.4$\,s for sequential, a $30.6\times$ speedup.
When batched training finishes all $300$ epochs, sequential has completed only $9$ epochs and has $10\times$ higher loss ($1.07 \times 10^{-3}$ vs.\ ${<}10^{-8}$).
This is not a statistical improvement but a direct consequence of replacing $N$ Python-level graph walks with one batched evaluation.

Population batching demonstrates the most dramatic advantage.
By packing $M$ random parameter initializations $\times$ $N$ data points into a single batch of size $M \cdot N$,
DMCI simultaneously optimizes $M$ independent restarts in one forward--backward pass.
Scaling from $M{=}1$ to $M{=}200$ increases wall-clock time by only $1.6\times$ (from $23.8$\,s to $38.0$\,s),
while exploring $200\times$ more of the parameter landscape.
All $200$ restarts converge (loss~${<}10^{-3}$) by epoch~$75$.
The equivalent sequential computation ($200$ independent training runs of $300$ epochs each) would
require $200 \times 731.4 = 146{,}279$\,s ($40.6$~hours).
Population batching completes the same exploration in $38.0$\,s,
a \textbf{3{,}848$\boldsymbol{\times}$} speedup.

This sub-linear scaling arises because the Python-level graph walk (the dominant cost at small batch sizes)
is executed once regardless of batch size.
Increasing $M$ from $1$ to $200$ multiplies the tensor dimension from $32$ to $6{,}400$,
but PyTorch's vectorized CPU kernels scale the arithmetic cost far less than linearly
for the small per-node operations in DiffESM-S.
The practical consequence is that multi-restart global optimization, typically
the most expensive phase of scientific model calibration, becomes nearly free
relative to a single training run.

% Figure: Exp H - Convergence wall-clock time vs population size
% Shows sub-linear scaling: nearly flat line from M=1 to M=200
% Usage: \input{figures/fig_h_convergence_walltime.tex}
\begin{figure}[t]
\centering
\begin{tikzpicture}
\begin{axis}[
    width=0.85\columnwidth,
    height=0.55\columnwidth,
    xlabel={Population size $M$},
    ylabel={Total wall-clock time (s)},
    xmin=0, xmax=220,
    ymin=0, ymax=800,
    xtick={1,10,50,100,200},
    grid=major,
    grid style={gray!20},
    tick label style={font=\small},
    label style={font=\small},
    legend style={font=\small, at={(0.97,0.97)}, anchor=north east, draw=none, fill=white, fill opacity=0.8, text opacity=1},
    legend cell align={left},
]

% Sequential baseline (horizontal dashed line at 731.4s)
\addplot[orange, dashed, thick, domain=0:220, samples=2] {731.4};
\addlegendentry{Sequential ($M{=}1$)}

% Population batched data
\addplot[blue, very thick, mark=square*, mark size=2.5pt]
    table[x=pop_size, y=wall_time] {figures/exp_h_convergence_walltime.dat};
\addlegendentry{Batched}

% Annotations
\draw[<->, thick, gray] (axis cs:150,38.0) -- node[right, font=\scriptsize, gray, align=left] {$19\times$\\faster} (axis cs:150,731.4);

% Label the nearly flat batched region
\node[font=\scriptsize, blue!70!black, anchor=south] at (axis cs:100,42) {23.8s $\to$ 38.0s};

\end{axis}
\end{tikzpicture}
\caption{Wall-clock time for 300 training epochs on DiffESM-S as population size $M$ increases. Batched evaluation (blue) scales sub-linearly: increasing the population from $M{=}1$ to $M{=}200$ adds only 60\% wall time (23.8\,s $\to$ 38.0\,s), compared to the sequential baseline at 731.4\,s for $M{=}1$ alone. This enables population-based training at near-zero marginal cost.}
\label{fig:exp_h_convergence}
\end{figure}

\subsection{DiffSoc-S Large-Model Benchmark (Part~H)}

Table~\ref{tab:diffsoc_spec} specifies the DiffSoc-S model, and Tables~\ref{tab:exp_h_diffsoc} and~\ref{tab:exp_h_diffsoc_scaling} report its batched benchmark and batch-size scaling.

\begin{table}[ht]
\centering
\caption{DiffSoc-S model specification.}
\label{tab:diffsoc_spec}
\small
\begin{tabular}{@{}lr@{}}
\toprule
Property & Value \\
\midrule
Source lines (Scheme) & 703 \\
Compiled graph nodes  & 206 \\
Total inputs          & 204 \\
\quad Initial state variables & 96 \\
\quad Learnable parameters    & 87 \\
\quad Exogenous drivers       & 5 \\
\quad Structural constants    & 15 \\
\quad Control (timestep count)& 1 \\
Coupled modules       & 10 \\
\bottomrule
\end{tabular}
\end{table}

\begin{table}[ht]
\centering
\caption{Experiment~H Part~H: DiffSoc-S benchmark on A100 node at batch size~64.}
\label{tab:exp_h_diffsoc}
\small
\begin{tabular}{@{}llrrrrrr@{}}
\toprule
 & & \multicolumn{3}{c}{Forward} & \multicolumn{3}{c}{Fwd+Bwd} \\
\cmidrule(lr){3-5}\cmidrule(lr){6-8}
Device & Method & 10-step & 50-step & 100-step & 10-step & 50-step & 100-step \\
\midrule
\multirow{2}{*}{CPU} & Sequential (s) & 12.05 & 56.37 & 111.91 & 31.01 & 155.21 & 318.12 \\
 & Batched (s)    & 0.19  & 0.88  & 1.73   & 0.45  & 2.26   & 4.58 \\
\addlinespace
\multirow{2}{*}{GPU} & Sequential (s) & 24.93 & 116.61 & 230.74 & 74.56 & 384.79 & 630.28 \\
 & Batched (s)    & 0.39  & 1.84  & 3.66   & 1.14  & 5.81   & 9.45 \\
\bottomrule
\end{tabular}
\end{table}

\begin{table}[ht]
\centering
\caption{Experiment~H Part~H: DiffSoc-S throughput scaling (100 timesteps).}
\label{tab:exp_h_diffsoc_scaling}
\small
\begin{tabular}{@{}rrrrrr@{}}
\toprule
 & \multicolumn{2}{c}{CPU} & \multicolumn{3}{c}{GPU} \\
\cmidrule(lr){2-3}\cmidrule(lr){4-6}
Batch size & ms/eval & Speedup & ms/eval & Speedup & MB \\
\midrule
1 (seq.)  & 1687.0 & $1.0\times$    & 3577.9 & $1.0\times$    & n/a \\
1         & 1705.1 & $1.0\times$    & 3624.0 & $1.0\times$    & 43 \\
4         &  428.3 & $3.9\times$    &  915.8 & $3.9\times$    & 43 \\
16        &  107.1 & $15.7\times$   &  228.5 & $15.7\times$   & 43 \\
64        &   27.1 & $62.3\times$   &   57.1 & $62.6\times$   & 43 \\
256       &    7.4 & $\mathbf{228\times}$  &   14.4 & $\mathbf{249\times}$  & 86 \\
1024      &    2.0 & $\mathbf{851\times}$  &    3.6 & $\mathbf{995\times}$  & 344 \\
\bottomrule
\multicolumn{6}{@{}l}{\footnotesize Forward-only ms/eval (batched). Fwd+bwd speedups are similar ($70\times$ at BS=64, $880$--$1{,}256\times$ at BS=1024).}
\end{tabular}
\end{table}

The batching pattern matches DiffESM-S precisely:
at batch size~64, CPU batched is $62$--$65\times$ faster than the same-batch sequential run across all timestep counts,
and forward-plus-backward speedup is $68$--$70\times$.
GPU speedup is similar ($63\times$ forward, $66$--$67\times$ fwd+bwd).
The CPU-over-GPU inversion observed in Part~F persists: CPU batched is $1.8$--$2.1\times$ faster than GPU batched across all configurations, confirming that models composed of many small operations favor CPU execution.

At larger batch sizes, DiffSoc-S achieves throughput speedups comparable to DiffESM-S.
At batch size~1024 on CPU, amortized forward latency drops to $1.98$~ms/eval for 100-step integration, an $851\times$ throughput improvement over sequential evaluation ($1{,}687$~ms/eval).
On GPU at batch size~1024, 100-step fwd+bwd drops to $3.59$~ms/eval with $344$~MB peak memory,
representing a $1{,}256\times$ throughput speedup over sequential GPU evaluation.
Scaling is nearly linear with batch size: $4\times$ at BS=4, $16\times$ at BS=16, $63\times$ at BS=64, with sub-linear gains continuing to BS=1024.

% Figure: Exp H - CPU vs GPU comparison (DiffSoc-S, 100 timesteps)
% Shows CPU-over-GPU inversion: CPU batched is faster than GPU batched
% Usage: \input{figures/fig_h_cpu_vs_gpu.tex}
\begin{figure}[t]
\centering
\begin{tikzpicture}
\begin{loglogaxis}[
    width=0.85\columnwidth,
    height=0.60\columnwidth,
    xlabel={Batch size},
    ylabel={ms / evaluation (forward)},
    xmin=0.7, xmax=1500,
    ymin=1, ymax=5000,
    xtick={1,4,16,64,256,1024},
    xticklabels={1,4,16,64,256,1024},
    grid=major,
    grid style={gray!20},
    tick label style={font=\small},
    label style={font=\small},
    legend style={font=\small, at={(0.97,0.97)}, anchor=north east, draw=none, fill=white, fill opacity=0.8, text opacity=1},
    legend cell align={left},
]

% CPU batched
\addplot[blue, very thick, mark=square*, mark size=2.5pt]
    table[x=batch_size, y=ms_per_eval] {figures/exp_h_cpugpu_soc_cpu.dat};
\addlegendentry{CPU batched}

% GPU batched
\addplot[red!80!black, very thick, mark=triangle*, mark size=2.5pt]
    table[x=batch_size, y=ms_per_eval] {figures/exp_h_cpugpu_soc_gpu.dat};
\addlegendentry{GPU batched}

% Annotations
\draw[<->, thick, gray] (axis cs:1024,1.9825) -- node[right, font=\scriptsize, gray] {$1.8\times$} (axis cs:1024,3.5948);

\end{loglogaxis}
\end{tikzpicture}
\caption{CPU vs.\ GPU batched evaluation for DiffSoc-S (100 timesteps, forward pass). CPU consistently outperforms GPU across all batch sizes, with the GPU incurring ${\sim}2\times$ overhead from per-kernel launch latency (the interpreter issues many small operations rather than a few large ones). The interpreter's control-flow-heavy execution pattern does not benefit from GPU parallelism; all speedup comes from tensor-level batching within the CPU evaluator.}
\label{fig:exp_h_cpu_gpu}
\end{figure}

The large-model stress tests (Parts~F and~H) confirm these findings across two production-scale scientific models from different domains: DiffESM-S (97 nodes, climate) and DiffSoc-S (206 nodes, urban economics). Both achieve $60$--$62\times$ forward speedup at batch size~64 and $851$--$875\times$ at batch size~1024, with consistent CPU-over-GPU inversion at all tested batch sizes.

The convergence experiment (Part~G) makes this concrete: batched training completes $300$ epochs of DiffESM-S parameter fitting in $23.9$\,s, while sequential training reaches only $9$ epochs in the same wall-clock time. Population batching pushes this further: $200$ simultaneous random restarts complete in $38.0$\,s, a $3{,}848\times$ speedup over $200$ sequential runs. The overhead that appears prohibitive in single-evaluation benchmarks vanishes entirely in the optimization workloads where DMCI is actually used.

\subsection{Implications for the Overhead Narrative}

The per-evaluation overhead documented in Table~\ref{tab:overhead_decomp} (14$\times$ for sequential scalar evaluation) is real but misleading as a characterization of DMCI's practical performance. Like neural networks, DMCI's computation graph is a fixed structure through which batched data flows via tensor parallelism. The 14$\times$ overhead is a \emph{latency} cost (the time for a single scalar evaluation), not a \emph{throughput} cost.

At batch size 512 on a single GPU, the amortized per-evaluation overhead drops to $<0.002\times$ for equation-tier models: 512 evaluations complete in the same time as ${\sim}1$ sequential evaluation. For population-based workloads (Part~D), the amortized per-evaluation throughput cost falls to $<0.0001\times$ of a single sequential evaluation: DMCI evaluates 6{,}400 program instances in less time than sequential evaluation handles one. (This is an amortized throughput ratio, not a sub-unity per-evaluation latency; single-evaluation latency remains $14\times$ higher, as noted below.)

This does not eliminate the overhead; it redistributes it. The Python-level graph walking cost is paid once per batch, making it negligible at scale, but single-evaluation latency remains 14$\times$ higher than direct compilation. Workloads that are inherently sequential (interactive evaluation, debugging, small-batch inference) still pay the full per-evaluation cost.

\paragraph{Why near-linear scaling on a single CPU core.}
The ${\sim}850\times$ throughput gain at batch size~1024 on a single CPU core may appear paradoxical: the same core performs $1024\times$ more arithmetic, yet total wall time increases by only ${\sim}19\%$.
The explanation lies in where time is actually spent.
The overhead decomposition (Table~\ref{tab:overhead_decomp}) shows that raw arithmetic and PyTorch autograd account for $<0.2\%$ of sequential evaluation time; the remaining $>99.8\%$ is Python-level bookkeeping: tagged-value wrapping (41.4\%), Python interpreter dispatch (32.1\%), and graph walking (16.9\%).
In sequential mode, this bookkeeping is paid once per evaluation.
With $1{,}024$ evaluations, it is paid $1{,}024$ times.

With batching, the graph is walked \emph{once}.
Each of the 206 nodes (for DiffSoc-S) operates on a contiguous tensor of shape $(1024,)$ rather than a scalar.
The DiffSoc-S numbers make this concrete: a single 100-step forward pass takes $1{,}705$~ms;
$1{,}024$ batched evaluations take $2{,}030$~ms.
The additional $325$~ms is the actual arithmetic for $1{,}023$ extra evaluations.
The other $1{,}705$~ms is pure overhead, paid once regardless of batch size.
Even on a single core, the batched arithmetic is cheap because PyTorch's CPU kernels use SIMD vectorization (AVX-256 or AVX-512), processing 8--16 float32 values per instruction, and the contiguous memory layout of batched tensors yields favorable cache behavior compared to $1{,}024$ separately allocated scalars.

\paragraph{When GPU begins to pay off.}
At small batch sizes, GPU is \emph{slower} than CPU for these models: DiffSoc-S on GPU takes $3{,}624$~ms per evaluation at batch size~1 versus $1{,}705$~ms on CPU.
Each of the ${\sim}20{,}600$ effective tensor operations (206 nodes $\times$ 100 timesteps) launches a separate CUDA kernel, and kernel launch overhead (${\sim}5$--$10\,\mu$s each) dominates the negligible scalar arithmetic.
As batch size grows, the per-kernel arithmetic increases while launch overhead stays constant.
At batch size~1024, GPU forward-plus-backward time per evaluation ($9.26$~ms) is $1.8\times$ slower than CPU ($5.11$~ms) for DiffSoc-S, indicating that the crossover has not yet occurred for this model's operation granularity.
However, the GPU throughput \emph{speedup} relative to its own sequential baseline reaches $995\times$ at batch size~1024 (vs.\ $851\times$ for CPU), and the gap narrows further for backward passes at large batch sizes ($1{,}256\times$ GPU vs.\ $880\times$ CPU for 100-step fwd+bwd).
This reflects the GPU's massive arithmetic parallelism becoming relevant once per-kernel payloads are large enough: at batch size~1024, each kernel processes $1{,}024$ elements, which is sufficient to partially saturate the A100's compute units.
For models with larger per-node operations (matrix multiplications, convolutions, or longer vectors), the GPU crossover would occur at smaller batch sizes.
The practical guidance is clear: for scientific models composed of many small coupled scalar operations, CPU batching on a single core is both simpler and faster; GPU becomes advantageous when either the batch size or the per-operation tensor size is large enough to amortize kernel launch overhead.

\subsection{Why Gradient Descent Works for Nonlinear Programs}

The experimental results above demonstrate \emph{that} gradient-based optimization works for continuous parameters in compiled programs, but it is worth stating explicitly \emph{why} this is the natural approach, and where it breaks down.

\paragraph{Dimensionality.}
The fundamental challenge of parameter fitting in coupled scientific models is dimensionality.
DiffESM-S has 70 learnable parameters; DiffSoc-S has 87; even the small Experiment~A programs have up to 4 interacting constants.
Black-box methods (grid search, Bayesian optimization, evolutionary strategies) scale poorly with parameter count because they must explore the space without directional information.
Experiment~A quantifies this directly: the evolution strategy converges in only 45\% of runs (27/60) and fails catastrophically on recursive and composed programs (P3 loss~$>10^6$), while gradient descent converges in 100\% of runs across all program types.
Finite differences achieve 83\% convergence but require $12.5$--$534\times$ more wall time (mean ${\sim}195\times$), and their cost scales linearly with the number of parameters (one additional forward pass per parameter per gradient step).
Automatic differentiation computes the exact gradient of all parameters in a single backward pass regardless of parameter count, making it the only approach that scales to the 70--87 parameter models in Experiments~F and~G.

\paragraph{Compositionality of the chain rule.}
The programs compiled by DMCI are deeply nonlinear: DiffESM-S chains 9{,}700 tensor operations through recursive time integration with logarithmic, exponential, and sigmoidal nonlinearities.
The chain rule decomposes this end-to-end derivative into a product of local Jacobians, each of which is trivial (the derivative of addition, multiplication, $\exp$, $\log$, etc.).
PyTorch's autograd assembles these local derivatives automatically during the backward pass, regardless of how complex the overall composition is.
This is the same mechanism that enables training of deep neural networks with millions of parameters; DMCI simply applies it to compiled programs rather than learned architectures.
The gradient path analysis (Section~\ref{sec:exp_a}) confirms that the compiled interpreter adds 5--18 extra nodes to the autograd graph relative to direct compilation (roughly doubling the count for the simplest programs), yet gradients still match direct compilation to numerical precision, preserving gradient quality through arbitrarily deep compositions.

\paragraph{Landscape geometry.}
Gradient descent finds a local minimum, not necessarily a global one.
Whether this matters depends on the geometry of the loss landscape, which in turn depends on the nature of the search space.
TerpreT~\citep{gaunt2017terpret} demonstrated that gradient descent over \emph{discrete} program structure (choosing instructions, operands, or control flow) converges in only ${\sim}2\%$ of random restarts because the relaxed landscape is riddled with spurious local minima.
DMCI avoids this failure mode by fixing the program structure and optimizing only continuous parameters within it.
The program provides strong inductive bias: the functional form constrains the space of possible input--output mappings, and gradient descent need only find the constants that make the mapping fit the data.
The population batching experiment (Part~G) provides direct evidence that the resulting landscape is benign: 200 random restarts from $15\%$-perturbed initial conditions \emph{all} converge to loss~$<10^{-3}$ by epoch~75.
If the landscape contained significant local minima, a substantial fraction of restarts would become trapped.
The 100\% convergence rate across 200 restarts, combined with 100\% convergence across all 171 autograd program-seed pairs in Experiments~A--C, indicates that continuous parameter optimization in the tested fixed program structures produces well-behaved loss landscapes. Whether this holds for substantially larger or deeper programs remains an open question.

\paragraph{Where gradients fail.}
Two cases defeat gradient-based optimization in DMCI.
First, \emph{discrete parameters}: when the quantity to be learned is inherently discrete (which operator to apply, which branch to take), the gradient is either zero or undefined.
Experiment~E addresses this via Gumbel-Softmax relaxation, achieving $10.8\%$ recovery vs.\ $2.1\%$ for random search, better than chance but far from reliable, confirming TerpreT's finding that discrete program structure is fundamentally harder than continuous parameter fitting.
Second, \emph{branch-condition parameters}: when a learnable constant appears in a conditional test (e.g., \texttt{(< x $\alpha$)}), the comparison returns a discrete 0 or~1 with zero gradient in $\alpha$.
The S3 experiment confirms this produces a piecewise-constant loss landscape that gradient descent cannot navigate.
These limitations are inherent to the semantics of the source program, not artifacts of DMCI's compilation strategy.

% ============================================================================
\section{Experiment I: Differentiable Calibration of a Composite Ecosystem Model}
\label{app:exp_i}
% ============================================================================

\paragraph{Motivation.} Experiment~I stress-tests DMCI as a calibration engine on a realistic, multi-parameter \emph{composite} scientific model (a gross-primary-productivity / ecosystem model whose free-parameter count grows with the number of plant functional types), and uses it to ask how exact-gradient calibration through the compiled model \emph{scales} against gradient-free search as the parameter dimension grows. It probes the boundary of the paper's central claim (continuous parameter optimization within a fixed program structure) at parameter counts more than an order of magnitude beyond Experiments~A--C. An initial 12-parameter pilot was inconclusive (a short, non-batched single-start Adam run stalled while a gradient-free baseline reached a better optimum); here, with batched evaluation and an equal wall-clock budget, we sweep the dimension by more than an order of magnitude to characterize the scaling honestly.

\paragraph{Design.} The composite model is instantiated at $d\in\{6,12,18,24,36,48,66,96,126\}$ learnable parameters (1--21 plant functional types, six parameters each); we recover known parameters from generated data, scoring on a \emph{held-out} driver split to measure generalization rather than training fit, with three seeds per size. Three optimizers receive an \emph{equal} wall-clock budget ($300$\,s each per (size, seed)): \texttt{dmci} (single-start Adam through the batched compiled model), \texttt{dmci\_lbfgs\_ms} (the same exact gradients under a log-reparameterization with multi-start L-BFGS), and \texttt{diffevo} (gradient-free differential evolution). Batched evaluation (one interpreter walk over all data points per step) makes the high-dimensional sweep tractable, since per-fit cost is otherwise dominated by interpreter traversal rather than arithmetic.

\paragraph{Results.} Table~\ref{tab:exp_i} and Figure~\ref{fig:exp_i_heldout} report mean held-out MSE, and three patterns emerge. (i)~\emph{At small dimension} ($d\le12$) the problem is easy for black-box search: differential evolution and L-BFGS both reach near machine-zero at $d{=}6$, and differential evolution is best at $d{=}12$. (ii)~\emph{Exact-gradient calibration degrades the most gracefully with dimension}: single-start Adam through DMCI stays within a narrow $1.7\times10^{-2}$--$1.5\times10^{-1}$ held-out band across the \emph{entire} range and is the best method at every $d\ge18$, whereas differential evolution erodes and becomes erratic as $d$ grows (up to $4.9\times10^{-1}$ at $d{=}36$) and loses to Adam for all $d\ge36$. (iii)~the curvature-aware multi-start L-BFGS fitter is \emph{not} robust at scale: at a fixed $300$\,s budget it completes too few restarts and diverges in the log-reparameterized coordinates, with held-out error blowing up to $2.8$ ($d{=}48$) and $2.0$ ($d{=}96$), worse than every other method. We report this as a finding rather than tuning it away: the simplest exact-gradient optimizer is the most reliable here, while the more aggressive curvature-aware multistart does not scale at a fixed budget. The comparison is \emph{relative}: no method reaches the $10^{-3}$ convergence threshold beyond $d{=}6$, and with three seeds the curves are noisy and non-monotone.

\begin{table}[ht]
\centering
\caption{Experiment~I: mean held-out MSE by parameter count $d$ (3 seeds per cell), under an equal $300$\,s wall-clock budget per method. Single-start Adam through DMCI is the most robust to dimension (best at every $d\ge18$); differential evolution wins only at $d\le12$ and erodes thereafter; the multi-start L-BFGS fitter is erratic at high $d$ (it diverges at a fixed budget). Bold marks the best held-out error per row. No method reaches the $10^{-3}$ convergence threshold beyond $d{=}6$, so these are \emph{relative} robustness comparisons.}
\label{tab:exp_i}
\begin{tabular}{rccc}
\toprule
$d$ (\# params) & \texttt{dmci} (Adam) & \texttt{dmci\_lbfgs\_ms} & \texttt{diffevo} \\
\midrule
6   & $2.5\times10^{-2}$ & $1.8\times10^{-12}$ & $\mathbf{4.5\times10^{-14}}$ \\
12  & $4.6\times10^{-2}$ & $5.5\times10^{-2}$ & $\mathbf{4.6\times10^{-3}}$ \\
18  & $\mathbf{1.7\times10^{-2}}$ & $4.2\times10^{-2}$ & $4.0\times10^{-2}$ \\
24  & $\mathbf{4.8\times10^{-2}}$ & $2.7\times10^{-1}$ & $5.7\times10^{-2}$ \\
36  & $\mathbf{1.5\times10^{-1}}$ & $2.7\times10^{-1}$ & $4.9\times10^{-1}$ \\
48  & $\mathbf{1.5\times10^{-1}}$ & $2.8\times10^{0}$ & $2.9\times10^{-1}$ \\
66  & $\mathbf{8.4\times10^{-2}}$ & $5.2\times10^{-1}$ & $3.5\times10^{-1}$ \\
96  & $\mathbf{7.4\times10^{-2}}$ & $2.0\times10^{0}$ & $1.2\times10^{-1}$ \\
126 & $1.4\times10^{-1}$ & $\mathbf{1.4\times10^{-1}}$ & $2.0\times10^{-1}$ \\
\bottomrule
\end{tabular}
\end{table}

\begin{figure}[ht]
\centering
\begin{tikzpicture}
\begin{semilogyaxis}[
    width=0.92\columnwidth, height=0.55\columnwidth,
    xlabel={Number of parameters $d$}, ylabel={Held-out MSE},
    xtick={6,24,48,66,96,126}, xmin=2, xmax=132, ymin=1e-13, ymax=1e1,
    legend pos=south east,
    legend style={font=\small, draw=none, fill=white, fill opacity=0.8, text opacity=1},
    grid=major, grid style={gray!30},
    tick label style={font=\small}, label style={font=\small},
]
\addplot[red, thick, mark=*] table[x=nparams, y=dmci] {figures/exp_i_heldout.dat};
\addlegendentry{DMCI (Adam)}
\addplot[blue, thick, mark=square*] table[x=nparams, y=lbfgs] {figures/exp_i_heldout.dat};
\addlegendentry{DMCI (multi-start L-BFGS)}
\addplot[black!60!green, thick, mark=triangle*] table[x=nparams, y=diffevo] {figures/exp_i_heldout.dat};
\addlegendentry{Differential evolution}
\end{semilogyaxis}
\end{tikzpicture}
\caption{Experiment~I: held-out MSE versus parameter count $d$ (equal $300$\,s budget, 3 seeds, log scale). Single-start Adam through DMCI (red) is the most robust to dimension and best for all $d\ge18$; differential evolution (green) wins only at small $d$ and erodes as dimension grows; multi-start L-BFGS (blue) is erratic at high $d$, diverging at a fixed budget. No method reaches $10^{-3}$ beyond $d{=}6$, so the comparison is relative robustness, not absolute recovery.}
\label{fig:exp_i_heldout}
\end{figure}

\paragraph{Conclusion.} Across more than an order of magnitude in parameter dimension, exact-gradient calibration through the compiled model degrades the most gracefully: even single-start Adam holds a narrow held-out band and is the most reliable method at every $d\ge18$, while gradient-free differential evolution (competitive only at the smallest sizes) erodes as dimension grows. This is a qualified positive for DMCI as a calibration engine, with three honest caveats: (i)~black-box search still wins at low dimension ($d\le12$); (ii)~the curvature-aware multi-start L-BFGS fitter is \emph{not} robust at a fixed budget (it diverges at high $d$), so the simplest optimizer is preferable here; and (iii)~no method achieves true convergence beyond $d{=}6$, so this is a \emph{relative} robustness result, not a demonstration that high-dimensional composite calibration is solved. Practically: select on held-out (not training) error, and prefer a multi-optimizer portfolio (Experiment~F), since no single optimizer dominates across dimensionalities. This experiment is the quantitative core of the correctness-versus-optimization-success limitation (Section~\ref{sec:tradeoffs}).

% ============================================================================
\section{Experiment J: Program-Space Calibration, DMCI versus Compile-Each-Program}
\label{app:exp_j}
% ============================================================================

\paragraph{Motivation.} The preceding experiments optimize one program at a time. Experiment~J asks what happens when the object of optimization is a \emph{growing space of distinct, runtime-generated programs}, the regime LLM-driven discovery actually produces. It compares one compiled interpreter (DMCI) against the two compile-each-program workflows a practitioner would otherwise use: \textbf{B1}, automatic translation of each program to JAX via \texttt{sympy.lambdify} followed by \texttt{jax.jit(jax.grad)}; and \textbf{B2}, a hand-port of each program to a jitted JAX function. The experiment quantifies the \emph{workflow economics} behind the ``programs as data'' claim along four axes (amortized compile cost, coverage, per-program implementation burden, and matched recovery accuracy) and tests external validity on genuine LLM output.

\paragraph{Design.} A reproducible corpus of $N$ \emph{structurally distinct} programs (random closed-form expression trees, and damped-Euler recursive relaxations) is generated deterministically, with the recursive fraction set to $0\%$ or $100\%$ and $N\in\{1,100,10^4\}$. All three arms run on the identical corpus. We measure: cumulative compile/trace time; coverage (the fraction each arm can evaluate with no human intervention); per-structure engineering (a lines-of-code proxy); and \emph{matched} parameter recovery; every arm uses the same scipy L-BFGS-B multi-start driver under an identical budget, so the comparison is about cost and coverage, \emph{not} accuracy. A separate LLM-validation subset runs the same arms on 260 programs authored by MindRouter (Qwen3.6-27B with reasoning) and cached for reproducibility.

\paragraph{Results (synthetic).} Table~\ref{tab:exp_j} and Figure~\ref{fig:exp_j_compile} summarize the four axes. DMCI compiles the interpreter \emph{once} ($0.02$\,s, a $283$\,KB portable artifact) and then runs every program; the compile-each-program arms pay per program, reaching $592$\,s (B1) and $618$\,s (B2) at $N=10^4$ closed-form programs, a flat-versus-linear crossover that is the core result. Coverage is uniform for DMCI ($100\%$ on both families) and the hand-port ($100\%$), but \texttt{lambdify} \emph{cannot ingest recursion}: B1 covers $0\%$ of the recursive family. Per-structure engineering is zero for DMCI versus $56{,}361$ (closed-form) and $78{,}005$ (recursive) lines for B2 at $N=10^4$. Under the matched optimizer, recovery accuracy is at \emph{parity} across arms (closed-form mean relative parameter error $\approx0.44$--$0.47$; recursive $\approx0.36$--$0.38$), confirming the comparison isolates cost and coverage, not accuracy.

\begin{table}[ht]
\centering
\caption{Experiment~J (synthetic): workflow economics at $N=10^4$ distinct programs. DMCI compiles the interpreter once and then evaluates each program as data, so cumulative compile cost remains constant. The compile-each-program baselines pay per program: B1 (lambdify-to-JAX) covers closed-form expressions but cannot ingest recursive programs, while B2 (hand-port-to-JAX) covers both families only with substantial per-program implementation effort. Because matched-recovery results are held at parity in the accompanying experiment, this table isolates workflow cost and coverage rather than optimization accuracy. Dashes indicate not applicable because the method has zero coverage for that program family.}
\label{tab:exp_j}
\setlength{\tabcolsep}{4pt}
\begin{tabular}{llccc}
\toprule
Family & Metric ($N{=}10^4$) & DMCI & B1 (lambdify-to-JAX) & B2 (hand-port-to-JAX) \\
\midrule
\multirow{3}{*}{closed-form}
 & cumulative compile (s) & \textbf{0.02} & 591.6 & 617.7 \\
 & coverage               & 100\% & 100\% & 100\% \\
 & per-program implementation burden (LOC)      & \textbf{0} & 0 & 56{,}361 \\
\midrule
\multirow{3}{*}{recursive}
 & cumulative compile (s) & \textbf{0.02} & n/a & 1287.1 \\
 & coverage               & \textbf{100\%} & \textbf{0\%} & 100\% \\
 & per-program implementation burden (LOC)      & \textbf{0} & n/a & 78{,}005 \\
\bottomrule
\end{tabular}
\end{table}

\begin{figure}[ht]
\centering
\begin{tikzpicture}
\begin{loglogaxis}[
    width=0.92\columnwidth, height=0.55\columnwidth,
    xlabel={Number of distinct programs $N$}, ylabel={Cumulative compile time (s)},
    xmin=0.8, xmax=2e4, ymin=1e-2, ymax=1e3,
    legend pos=north west,
    legend style={font=\small, draw=none, fill=white, fill opacity=0.8, text opacity=1},
    grid=major, grid style={gray!30},
    tick label style={font=\small}, label style={font=\small},
]
\addplot[blue, thick, mark=*] table[x=N, y=dmci] {figures/exp_j_compile.dat};
\addlegendentry{DMCI (one interpreter compile)}
\addplot[red, thick, mark=square*] table[x=N, y=b1] {figures/exp_j_compile.dat};
\addlegendentry{B1: lambdify-to-JAX}
\addplot[black!60!green, thick, mark=triangle*] table[x=N, y=b2] {figures/exp_j_compile.dat};
\addlegendentry{B2: hand-port to JAX}
\end{loglogaxis}
\end{tikzpicture}
\caption{Experiment~J (synthetic, closed-form family): cumulative compile time versus the number of distinct programs. DMCI's single interpreter compile is flat ($0.02$\,s) while the compile-each-program arms grow linearly, reaching ${\sim}600$\,s at $N=10^4$. The crossover is immediate and the gap widens without bound.}
\label{fig:exp_j_compile}
\end{figure}

\paragraph{Results (LLM-validation).} On 260 genuine LLM-authored programs the synthetic story reproduces. Coverage: DMCI $100\%$ on both families; B1 $99\%$ on closed-form (one real program \texttt{lambdify} could not ingest) and $0\%$ on recursive; B2 $100\%$. Cumulative compile over all 260: DMCI $0.02$\,s (flat) versus B1 $23.2$\,s and B2 $23.9$\,s, with B2 engineering $5{,}034$ lines. Matched recovery on the closed-form family is again at parity (mean relative error $\approx0.088$ for all three arms). The honest negative is recursive recovery: within the matched budget, DMCI's optimizer fails to recover the constants of the LLM-authored recursive programs (effectively infinite error); these programs present the same hard, often ill-identified landscapes seen in Experiments~F (F3) and~I, so recursive \emph{recovery} remains an open optimization problem even though recursive \emph{coverage} is solved.

\paragraph{Conclusion.} When the object of optimization is a space of distinct runtime-generated programs, one compiled interpreter wins on the axes that scale: amortized compile (one compile versus $N$), uniform $100\%$ coverage including recursion that automatic translation cannot reach, and zero per-structure engineering; and these advantages reproduce on real LLM output. Two honest boundaries frame the result: per-evaluation latency favors compiled JAX (Section~\ref{sec:exp_h}, and the \texttt{jax.vmap} comparison therein), so DMCI's advantage is amortization, coverage, and zero engineering rather than raw speed; and parameter \emph{recovery} on hard recursive programs remains limited by optimization, not by the engine.
% ============================================================================
\section{Experiment L (Battery): Co-Search, Synthetic Recovery, and Real-Data Forecast}
\label{app:exp_battery_ext}
% ============================================================================
% All numbers FINAL: synthetic battery_rescore.json; real battery_rescore_real_ks*.json + knee + ablate_inner.
% results/battery_rescore_real.json after the three real islands + iters=300 re-score.

This appendix details the battery capacity-fade co-search of \S\ref{sec:exp_battery}: the differentiable interpreter and held-out scoring, the OpenEvolve outer loop, the synthetic ground-truth recovery, the real Severson data pipeline and the grid choice it forces, and the rigorous re-scoring protocol that anchors both the battery claim and the FluZoo stress test.

\subsection{Differentiable interpreter and held-out scoring}
Each candidate is a two-form Scheme artifact: a \texttt{(params ...)} schema and a per-cycle \texttt{(loop ... (recur ...))} rollout that carries the predicted capacity in a \texttt{yhat} loop variable and accumulates a floored-variance Gaussian negative log-likelihood of the observed capacity into \texttt{L}. The \emph{same} frozen interpreter folds the NLL for fitting and, via a single base-case token swap (\texttt{L}$\to$\texttt{yhat}), rolls the fitted autonomous model forward to read held-out capacity, the identical FIT$\to$PREDICT mechanism used in the LIM-ENSO and FluZoo studies. All cells are stacked into one batched interpreter walk with per-cell parameter vectors, so wall-clock is ${\sim}$constant in the cell count; this batching is mandatory, as a single forward+backward over a $60$-cycle rollout costs ${\sim}1.7$\,s on one CPU core. The held-out score is the mean RMSE between predicted and observed state-of-health over a strided set of held-out cycles: structures are fit on cycles $[0,k_{\text{split}})$ and scored on $[k_{\text{split}},T)$ with $T{=}100$.

\subsection{Outer loop (OpenEvolve)}
The structure search runs through OpenEvolve~\citep{openevolve}, an open implementation of AlphaEvolve~\citep{alphaevolve}: an LLM ensemble (Qwen3.6-35B/27B) proposes diff edits to the Scheme rollout, a MAP-Elites archive maintains quality-diversity over (structural complexity $\times$ knee-capability), and three island populations evolve in parallel ($200$ iterations each, periodic migration). The evaluator gates each candidate through a validity funnel before scoring: parse, whitelisted-operator prescan (against the silent-zero footgun where an unsupported head evaluates to $0$ and corrupts the model), compile, finite and nonzero gradient, stable rollout, and forecastable; an invalid program returns a targeted error that the next prompt repairs. Search-time fitness uses a cheap $60$-step Adam calibration; island winners are re-scored at $300$ steps.

\subsection{Synthetic recovery (ground truth)}
The synthetic target is $12$ mechanism-labeled cells generated from the two-reservoir bottleneck $Q=q_0\min(1-a\sqrt{k},\,1-ck)$ (knee near cycle $55$, coulometry-grade noise), and the search is seeded with the smooth $Q=q_0-B\sqrt{k}$. Two of the three islands converge to $\min(q_0-B_1\sqrt{k},\,q_0-B_2 k)$, algebraically the generating family and structurally knee-capable, from a seed that is not; the third finds a more elaborate power-law two-reservoir. A canonical-AST funnel over the retained archive (positional symbol renaming, numeric-literal bucketing on the loop body; \texttt{openevolve/funnel.py}) quantifies the discrete search: the three islands evaluated $600$ programs spanning $119$ structurally distinct rollouts ($48/42/65$ per island), of which exactly one (the smooth $\sqrt{t}$ seed) coincides with a reference family while all three converged winners are structurally \emph{novel} ASTs. The recovery claim is therefore anchored to \emph{algebraic identity}, not an AST hash or a token flag: islands~0 and~2 converged to $\min(q_0-B_1\sqrt{k},\,q_0-B_2 k)$, which is the generating $q_0\min(1-a\sqrt{k},\,1-ck)$ family up to algebra and parameter renaming yet a different syntax tree: the search wrote its own encoding of the mechanism rather than copying the reference. (The coarse MAP-Elites \emph{knee-capable} descriptor, a token-level \texttt{min}/\texttt{max}/\texttt{exp} test, only drives archive diversity and is not the recovery evidence.) Anchoring to algebraic identity is why the recovery is robust to the fitness-fidelity effect below. At the rigorous $300$-step budget the recovered structure forecasts the held-out tail at RMSE $0.0111$ versus $0.0301$ for the smooth seed, $0.0143$/$0.0149$ for the hand two-reservoir/sigmoidal references (Table~\ref{tab:battery}, top); the same structure scores $0.0067$ at the cheap $60$-step budget, an optimistic figure whose interpretation is the subject of the protocol below.

\subsection{Real Severson pipeline and the grid choice}
The real target is the \citet{severson2019} commercial A123 LFP/graphite fast-charging dataset ($124$-cell modeling subset, DOI \texttt{10.1038/s41560-019-0356-8}); we take per-cycle discharge capacity and normalize each cell to state of health $Q/Q_0$ ($Q_0$ the cycle-$2$ capacity), retaining $117$ cells after dropping channels with non-finite or anomalous capacity (the dataset's known noisy cells). Severson cells fade negligibly over their first ${\sim}100$ cycles and accelerate only near end of life, so the grid choice is critical: an absolute first-$T$-cycle window is essentially flat (mean held-out fade ${<}0.002$ SOH, no family separable), whereas resampling each cell's trajectory onto a common $T{=}100$-point grid spanning its full observed cycle life exposes the flat-then-soft-rollover shape, with the held-out tail fading $3.7\times$ (late split, $k_{\text{split}}{=}70$) to $11.8\times$ (early split, $k_{\text{split}}{=}45$) more than the fit window. We report both splits; the early split is the honest extrapolation case (the fit window ends before the rollover). Baselines are the smooth structure families (square-root SEI and power-law) and the knee-capable hand models (a stretched-exponential soft knee and a sigmoidal knee; \texttt{experiments/exp\_battery/structures.py} places stretched-exponential on the knee side of its smooth-to-knee axis), all calibrated through the \emph{same} interpreter, plus naive persistence and linear-extrapolation floors. Real-data held-out RMSE (late/early): persistence $0.084$/$0.089$, linear extrapolation $0.080$/$0.078$, best smooth family $0.083$/$0.086$, stretched-exponential soft knee $0.069$/$0.065$, best hand knee (sigmoidal) $0.051$/$0.057$, and the co-search-selected program (island2, a discovered three-reservoir bottleneck) $0.051$/$0.034$ (Table~\ref{tab:battery}, bottom); the selected program beats the best hand model by $1.7\times$ on the early-extrapolation split and ties on the late split, and the search explored $160$ structurally distinct rollouts (canonical-AST count over $601$ programs across three islands). \emph{Inner-fitter ablation} (\texttt{openevolve/ablate\_inner.py}): holding each discovered structure, the batched forward, the per-cell parameterization, and a matched wall-clock budget fixed and replacing exact-gradient Adam with gradient-free differential evolution gives held-out RMSE (DE vs.\ DMCI, late/early) island0 $1.98$/$1.76$ vs.\ $0.072$/$0.065$, island1 $0.25$/$0.21$ vs.\ $0.027$/$0.167$, and island2 $0.90$/$0.92$ vs.\ $0.051$/$0.034$ over $468$--$702$ free parameters; the gradient-free fit is $9$--$27\times$ worse on the identical structures, so DMCI's exact-gradient calibration, not the structure search alone, realizes the held-out skill. Following the original study, ``knee'' is downstream terminology (Severson reports a ``nonlinear degradation process'' that ``accelerates near the end of life''~\citep{severson2019}), and the soft rollover's physical driver (lithium plating after negative-electrode active-material loss~\citep{attia2022knees}) is latent in a capacity-only curve, so on real cells we claim forecast skill, not mechanism recovery.

\subsection{The fitness-fidelity protocol and the FluZoo stress test}
The cheap search-time fitness ($60$ Adam steps) is an under-converged proxy that acts as an implicit regularizer, flattering more flexible structures; the outer search can therefore optimize the proxy rather than true held-out skill. We guard against this by re-scoring all island winners and baselines at a converged $300$-step budget. On battery the structural conclusion is unchanged (the synthetic claim rests on the knee-capable label against ground truth, and the recovered structure still leads at $300$ steps). On influenza (FluZoo, Appendix~\ref{app:exp_fluzoo_ext}) the same re-scoring \emph{erases} the evolved program's apparent advantage, leaving it tied with a hand-written regional SEIR on validation ($0.0148$ vs.\ $0.0149$) and test ($0.0181$ vs.\ $0.0181$). This is the cautionary half of the co-search story: program-and-parameter co-search recovers correct structure when the data discriminate it and degrades to the baseline when they do not, and only a converged inner fitness reveals which case obtains.

\subsection{Reproducibility}
The harness is in \texttt{experiments/exp\_battery/} (synthetic generation \texttt{synth}/\texttt{gen\_target}, reference structures, batched DMCI scorer \texttt{oe\_score}, the OpenEvolve driver \texttt{openevolve/run\_battery}, and the rigorous re-scores \texttt{rescore}/\texttt{rescore\_real}); the real Severson loader (\texttt{openevolve/gen\_target\_real}) ingests per-cell capacity-vs-cycle and emits the $[N,T,1]$ state-of-health target. The dataset is publicly available from the source of~\citep{severson2019}.

% ============================================================================
\section{Design-Choice Ablations and Failure-Mode Summary}
\label{app:ablation}
% ============================================================================

For quick reference, this appendix consolidates why each significant design choice is necessary
(Table~\ref{tab:ablation}) and where the method breaks (Table~\ref{tab:failure_modes}). Both summarize
material argued in the main text and the per-experiment appendices; the final column of each gives the
supporting evidence.

\begin{table}[h]
\centering
\caption{Design-choice ablations. Removing or replacing any one choice forfeits a specific
capability; each row names the choice, the consequence of dropping it, and where the supporting evidence
appears.}
\label{tab:ablation}
\small
\begin{tabular}{@{}p{0.29\linewidth}p{0.47\linewidth}l@{}}
\toprule
Design choice removed & Consequence & Evidence \\
\midrule
Compile the interpreter (vs.\ direct compilation only) & No reuse across runtime-supplied programs: every new program must be recompiled or re-implemented & \S\ref{sec:exp_a}, \S\ref{sec:discussion}, App.~\ref{app:exp_j} \\
\addlinespace
Compile the evaluator from source (vs.\ a hand-coded PyTorch interpreter) & Loses the self-hosting / inherited-semantics claim; each language feature must be re-implemented and re-verified by hand & App.~\ref{app:handcoded} \\
\addlinespace
Batched evaluation & Sequential interpreter overhead (${\sim}14\times$ per evaluation) is not amortized, and the throughput gains vanish & \S\ref{sec:exp_h}, App.~\ref{app:exp_h_ext} \\
\addlinespace
Real-data case study & Validation would be synthetic-only, recovering self-generated targets & \S\ref{sec:exp_c}, App.~\ref{app:exp_lim_enso_ext} \\
\addlinespace
Trace-constant scoping of the theory & Gradient correctness would be overclaimed across branch boundaries, where the source program is itself non-differentiable & \S\ref{sec:theory}, App.~\ref{app:proofs} \\
\bottomrule
\end{tabular}
\end{table}

\begin{table}[h]
\centering
\caption{Failure-mode summary: where DMCI does not work, the consequence, and where it is documented.
These bound the central claim of continuous-parameter optimization within runtime-supplied program
structures.}
\label{tab:failure_modes}
\small
\begin{tabular}{@{}p{0.33\linewidth}p{0.43\linewidth}l@{}}
\toprule
Failure mode & Consequence & Where \\
\midrule
Parameter appears only in a branch condition & Zero gradient almost everywhere; inherits the source program's non-differentiability (not a compiler artifact) & \S\ref{sec:theory}, App.~\ref{app:exp_a_ext} (S3) \\
\addlinespace
Discrete structure search & Low success ($10.8\%$ vs.\ $2\%$ random) via Gumbel-Softmax; exploratory only & App.~\ref{app:exp_e} \\
\addlinespace
Single-evaluation latency & ${\sim}14\times$ slower than direct compilation; worse for one-at-a-time use & \S\ref{sec:exp_h}, App.~\ref{app:exp_h_ext} \\
\addlinespace
Interpreter on the JAX backend & Not yet supported: needs define-by-run autograd, with a fixed-length \texttt{lax.scan} rewrite as future work & \S\ref{sec:method}, App.~\ref{app:backends} \\
\addlinespace
\texttt{call/cc}, \texttt{set!}, exact integers, mutation & Unsupported by design (autograd purity, fixed compiled subset) & \S\ref{sec:discussion} \\
\addlinespace
Misspecification / hard recursive recovery & Recovery limited by optimization, not the engine; not fully characterized & \S\ref{sec:discussion}, App.~\ref{app:exp_j} \\
\bottomrule
\end{tabular}
\end{table}

\end{document}